\documentclass[12pt]{article}
\usepackage[utf8]{inputenc}
\usepackage[T1]{fontenc} 
\usepackage[english]{babel} 
\usepackage{csquotes}
\usepackage{url}
\usepackage{geometry}
\usepackage{mathtools}
\usepackage{amsfonts}
\usepackage{amssymb}
\usepackage{amsthm}
\usepackage{mathrsfs}
\usepackage{esint}

\usepackage{xcolor}
\definecolor{myurlcolor}{rgb}{0,0,0.4}
\definecolor{mycitecolor}{rgb}{0,0.5,0}
\definecolor{myrefcolor}{rgb}{0.5,0,0}
\usepackage{graphicx}
\usepackage{tikz}
\usepackage{tikz-cd}

\usepackage{etaremune}

\usepackage[normalem]{ulem}

\usepackage{braket}

\usepackage{comment}

\usepackage{makeidx}

\usepackage{enumitem} 

\usepackage{caption}

\usepackage{lmodern}

\usepackage{microtype}

\usepackage{hyperref}
\hypersetup{colorlinks,linkcolor=myrefcolor,citecolor=mycitecolor,urlcolor=myurlcolor}
\usepackage[capitalize]{cleveref}
\usepackage[
backend=biber,
style=numeric-comp,
sorting=ynt, %year name title
sortcites=true, %sort inline citations according to sorting= irrespectively of how are written
giveninits=true, %abbreviated initial name
url=false,
backref=true,
maxbibnames=99 %no et al.
]{biblatex}
\DefineBibliographyStrings{english}{
  backrefpage = {$\downarrow$\,p.},
  backrefpages = {$\downarrow$\,pp.}
}
\AtEveryBibitem{\clearfield{note}}
\AtEveryBibitem{\clearfield{isbn}}
\AtEveryBibitem{\clearfield{issn}}
\AtEveryBibitem{\clearfield{abstract}}
\AtEveryBibitem{\clearfield{keywords}}
\AtEveryBibitem{%
	\clearfield{urlyear}%
	\clearfield{urlmonth}%
	\clearfield{urlday}%
	\clearfield{urldate}%
}

\newtheorem{remark}{Remark}

\newtheorem{theorem}{Theorem}
\newtheorem{corollary}{Corollary}
\newtheorem{proposition}{Proposition}
\newtheorem{definition}{Definition}
\newtheorem{example}{Example}
\newtheorem{lemma}{Lemma}

\newtheorem*{proof*}{Proof}

\title{Metric tensors and two-forms in information geometry from the GNS construction}

\author{M. Castrill\'on L\'opez$^{3,4}$\href{https://orcid.org/0000-0001-7673-9870}{\includegraphics[scale=0.7]{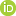}}, F. M. Ciaglia$^{1,5}$ \href{https://orcid.org/0000-0002-8987-1181}{\includegraphics[scale=0.7]{ORCID.png}}, \\ L. Gonz\'alez-Bravo$^{3,6}$ \href{https://orcid.org/0000-0002-4382-7978}{\includegraphics[scale=0.5]{ORCID.png}}, 
A. Ibort$^{1,2,7}$ \href{https://orcid.org/0000-0002-0580-5858}{\includegraphics[scale=0.7]{ORCID.png}}, }

\begin{document}

\maketitle 

\noindent
{\footnotesize $^{1}$  Universidad Carlos III de Madrid, ROR: \href{https://ror.org/03ths8210}{03ths8210}, Departamento de Matem\'aticas, Avenida de la Universidad 30 (edificio Sabatini), 28911 Legan\'es (Madrid), Espa\~na.}  \\
{{\footnotesize $^{2}$ Instituto de Ciencias Matem\'{a}ticas ICMAT (CSIC-UAM-UC3M-UCM), ROR: \href{https://ror.org/05e9bn444}{05e9bn444}, Campus de Cantoblanco UAM, Calle Nicol\'as Cabrera 13-15, 28049 Madrid, Espa\~na.} \\}
{\footnotesize $^{3}$ Universidad Complutense de Madrid, ROR: \href{https://ror.org/02p0gd045}{02p0gd045}. Departamento de \'Algebra, Geometr\'{\i}a y Topolog\'{\i}a, Facultad de Ciencias Matem\'aticas, Pl. de las Ciencias 3, Moncloa-Aravaca, 28040 Madrid, Espa\~na.} 

\bigskip
\noindent
{\scriptsize 
$^{4}$\texttt{mcastri[at]mat.ucm.es}
$^{5}$\texttt{fciaglia[at]math.uc3m.es}   $^{6}$\texttt{lauraego[at]ucm.es} $^{7}$\texttt{albertoi[at]math.uc3m.es}}

\begin{abstract}
We develop a GNS-based construction of geometric tensors on smooth parametric statistical models over $C^*$-algebras. 
Since the state space of a $C^*$-algebra is generally not a smooth manifold, the construction does not rely on pulling back tensors from an ambient state manifold. 
Instead, the GNS Hilbert spaces and their duals are organized into non-locally-trivial Hilbert fibrations over the state space. 
For models satisfying a compatibility condition expressing derivatives of expectation values as continuous functionals on the realified GNS fibers, each tangent vector admits a canonical dual GNS representative. 
Pulling back the dual GNS Hermitian product along the corresponding canonical lift produces a Hermitian tensor $K$ on the complexified tangent bundle of the model, whose real and imaginary parts define, under suitable regularity assumptions,
a smooth weak Riemannian metric tensor $G$ and a smooth two-form $\Omega$.
In finite-dimensional parameter manifolds the metric is, of course, strong.

The construction recovers the Fisher--Rao metric in the commutative dominated case, the Fubini--Study geometry for pure states up to the normalization and sign convention imposed by the dual GNS pairing, and the SLD metric for faithful quantum states. 
In finite-dimensional faithful models, the two-form $\Omega$ is proportional, up to convention, to the expected commutator of the SLD representatives, equivalently to the mean Uhlmann curvature. 
We show through faithful qubits and displaced thermal states that $\Omega$ need not be closed. 
For bundle-regular models, the associated fiberwise symplectic form on the real dual GNS bundle admits connection-dependent closed extensions to the total space, while closedness of $\Omega$ on the parameter manifold is controlled by the covariant exterior derivative of the canonical real dual GNS lift.
\end{abstract}

\tableofcontents

%%%%%%%%
%%%%%%%%

\section{Introduction}\label{sec:intro}

Statistical models play a central role whenever one wishes to describe, compare, and infer from incomplete information.
%%%%%%%%%%%%%%%%%%%%%%%%%%%%%
In practice, observations are finite, noisy, and often indirect, and one works with structured families of admissible states depending on a finite number of parameters.
%%%%%%%%%%%%%%%%%%%%%%%%%%%%%%%%%%
Such models make it possible to perform estimation, prediction, uncertainty quantification, hypothesis testing, and experimental design.
In physics they encode families of accessible states of a system; in statistics they organize the passage from data to inference; in information theory they provide a setting in which one can quantify how distinguishable nearby states are.
For this reason, understanding the geometry naturally carried by statistical models is not merely a formal matter, but one that is closely tied to the operational content of inference and estimation.

A particularly fruitful way to study statistical models is through differential geometry.
Once a smooth manifold whose points parametrize the relevant states is selected, classical and quantum information geometry provide systematic ways to endow the parameter manifold with a Riemannian metric \cite{AN2000,BZ2006,BC1994,P1996,C1981a}, obtaining a local notion of statistical distance, typically related to the distinguishability of nearby points on the model.
At the same time, the geometric structures arising in these two settings seem strikingly different, and understanding both their differences and their possible points of contact is one of the central themes behind this work.

The classical case is governed by the geometry of the Fisher--Rao metric tensor \cite{F1922,M1936,C1981a}.
%%%%%%%%%%%%%%%%%%%%%%%%%%%%%%%%%%%%%%%%%
On smooth manifolds of strictly positive probability measures on a discrete and finite outcome space, the Fisher--Rao metric tensor is not merely one possible choice among many.
%%%%%%%%%%%%%%%%%%%%%%%%%%%%%%%%
In his seminal work, \v{C}encov introduced a categorical framework in which he showed that this metric is uniquely characterized by its invariance under congruent embeddings induced by Markov morphisms \cite{C1981a}.
%%%%%%%%%%%%%%%%%%%%%%%%%%%%%%%%%%%%%%%
This celebrated theorem shows that, for finite-dimensional classical systems, the probabilistic structure itself strongly constrains the available geometry.
%%%%%%%%%%%%%%%%%%%%%%%%%%%%%%%%%%%%%%%%%%%%%%%
Related uniqueness results have been extended to broader infinite-dimensional settings \cite{BBM2016,AJLS2017,F2022}, further reinforcing the distinguished role of the Fisher--Rao metric tensor in the classical case.
%%%%%%%%%%%%%%%%%%%%%%%%%%%%%%%%%%%

The quantum case, by contrast, exhibits a richer and more varied geometric landscape.
%%%%%%%%%%%%%%%%%%%%%%%%%%
On manifolds of positive-definite quantum states over a finite-dimensional Hilbert space $\mathcal{H}_{n}$, the uniqueness of the metric is lost.
%%%%%%%%%%%%%%%%%%%%%%%%%%%%%
Instead of a single Riemannian geometry as in the classical case, Morozova and \v{C}encov understood that quantum Markov morphisms are not enough to grant uniqueness \cite{MC1991}, and Petz classified the family of admissible Riemannian geometries through a family of symmetric and normalized operator monotone functions \cite{P1996}.
%%%%%%%%%%%%%%%%%%%%%%%%%%%%%%%%%%%%%%
All these Riemannian metric tensors are invariant under the action of the unitary group on quantum states, and monotone with respect to quantum Markov morphisms, \textit{i.e.}, completely-positive and trace-preserving (CPTP) maps. 
%%%%%%%%%%%%%%%%%%%%%%%%%%%%%%%%%%%%%%%
Among them, \emph{Symmetric Logarithmic Derivative} (SLD) metric tensor is closely related to the Fisher information and to the quantum Cram\'er--Rao bound for quantum parameter estimation \cite{BC1994,BGJ2003,H1967a,H2011,P2009a,S2019}.
%%%%%%%%%%%%%%%%%%%%%%%%%%%%%%%%%%%%%%%%%%%%%%%%%%

A peculiar place is occupied by the Fubini--Study metric tensor on pure quantum states, which is the unique (up to a positive constant) Riemannian metric tensor invariant under the unitary group, both in finite and infinite dimensions \cite{BZ2006,CMP1990}.
%%%%%%%%%%%%%%%%%%%%%%%%%%%%%%%%%%%%%%%%%%%%
This uniqueness is reminiscent of the classical case \cite{FKMMSV2010}, but appears in a completely different mathematical landscape, namely, that of Hilbert spaces.
%%%%%%%%%%%%%%%%%%%%%%%%%%%%%%%%%%%%%%%%%

Indeed, classical and quantum information geometry are traditionally developed with rather different mathematical languages.
%%%%%%%%%%%%%%%%%%%%%%%%%%%%%%%%%%%%%%%%
Classical information geometry is rooted in probability and measure theory, where systems are described using sample spaces, random variables, and probability distributions.
%%%%%%%%%%%%%%%%%%%%%%%%%%%%%%%
Quantum information geometry, on the other hand, is rooted in operator theory on Hilbert spaces, where systems are described using Hilbert spaces, quantum observables (self-adjoint operators), and quantum states (density operators).
%%%%%%%%%%%%%%%%%%%%%%%%%%%%%%%%%%
Despite these differences, both theories are concerned with families of states and with their geometric structures encoding operational distinguishability.
%%%%%%%%%%%%%%%%%%%%%%%%%%%%%%%%%%%%%%%%%%%%%%
This suggests that the contrast between the classical and quantum settings need not prevent a common treatment.
%%%%%%%%%%%%%%%%%%%%%%%%%%%%%%%%%%%%%%%
It is already well known how \emph{positive-operator-valued measures} (POVMs) allow one to determine a classical statistical model from a quantum one \cite{BGJ2003,BC1994,P2009}, but POVMs provide a bridge between the two formalisms rather than a unifying mathematical framework. 
%%%%%%%%%%%%%%%%%%%%%%%%%%%%%%%

On the other hand, the language of $C^*$-algebras provides a natural setting for classical and quantum information geometry to be formulated ``simultaneously'' \cite{CDJS2024,CJS2020,CJS2020a,J2006,K2011,K2014a,K2016}.
%%%%%%%%%%%%%%%%%%%%%%%%%%%%%%%%%%%%%%%%%%%
In this setting, the focus is shifted from the classical outcome space or the quantum Hilbert space to a $C^*$-algebra $\mathscr{A}$, that is, a complex Banach algebra with a continuous anti-linear involution $\dagger$ that satisfies the so-called $C^*$ identity $\Vert \mathbf{x}^{\dagger}\,\mathbf{x}\Vert = \Vert \mathbf{x}\Vert^{2}$.
%%%%%%%%%%%%%%%%%%%%%%%%%%%%%%%%%%%%%%
Probability measures and quantum states are recovered as states on the algebra $\mathscr{A}$, that is, as bounded linear functionals in $\mathscr{A}^*$ such that $\rho(\mathbf{x}^{\dagger}\mathbf{x})\geq 0$ and $\Vert\rho\Vert=1$.
%%%%%%%%%%%%%%%%%%%%%%%%%%%%%%%%%%%%%%%%%%%%%%%
The space of all states of $\mathscr{A}$ is denoted by $\mathcal{S}(\mathscr{A})$ and it is a convex set that is compact in the weak*-topology of $\mathscr{A}^*$ whenever $\mathscr{A}$ has an identity element.
%%%%%%%%%%%%%%%%%%%%%%%%%%%%%%%%%%%%%%%%%%%
The classical case may be recovered using commutative algebras like $\mathcal{L}^{\infty}(\Omega,\mu)$ where $(\Omega,\mu)$ is a $\sigma$-finite measure space, or $C_{0}(X)$ where $X$ is a locally-compact Hausdorff space, in which cases $\mu$-absolutely continuous probability measures on $(\Omega,\mu)$ and Radon measures on $X$ are states in the $C^*$-algebraic sense.
%%%%%%%%%%%%%%%%%%%%%%%%%%%%%%%%%%%
Standard quantum mechanical systems are recovered when $\mathscr{A}=\mathcal{B}(\mathcal{H})$, in which case density operators are states in the $C^*$-algebraic sense.
%%%%%%%%%%%%%%%%%%%%%%%%%%%%%%%%%%%%%%%%%%

In particular, both classical and quantum parametric (statistical) models may be described by a triple $(M,\mathrm{i},\mathscr{A})$ where $M$ is a real smooth manifold, $\mathscr{A}$ is a $C^{*}$-algebra, and $\mathrm{i}\colon M\rightarrow\mathcal{S}(\mathscr{A})$ is the map determining the model \cite{CDJS2024}.
%%%%%%%%%%%%%%%%%%%%%%
From this perspective, one is led to ask whether apparently different
geometric structures such as the Fisher--Rao, SLD, and Fubini--Study
tensors may be understood as different realizations of a common background
construction. A partial positive answer, valid in finite dimensions, is
given in \cite{CJS2020,CJJS2025}, where these tensors are obtained from a
coadjoint-orbit-like construction for the Jordan product of self-adjoint
elements. That construction, however, presupposes a smooth ambient
manifold of states and does not extend beyond finite dimensions.

The present paper differs from that approach in three structural respects.
First, it replaces the ambient-manifold picture by the GNS fibration, so
the construction never requires $\mathcal S(\mathscr A)$ to be a manifold
and remains meaningful when the state space is stratified or has corners.
Second, it operates in genuine infinite-dimensional examples: for faithful
states the relevant representative is the Hilbert--Schmidt amplitude rather
than the possibly unbounded SLD operator, and the compatibility condition
becomes an explicit Hilbert--Schmidt integrability requirement, verified
below for displaced thermal states. Third, alongside the metric tensor it
naturally produces a skew-symmetric two-form $\Omega$, whose closedness is
analyzed structurally in Section~\ref{sec:closed-extensions}.

It is worth noting that the problem of classifying admissible Riemannian geometries on statistical models can be recast in a categorical language reminiscent of \v{C}encov's original works \cite{C1981a,MC1991}.
%%%%%%%%%%%%%%%%%%%%%%%%%%%%%%%%%%%%%%%%
In recent work \cite{CDG2025}, the notion of a \emph{field of covariances} has been introduced as a contravariant functor from a category of non-commutative probability spaces to Hilbert spaces.
%%%%%%%%%%%%%%%%%%%%%%%%%%%%%%
Such functors generalize the classical statistical covariance, and their classification in finite dimensions recovers both \v{C}encov's uniqueness theorem and the Morozova--\v{C}encov--Petz classification of monotone quantum metrics as particular instances, providing also an extension to non-faithful states that are not pure generalizing the radial procedure in \cite{PS1996}.
%%%%%%%%%%%%%%%%%%%%%%%%%%%%%%%%%%%%%%%%%%%%
Among all fields of covariances, the one arising from the \emph{Gelfand--Naimark--Segal} (GNS) construction is distinguished: it associates to each state $\rho$ the GNS Hilbert space $\mathcal{H}_\rho$ with its canonical inner product $\langle\cdot,\cdot\rangle_\rho$, and it corresponds, in the Morozova--\v{C}encov--Petz classification, to the SLD metric.
%%%%%%%%%%%%%%%%%%%%%%%%%%%%%%%%%%%%%%%%
The present paper may be understood as a detailed geometric study of this particular field of covariances from a dual perspective: we develop the pullback construction through which the dual GNS inner product induces geometric tensors on parametric models, and we work out its consequences in the classical, pure-state, and faithful quantum settings.
%%%%%%%%%%%%%%%%%%%%%%%%%%%%%%%%%%%%%%%%%%%%%%%
The extension of this programme to other fields of covariances is a natural direction for future work.
%%%%%%%%%%%%%%%%%%%%%%%%%%%%%%%

The paper provides a GNS-based unifying mechanism for several central geometries of classical and quantum information geometry  that is valid also for parametric models over infinite-dimensional $C^{*}$-algebras,   when the regularity assumptions are verified.
%%%%%%%%%%%%%%%%%%%%%%%%%%%%%%%%%%%%%%
Our main idea is that familiar geometric structures in classical and quantum information geometry can be traced back to a single mathematical object, namely, the (dual of the) GNS inner product.
%%%%%%%%%%%%%%%%%%%%%%%%%%%%%%%%%%%%
Seen from this angle, the metrics mentioned above do not appear as isolated constructions, but rather as different geometric manifestations of the same operator-algebraic background structure.
%%%%%%%%%%%%%%%%%%%%%%%%%%%%%%
The idea is very close to the standard construction of pull-back metrics in differential geometry, where a Riemannian metric tensor is pulled back along a smooth immersion.
%%%%%%%%%%%%%%%%%%%%%%%%%%%%%%%%%%%%%%%%%%%

However, a basic obstacle appears immediately.
The state space $\mathcal S(\mathscr{A})$ is not a smooth manifold.
%%%%%%%%%%%%%%%%%%%%%%%%%%%%%%%%%%%%
Even in finite dimensions, it is either a manifold with boundary when $\mathscr{A}\cong \mathbb{C}^{2}$ or $\mathscr{A}\cong \mathcal{M}_{2}(\mathbb{C})\cong\mathcal{B}(\mathbb{C}^{2})$, or a manifold with corners when $\mathscr{A}\cong \mathbb{C}^{n}$ with $n>2$, or a stratified space whose strata are smooth manifolds \cite{DF2021,GKM2005,CCIMV2019}.
%%%%%%%%%%%%%%%%%%%%%%%%%%%%%%%
Because of this, the usual differential-geometric procedure of pulling back covariant tensor fields is not directly available.
This leads us to look for another geometric object attached to each state and varying coherently enough across the state space to support a pullback construction along statistical models.

The GNS construction suggests such an object.
To each state $\rho\in\mathcal{S}(\mathscr{A})$ it associates a Hilbert space $\mathcal{H}_\rho$ and its dual $\mathcal{H}_{\rho}^{*}$.
%%%%%%%%%%%%%%%%%%%%%%%%%%%%%%
These Hilbert spaces glue together to give rise to a non-locally-trivial Hilbert fibration in the sense of \cite[\S II.13.14]{FD1988} (see Proposition~\ref{prop:GNS-nonlocallytrivial-fibration}).
%%%%%%%%%%%%%%%%%%%%%%%%%%%%

This point of view gives us a geometric environment that remains workable even though the state space is not a manifold.
If a statistical model $(M,\mathrm{i},\mathscr{A})$ satisfies additional regularity conditions (see Definitions~\ref{def:gns-smooth-parametric-model} and \ref{def:regular model}), we can pull back the real and imaginary parts of the dual \emph{GNS}
Hermitian structure to define a smooth weak Riemannian metric tensor $G$ and
a smooth $2$-form $\Omega$ on $M$.  When $M$ is finite-dimensional, the weak
Riemannian metric is automatically strong.
%%%%%%%%%%%%%%%%%%%%%%%%%%%%%%%%%%

A relevant feature of this construction is that it recovers, within the same general mechanism, several of the fundamental geometries of information theory, as explained in sections~\ref{sec:classical-case}, \ref{sec:quantum-pure-states}, and~\ref{sec:quantum-faithful-states}.
%%%%%%%%%%%%%%%%%%%%%%%%%%%%%%%%%%%
In the classical commutative case, the regularity of the model is related to the existence of an $\mathcal{L}^{2}$-integrable score function, and the induced metric tensor is the Fisher--Rao metric tensor, while the $2$-form identically vanishes.
%%%%%%%%%%%%%%%%%%%%%%%%%%%%%%%%%%%%%%%%
For pure quantum states, the construction yields  the Fubini--Study metric tensor together with  its symplectic form up to normalization and sign/order convention.
%%%%%%%%%%%%%%%%%%%%%%%%%%%%%%%
For faithful states on $\mathcal B(\mathcal{H})$, the regularity of the model is related to the existence of a Hilbert--Schmidt amplitude associated with the weak SLD equation, which in finite dimensions reduces to the usual SLD and the induced metric is the SLD metric (equivalently, four times the Bures-Helstrom metric)  \cite{H1967a,P2009a}.
%%%%%%%%%%%%%%%%%%%%%%%%%%%%%%%%
Moreover, for finite-dimensional faithful models, the two-form is proportional to the expected commutator of the SLD representatives, also known as \emph{mean Uhlmann curvature} \cite{RJD2016,CSDV2019,CSV2018,CSV2018a,LVSC2019,BLVSC2019} (see \ref{rem:mean-uhlmann-curvature}). 
%%%%%%%%%%%%%%%%%%%%%%%%%
In particular, models with pairwise commuting
SLDs are isotropic for $\Omega$, although the converse need not hold (see remark \ref{rem:antisymmetric-part-SLD-commutators}).
%%%%%%%%%%%%%%%%%%%%%%%%%%%%%%%%%%%%%%%%%%%%%

The behaviour of the two-form $\Omega$ is sensitive to the model. 
%%%%%%%%%%%%%%%%%%%%%%%%%%%%%%%%%%
It vanishes in the commutative case (see Section \ref{sec:classical-case}); it is closed (and symplectic) in the case of pure quantum states (see \ref{sec:quantum-pure-states}); but there are models of faithful quantum states in which it is not closed (see Section \ref{sec:quantum-faithful-states}). 
%%%%%%%%%%%%%%%%%%%%%%%%%%%%%%%%%%%%%%%
In section~\ref{sec:closed-extensions}, we give a structural explanation of this phenomenon under additional regularity assumptions on the model (see Definition \ref{def:bundle-regular-model}).
%%%%%%%%%%%%%%%%%%%%%%%%%%
The idea is to lift the analysis to the total space of the realified dual GNS bundle, where the fiberwise symplectic form
on the real dual GNS bundle admits connection-dependent closed extensions to the total space, in analogy with the finite-dimensional theory of closed forms on symplectic fibrations \cite{GLSW1983}.
%%%%%%%%%%%%%%%%%%%%%%%%%%%%%%%
The closedness of the two-form on the model is controlled by the covariant exterior derivative of the lift used to define the two-form.
%%%%%%%%%%%%%%%%%%

The paper is organized as follows.
%%%%%%%%%%%%%%%%%%%%%%%%%%%%%%%%%%%%%%%%%%%%%%%%%%%%%%%%%%
Section~\ref{sec: gns fibration} recalls the GNS construction and introduces the GNS fibration together with its dual.
%%%%%%%%%%%%%%%%%%%%%%%%%%%%%%%%%%%%%%%%
Section~\ref{sec: parametric models and pullbacks} defines GNS-smooth and regular parametric models, constructs the real GNS lift, the real dual lift, and the induced Hermitian tensor $K$, with real and imaginary parts $G$ and $\Omega$.
%%%%%%%%%%%%%%%%%%%%%%%%%%%%%%%%%%
Sections~\ref{sec:classical-case}, \ref{sec:quantum-pure-states}, and~\ref{sec:quantum-faithful-states} analyze the main examples: commutative dominated models, pure states, and faithful quantum states, including faithful qubits and displaced thermal states.
%%%%%%%%%%%%%%%%%%%%%%%%%%%%%%%%%%%%%%%%%
Finally, Section~\ref{sec:closed-extensions} studies the closedness problem for $\Omega$ by introducing the stronger bundle-regularity condition and by separating closed extensions on the real dual GNS bundle from closedness of the induced two-form on the parameter manifold.
%%%%%%%%%%%%%%%%%%%%%%%%%%%%%%%%%%%%%%%%%%%%%

\section{The \emph{GNS} fibration and its dual}\label{sec: gns fibration}

We now build two \emph{non-locally trivial Hilbert fibrations} over the space of states $\mathcal{S}(\mathscr{A})$ in the sense of \cite[II.13 and II.14]{FD1988}\footnote{Differently from \cite{FD1988}, we use the term \emph{fibration} rather than \emph{bundle} to avoid suggesting local triviality that is often associated to the word bundle in differential geometric contexts.}.
The starting point  is the \emph{GNS} construction associated with a state we now briefly recall.

Let $\mathscr{A}$ be a $C^{*}$-algebra and $\rho: \mathscr{A}\to \mathbb{C}$ a state on $\mathscr{A}$.
The Gelfand-Naimark-Segal (GNS) construction associates to the pair $(\mathscr{A},\rho)$ a  triple $(\mathcal{H}_\rho,\pi_{\rho}, \Omega^{\rho})$, where $\mathcal{H}_\rho$ is a complex Hilbert space called the \emph{GNS} Hilbert space of $\rho$, $\pi_{\rho}$ is a $*$-homomorphism of $\mathscr{A}$ in $\mathcal{B}(\mathcal{H}_\rho)$ called the \emph{GNS} representation of $\mathscr{A}$ associated with $\rho$, and $\Omega^{\rho}\in\mathcal{H}_\rho$ is a vector that is cyclic with respect to $\pi_{\rho}$ (which means that $\mathrm{span}\left\{\pi_{\rho}(a)\Omega^{\rho}\mid a\in\mathscr{A}\right\}$ is dense in $\mathcal{H}_\rho$).
Specifically, the \emph{GNS} Hilbert space of $\rho$ is defined as
$$
\mathcal{H}_\rho := \overline{\mathscr{A} / \mathcal N_\rho},
$$
where
\begin{equation}\label{eqn:gelfand-ideal} 
\mathcal N_\rho := \{ a \in \mathscr{A} \mid \rho(a^*a)=0 \}
\end{equation}
is the so-called \textit{GNS ideal} of $\rho$, and the closure is taken with respect to the inner product 
$$
\langle [a]_\rho , [b]_\rho \rangle_\rho := \rho(a^*b),
$$
which is well defined because it does not depend on the representative of the classes. 
We denote by $\psi_a^\rho$ the equivalence class $[a]_\rho$ when immersed in the closure $\mathcal{H}_\rho$.
Generic elements of $\mathcal{H}_\rho$ will be denoted by $\psi^{\rho} , \eta^{\rho} ,\xi^{\rho} $, etc.
The \emph{GNS} representation is defined by
\begin{equation}
\pi_\rho(a)\psi_{b}^{\rho} := \psi_{ab}^\rho,
\end{equation}
on $\mathscr{A}/\mathcal{N}_{\rho}$, and then extended by continuity.
In particular, if $\mathscr A$ has an identity element $\mathbb I$, then
the cyclic vector $\Omega^\rho$ is just $\psi^\rho_{\mathbb I}$.  If
$\mathscr A$ is nonunital, one may either pass to the unitization, or,
equivalently, obtain the cyclic vector as the Hilbert-space limit of
$\psi^\rho_{e_\lambda}$ for any approximate identity $(e_\lambda)_\lambda$
of $\mathscr A$; the resulting vector is independent of the chosen
approximate identity.  In the present paper, however, the construction of
the GNS fibration only uses the vectors $\psi^\rho_a$ with
$a\in\mathscr A$.

\begin{remark}
We write $\mathcal H_\rho^{\mathbb R}$ for the realification of $\mathcal{H}_{\rho}$.
%%%%%%%%%%%
This is  the same underlying additive group as $\mathcal H_\rho$, but scalar multiplication is restricted to
real scalars.
%%%%%%%%%%%%%%%%%%%%%%%%%

The real and imaginary parts of the Hermitian product define two canonical real-bilinear forms on $\mathcal H_\rho^{\mathbb R}$. 
%%%%%%%%%%%%%%%%%%%%
Namely, for $\psi^\rho,\eta^\rho\in\mathcal H_\rho$, regarded also as elements of
$\mathcal H_\rho^{\mathbb R}$, set
\begin{equation}\label{eqn:real-hilbert-product-realification-gns}
(\psi^\rho,\eta^\rho)_\rho
:=
\Re\langle \psi^\rho,\eta^\rho\rangle_\rho,
\end{equation}
and
\begin{equation}\label{eqn:symplectic-form-realification-gns}
[\psi^\rho,\eta^\rho]_\rho
:=
\Im\langle \psi^\rho,\eta^\rho\rangle_\rho.
\end{equation}
%%%%%%%%%%%%%%%%%%%%%%%%%
Then $(\cdot,\cdot)_\rho$ is symmetric and positive definite, and it turns 
$\mathcal H_\rho^{\mathbb R}$ into a real Hilbert space. The form $[\cdot,\cdot]_\rho$ is real-bilinear and skew-symmetric.
%%%%%%%%%%%%%%%%%%%

The original multiplication by $\mathrm i$ on the complex Hilbert space $\mathcal H_\rho$ defines a real-linear operator $J_\rho:\mathcal H_\rho^{\mathbb R}\to
\mathcal H_\rho^{\mathbb R}$ by
\begin{equation}\label{eqn:complex-structure-realification-gns}
J_\rho(\psi^\rho):=\mathrm i\psi^\rho,
\end{equation}
where the product $\mathrm i\psi^\rho$ is taken in $\mathcal H_\rho$ and the
result is then regarded as an element of $\mathcal H_\rho^{\mathbb R}$. 
%%%%%%%%%%%%%%%%%%
Thus $J_\rho^2=-\mathrm{id}$.

With the convention that the GNS Hermitian product is linear in the second
argument, these structures satisfy
\begin{equation}\label{eqn:compatibility-symmetric-skew-symmetric-form-realification-gns}
[\psi^\rho,\eta^\rho]_\rho
=
(J_\rho\psi^\rho,\eta^\rho)_\rho
=
-(\psi^\rho,J_\rho\eta^\rho)_\rho
\end{equation}
for all $\psi^\rho,\eta^\rho\in\mathcal H_\rho$.
\end{remark}

\begin{definition}[{\cite[\S II.13.4]{FD1988}}] \label{def: Banach and Hilbert fibration}
Let $X$ be a Hausdorff topological space.
A \textbf{fibration} over $X$ is a triple $(\mathcal E,\pi,X)$ where $\mathcal E$ is a topological space, and $\pi:\mathcal E\to X$ is a continuous open surjection.
For each $x\in X$, we write $\mathcal E_x:=\pi^{-1}(x)$ for the fiber over $x$.
A \emph{section} of $(\mathcal E,\pi,X)$ is a map $s:X\to \mathcal E$ such that
$\pi\circ s=\mathrm{id}_X.$

A \textbf{\emph{Banach fibration}} over $X$ is a \textbf{fibration} over $X$  such that each fiber $\mathcal E_x$ ($x\in X$) is a Banach space,  and the following conditions hold:

\begin{enumerate}
    \item The norm map $ \mathcal E\to \mathbb R$ given by $\xi\mapsto \|\xi\|_{\mathcal E_{\pi(\xi)}}$ is continuous;
    \item the addition map $+:\mathcal E\times_X\mathcal E\to \mathcal E$ given by $(\xi,\eta)\mapsto \xi+\eta$, where 
    $$
    \mathcal E\times_X\mathcal E
    :=
    \{(\xi,\eta)\in\mathcal E\times\mathcal E:\pi(\xi)=\pi(\eta)\},
    $$
    is continuous;
    \item the scalar multiplication map $\mathbb{C}\times \mathcal E \to \mathcal E$ given by $(\lambda,\xi) \mapsto \lambda \xi$ is continuous;    
    \item if $x\in X$ and $\{\xi_i\}$ is a net in $\mathcal E$ such that $\|\xi_i\|\to 0$ and $\pi(\xi_i)\to x$,   then $\xi_i\to 0_x$ in $\mathcal E$, where $0_x$ denotes the zero vector in the fiber $\mathcal E_x$.
\end{enumerate}
A Banach fibration whose fibers are all Hilbert spaces is called a \textbf{\emph{Hilbert fibration}}.
\end{definition}

Sometimes, one practically starts from a set-theoretic surjection $p\colon A\rightarrow X$ for which each fiber is a Banach space, and would like to build a Banach fibration over $X$ in the sense of Definition \ref{def: Banach and Hilbert fibration}.
It turns out such a procedure is possible once a suitably regular set of sections is introduced.

\begin{theorem}[Construction theorem for Banach fibrations  {\cite[\S II.13.18]{FD1988}}]\label{thm: construction thm}
Let $X$ be a Hausdorff topological space, let $A$ be a untopologized set, and let $p:A\to X$ be a surjection such that $A_x:=p^{-1}(x)$  is a Banach space for each $x\in X$.
Assume there is a complex vector space $\Gamma$ of (set-theoretical) sections of $p\colon A\rightarrow X$ satisfying the following conditions:
\begin{enumerate}
\item for every $\sigma\in\Gamma$, the function $x\longmapsto \|\sigma(x)\|$ is continuous on $X$;
\item for every $x\in X$, the set $\{\sigma(x)\mid\sigma\in\Gamma\}$ is dense in $A_x$.
\end{enumerate}
Then there exists a unique topology on $A$ such that, with this topology, the triple $(A,p,X)$ is a Banach fibration and every section $\sigma\in\Gamma$ is continuous.
\end{theorem}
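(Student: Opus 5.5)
We will prove the construction theorem by defining the topology on $A$ through "tubes" around the sections in $\Gamma$, and then checking the axioms of Definition~\ref{def: Banach and Hilbert fibration}. For $\sigma\in\Gamma$, an open set $U\subseteq X$ and $\varepsilon>0$, set
\[
W(\sigma,U,\varepsilon):=\{\xi\in A \mid p(\xi)\in U,\ \|\xi-\sigma(p(\xi))\|<\varepsilon\}.
\]
The first step is to show that these sets form a basis of a topology. Suppose $\xi\in W(\sigma_1,U_1,\varepsilon_1)\cap W(\sigma_2,U_2,\varepsilon_2)$ and put $x=p(\xi)$. By density (condition 2) we choose $\tau\in\Gamma$ with $\|\tau(x)-\xi\|<\delta$, where $\delta$ is small. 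Because $\tau-\sigma_i\in\Gamma$, condition 1 says that $y\mapsto\|\tau(y)-\sigma_i(y)\|$ is continuous. Its value at $x$ is less than $\varepsilon_i-2\delta$ for a suitable choice of $\delta$, so it remains below $\varepsilon_i-\delta'$ on some neighbourhood $V\subseteq U_1\cap U_2$ of $x$. The triangle inequality then gives $\xi\in W(\tau,V,\delta)\subseteq W(\sigma_1,U_1,\varepsilon_1)\cap W(\sigma_2,U_2,\varepsilon_2)$. This density-plus-continuity argument is the core of the proof, and the same argument will be reused in every later step.

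The second step is to verify the axioms in turn.
\begin{enumerate}
\item \emph{Continuity of $p$.} The preimage of an open set $U$ is $W(0,U,\infty)$, which is open.
\item \emph{Openness of $p$.} The image $p(W(\sigma,U,\varepsilon))$ equals $U$, since $\sigma(y)$ lies in the tube over every $y\in U$.
\item \emph{Surjectivity of $p$.} This holds by hypothesis.
\item \emph{Continuity of the norm.} Near $\xi$, approximate $\xi$ by $\tau(p(\xi))$. Points $\eta$ in a small tube around $\tau$ satisfy $\big|\|\eta\|-\|\tau(p(\eta))\|\big|<\delta$, and $\|\tau(\cdot)\|$ is continuous by condition 1.
\item \emph{Continuity of addition.} If $\xi\approx\sigma$ and $\eta\approx\tau$ within $\delta$, then $\xi+\eta\approx\sigma+\tau\in\Gamma$ within $2\delta$. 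Here we use that $\Gamma$ is a vector space.
\item \emph{Continuity of scalar multiplication.} This is handled the same way, using $\|\lambda\xi-\lambda_0\sigma(y)\|\le|\lambda|\,\|\xi-\sigma(y)\|+|\lambda-\lambda_0|\,\|\sigma(y)\|$ together with the local boundedness of $\|\sigma(\cdot)\|$.
\item \emph{Axiom 4.} Use the zero section $0\in\Gamma$: if $\|\xi_i\|\to0$ and $p(\xi_i)\to x$, then $\xi_i$ eventually lies in every $W(0,U,\varepsilon)$.
\end{enumerate}
Continuity of each $\sigma\in\Gamma$ also follows. Given $\sigma(x)\in W(\tau,U,\varepsilon)$, the continuous function $\|\sigma-\tau\|$ is less than $\varepsilon$ on a neighbourhood of $x$.

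The final step is uniqueness, which we expect to be the main obstacle, since it requires a lemma about arbitrary Banach fibrations. Let $\mathcal T'$ be any topology making $(A,p,X)$ a Banach fibration with every $\sigma\in\Gamma$ continuous. We will prove the lemma that a net $\xi_i\to\xi$ in $\mathcal T'$ if and only if $p(\xi_i)\to p(\xi)$ and, for every $\varepsilon>0$, there is some $\sigma\in\Gamma$ with $\|\sigma(p(\xi))-\xi\|<\varepsilon$ and eventually $\|\xi_i-\sigma(p(\xi_i))\|<\varepsilon$.

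For the forward direction, $\xi_i-\sigma(p(\xi_i))\to\xi-\sigma(p(\xi))$ by continuity of subtraction on $A\times_X A$, and then continuity of the norm gives the bound. For the converse, take $\varepsilon_n\to0$ with corresponding sections $\sigma_n$, and write $\xi_i=\sigma_n(p(\xi_i))+(\xi_i-\sigma_n(p(\xi_i)))$. Axiom 4 and the continuity of addition control the second summand. A diagonal argument over $n$ then yields $\xi_i\to\xi$, and this diagonal argument is the delicate point. Since the lemma characterizes convergent nets purely in terms of $p$, norms and $\Gamma$, and our tube topology satisfies the same characterization, the two topologies coincide.
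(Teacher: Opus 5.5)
The paper does not prove this theorem itself but quotes it from Fell--Doran. Your outline is correct and is essentially their argument: the tube topology generated by the sets $W(\sigma,U,\varepsilon)$, every axiom checked by the same density-plus-continuity-of-$\|\tau-\sigma\|$ argument, and uniqueness via a characterization of convergent nets valid in any admissible topology. The one step you leave loose, the converse half of your net lemma, needs no diagonal argument: given a $\mathcal T'$-open $W\ni\xi$, continuity of addition at $(\xi,0_x)$ gives open $W_1\ni\xi$ and $W_2\ni 0_x$ with $W_1+W_2\subseteq W$; Axiom~4 (argued by contradiction) shows $W_2\supseteq\{\eta: p(\eta)\in U,\ \|\eta\|<\delta\}$ for some open $U\ni x$ and $\delta>0$, Axiom~4 within the fibre (plus addition) shows $\sigma_n(x)\in W_1$ for large $n$, and then a single $n$ with $\varepsilon_n<\delta$ already gives $\xi_i=\sigma_n(p(\xi_i))+(\xi_i-\sigma_n(p(\xi_i)))\in W$ eventually.
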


We now exploit Theorem \ref{thm: construction thm} to define the  \emph{GNS} fibration over $\mathcal{S}(\mathscr{A})$ and its dual.
We start with the set-theoretical disjoint union
\begin{equation}\label{eq:GNS fibration}
\mathcal{H}(\mathscr{A}):=\bigsqcup_{\rho\in\mathscr S(\mathscr{A})} \mathcal{H}_\rho,
\end{equation}
equipped with the natural projection
\begin{equation}\label{eq:set theoretical projection}
p:\mathcal{H}(\mathscr{A})\longrightarrow \mathscr S(\mathscr{A}),
\qquad
p(\psi^{\rho} ):=\rho,
\end{equation}
and its dual 
\begin{equation}\label{eq:dual GNS fibration}
\mathcal{H}^\ast(\mathscr{A}):=\bigsqcup_{\rho\in\mathscr S(\mathscr{A})} \mathcal{H}_\rho^\ast,
\end{equation}
with the projection
\begin{equation}\label{eq:set theoretical dual projection}
p^{*}:\mathcal{H}^{*}(\mathscr{A})\longrightarrow \mathscr S(\mathscr{A}),
\qquad
p^{*}(\psi_{\rho} ):=\rho.
\end{equation}

For each $a\in\mathscr{A}$, we define the set-theoretical section of $(\mathcal{H}(\mathscr{A}),p,\mathcal{S}(\mathscr{A}))$ given by
\begin{equation}\label{eq:tautological sections}
\Psi_a:\mathcal{S}(\mathscr{A})\longrightarrow \mathcal{H}(\mathscr{A}),
\qquad
\Psi_a(\rho):=\psi_a^\rho=[a]_\rho.
\end{equation}
We shall call these sections \textbf{tautological sections} of $\mathcal{H}(\mathscr{A})$.
Analogously, for each $a\in\mathscr{A}$, we can also define the dual section 
\begin{equation}\label{eq:dual tautological section}
\Psi_a^\ast:\mathcal{S}(\mathscr{A})\longrightarrow \mathcal{H}^\ast(\mathscr{A}),
\qquad
\Psi_a^\ast(\rho)\equiv\psi^{a}_{\rho} :=R_\rho(\psi_a^\rho)\in\mathcal{H}_\rho^\ast, 
\end{equation}
where $R_\rho$ is the conjugate-linear Riesz identification 
\begin{equation}\label{eq:Riesz fiberwise}
R_\rho:\mathcal{H}_\rho\longrightarrow \mathcal{H}_\rho^\ast,
\qquad
R_\rho(\eta)(\xi):=\langle\eta , \xi\rangle_\rho.
\end{equation}
Equivalently, for every $\xi\in\mathcal{H}_\rho$ one has
$$
\psi_{a}^{\rho,\ast}(\xi)=\langle\psi_a^\rho, \xi\rangle_\rho.
$$
We shall call the sections $\Psi_a^\ast$ the \textbf{tautological sections} of $\mathcal{H}^\ast(\mathscr{A})$.

\begin{proposition}\label{prop:GNS-nonlocallytrivial-fibration}
Let $(\mathcal{H}(\mathscr{A}),p,\mathcal{S}(\mathscr{A}))$ be a set-theoretical Hilbert fibration as in Definition \ref{def: Banach and Hilbert fibration}, where $\mathcal{S}(\mathscr{A})$ is endowed with the weak$^*$-topology, and $\mathcal{H}(\mathscr{A})$ and $p$ are as in equation \eqref{eq:GNS fibration} and equation \eqref{eq:set theoretical projection}, respectively.
Let $\Gamma_{\mathrm{GNS}}$ be the family of tautological sections of $\mathcal{H}(\mathscr{A})$ as in equation \eqref{eq:tautological sections}.
Then Theorem~\ref{thm: construction thm}  applies,  and thus there exists a unique topology on $\mathcal{H}(\mathscr{A})$ for which
$$
(\mathcal{H}(\mathscr{A}),p,\mathcal{S}(\mathscr{A}))
$$
is a Hilbert fibration and every tautological section $\Psi_a$ is continuous.
We call the (non-locally-trivial) Hilbert fibration $(\mathcal{H}(\mathscr{A}),p,\mathcal{S}(\mathscr{A}))$ the  \emph{GNS fibration}. 

Analogous conclusions hold for the dual fibration $(\mathcal{H}^{\ast}(\mathscr{A}),p^{\ast},\mathcal{S}(\mathscr{A}))$, where $\mathcal{H}^{\ast}(\mathscr{A})$ is as in equation \ref{eq:dual GNS fibration}, $p^{*}$ is as in equation \eqref{eq:set theoretical dual projection}, and the family $\Gamma_{\mathrm{GNS}}^\ast$ of tautological sections of  $\mathcal{H}^{\ast}(\mathscr{A})$ is as in equation \eqref{eq:dual tautological section}.
We call the (non-locally-trivial) Hilbert fibration $(\mathcal{H}^{\ast}(\mathscr{A}),p^{\ast},\mathcal{S}(\mathscr{A}))$ the \emph{dual GNS fibration}. 
\end{proposition}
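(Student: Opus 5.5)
The plan is to check the two hypotheses of Theorem~\ref{thm: construction thm} for the set-theoretic surjection $p\colon\mathcal{H}(\mathscr{A})\to\mathcal{S}(\mathscr{A})$, taking $\Gamma=\Gamma_{\mathrm{GNS}}=\{\Psi_a\mid a\in\mathscr{A}\}$. After that, transport everything to the dual through the fiberwise Riesz maps $R_\rho$.

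First, I need $\Gamma_{\mathrm{GNS}}$ to be a complex vector space of sections. The quotient map $a\mapsto[a]_\rho$ is linear for each fixed $\rho$, so
\[
\Psi_{\lambda a+\mu b}(\rho)=\lambda\Psi_a(\rho)+\mu\Psi_b(\rho).
\]
Hence $a\mapsto\Psi_a$ is a linear surjection onto $\Gamma_{\mathrm{GNS}}$, and this set is a linear space under pointwise operations. Each $\Psi_a$ is a section, since $p(\Psi_a(\rho))=\rho$ by construction.

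Next comes continuity of the norm functions, which is where the weak$^*$-topology enters. For fixed $a$,
\[
\|\Psi_a(\rho)\|_\rho^2=\langle\psi_a^\rho,\psi_a^\rho\rangle_\rho=\rho(a^*a).
\]
This is the evaluation of $\rho$ at the fixed element $a^*a\in\mathscr{A}$, so $\rho\mapsto\rho(a^*a)$ is weak$^*$-continuous on $\mathcal{S}(\mathscr{A})$ by the very definition of that topology. It is also nonnegative, and composing with the continuous map $t\mapsto\sqrt{t}$ on $[0,\infty)$ gives continuity of $\rho\mapsto\|\Psi_a(\rho)\|_\rho$.

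Then I check fiberwise density. For fixed $\rho$, the set $\{\Psi_a(\rho)\mid a\in\mathscr{A}\}$ is exactly the image of $\mathscr{A}/\mathcal{N}_\rho$ in its completion $\mathcal{H}_\rho$, which is dense by definition of the completion. Note that this uses only the vectors $\psi^\rho_a$, so no unit is needed. I also need $\mathcal{S}(\mathscr{A})$ to be Hausdorff. This holds because the weak$^*$-topology is Hausdorff: distinct functionals differ on some $a$. With all hypotheses verified, Theorem~\ref{thm: construction thm} yields the unique topology making $(\mathcal{H}(\mathscr{A}),p,\mathcal{S}(\mathscr{A}))$ a Banach fibration with every $\Psi_a$ continuous. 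It is a Hilbert fibration because each fiber is a Hilbert space.

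For the dual, I argue the same way with $\Gamma^*_{\mathrm{GNS}}=\{\Psi_a^*\}$. The one real subtlety, and the step I expect to need care, is linearity. Since $R_\rho$ is conjugate-linear, $\Psi^*_{\lambda a}=\bar\lambda\Psi^*_a$, so the map $a\mapsto\Psi_a^*$ is conjugate-linear rather than linear. Even so, its image is still a complex vector space of sections under the pointwise operations of the dual fibers, because it is closed under sums and under complex scalars ($\lambda\Psi^*_a=\Psi^*_{\bar\lambda a}$). Norm continuity follows from the fact that $R_\rho$ is isometric, so $\|\Psi_a^*(\rho)\|=\sqrt{\rho(a^*a)}$ as before. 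Density in $\mathcal{H}_\rho^*$ holds because $R_\rho$ is a surjective isometry, by the Riesz representation theorem, and therefore maps dense sets to dense sets. Each $\mathcal{H}_\rho^*$ is a Hilbert space with the inner product transported by $R_\rho$, so Theorem~\ref{thm: construction thm} again gives the dual GNS fibration.
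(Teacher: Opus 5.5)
Your proposal is correct and follows essentially the same route as the paper. For the GNS fibration, both arguments verify that $\Gamma_{\mathrm{GNS}}$ is a complex vector space of sections, that $\rho\mapsto\sqrt{\rho(a^*a)}$ is weak$^*$-continuous, and that the sections are dense in each fiber; for the dual, both use the conjugate-linear identity $\lambda\Psi_a^*=\Psi_{\bar\lambda a}^*$ together with the isometric Riesz bijection. Your explicit check that the weak$^*$ topology on $\mathcal{S}(\mathscr{A})$ is Hausdorff is a small point the paper leaves implicit.
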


\begin{proof}
For each $\rho\in\mathscr S(\mathscr{A})$,  the fiber $\mathcal{H}_\rho$ is a complex Hilbert space, hence a complex Banach space.
Moreover, by construction, the family $\Gamma_{\mathrm{GNS}}$ is a complex vector space of sections of $p$.
To check condition (1) in theorem~\ref{thm: construction thm} for $(\mathcal{H}(\mathscr{A}),p,\mathcal{S}(\mathscr{A}))$, fix $a\in\mathscr{A}$, so that
\begin{equation}
\|\Psi_a(\rho)\|^2
=
\|\psi_a^\rho\|_\rho^2
=
\rho(a^\ast a).
\end{equation}
The map $\rho\longmapsto \rho(a^\ast a)$ is continuous for every fixed $a\in\mathscr{A}$ with respect to the weak$^*$ topology on $\mathcal{S}(\mathscr{A})$ by definition, and it follows that $\rho\longmapsto \|\Psi_a(\rho)\|$ is continuous.
To check condition (2) in theorem~\ref{thm: construction thm} for $(\mathcal{H}(\mathscr{A}),p,\mathcal{S}(\mathscr{A}))$, fix $\rho\in\mathcal{S}(\mathscr{A})$. 
By construction of the \emph{GNS} Hilbert space, the set
\begin{equation}
\{\psi_a^\rho:a\in\mathscr{A}\}
=
\{\Psi_a(\rho):a\in\mathscr{A}\}
\end{equation}
is dense in $\mathcal{H}_\rho$, and thus all assumptions of theorem~\ref{thm: construction thm} are therefore satisfied for $(\mathcal{H}(\mathscr{A}),p,\mathcal{S}(\mathscr{A}))$. 

We now consider the dual fibration.  Although the Riesz map
$R_\rho:\mathcal H_\rho\to\mathcal H_\rho^*$ is conjugate-linear, the family
\[
\Gamma_{\mathrm{GNS}}^*
:=
\{\Psi_a^*:a\in\mathscr A\}
\]
is still a complex vector space of sections.  Indeed, for
$\lambda,\mu\in\mathbb C$ and $a,b\in\mathscr A$, one has
\[
\lambda\Psi_a^*+\mu\Psi_b^*
=
\Psi_{\overline{\lambda}a+\overline{\mu}b}^* .
\]
Moreover,
\[
\|\Psi_a^*(\rho)\|_{\mathcal H_\rho^*}
=
\|R_\rho(\psi_a^\rho)\|_{\mathcal H_\rho^*}
=
\|\psi_a^\rho\|_\rho
=
\sqrt{\rho(a^*a)} ,
\]
and hence the function
\[
\rho\longmapsto \|\Psi_a^*(\rho)\|_{\mathcal H_\rho^*}
\]
is continuous for the weak$^*$ topology on $\mathcal S(\mathscr A)$.
Finally, for every fixed $\rho\in\mathcal S(\mathscr A)$, the set
\[
\{\Psi_a^*(\rho):a\in\mathscr A\}
=
\{R_\rho(\psi_a^\rho):a\in\mathscr A\}
\]
is dense in $\mathcal H_\rho^*$, because
$\{\psi_a^\rho:a\in\mathscr A\}$ is dense in $\mathcal H_\rho$ and
$R_\rho$ is an isometric, conjugate-linear bijection.  Therefore the two
hypotheses of Theorem~\ref{thm: construction thm} are also satisfied for
$(\mathcal H^*(\mathscr A),p^*,\mathcal S(\mathscr A))$ with
$\Gamma_{\mathrm{GNS}}^*$.
%%%%%%%%%%%%%%%%%%%%%

\end{proof}

It is worth noting that the complexified tangent bundle of a real smooth manifold provides a natural example of a locally trivial Banach fibration in the sense of Definition~\ref{def: Banach and Hilbert fibration}.
Let $M$ be a real smooth manifold. For every $m\in M$, we denote by
\begin{equation}
T_m^{\mathbb C}M:=T_mM\otimes_{\mathbb R}\mathbb C
\end{equation}
the complexification of the real tangent space $T_mM$.
Every element of $T_m^{\mathbb C}M$ can be written uniquely in the form
$$
v_m\otimes 1+w_m\otimes i,
\qquad v_m,w_m\in T_mM,
$$
which we shall denote simply by $v_m+iw_m$.
This notation identifies the underlying real vector space of
$T_m^{\mathbb C}M$ with $T_mM\oplus T_mM$.\footnote{This decomposition should not be understood as a decomposition into complex subspaces.}
The complexified tangent space also carries the canonical complex-antilinear involution determined by
$$
\overline{v_m\otimes z}:=v_m\otimes \overline z.
$$
Equivalently, if $z_m=v_m+iw_m$, then $\overline{z_m}=v_m-iw_m$.
Its fixed-point set is the canonical copy of $T_mM$ inside
$T_m^{\mathbb C}M$, i.e., $\{z_m \in T_m^\mathbb{C} : \overline{z_m} = z_m\} = \{v_m \otimes 1 : v_m \in T_mM\}$.

An additional Hermitian structure can be defined on $T_m^{\mathbb{C}}M$ for instance by choosing a Riemannian metric on $M$ and extending it Hermitianly to $T^{\mathbb{C}}M$.
More generally, suppose that for every $m\in M$ we are given a positive definite Hermitian form
$$
K_m:T_m^{\mathbb C}M\times T_m^{\mathbb C}M\to\mathbb C
$$
depending smoothly on $m$.
Then each fiber $T_m^{\mathbb C}M$ becomes a complex Hilbert space, and the triple
$$
\left(T^{\mathbb C}M,\tau_M^{\mathbb C},(M,K)\right)
$$
is a locally trivial complex Hilbert fibration where
$$
\tau_M^{\mathbb C}:T^{\mathbb C}M\to M
$$
is the canonical projection.
The Hermitian tensor $K$ determines canonically two real tensors on the underlying real manifold $M$, namely
\begin{equation}\label{eqn: real tensors from K}
G_m(v_m,w_m):=\Re K_m(v_m,w_m),
\qquad
\Omega_m(v_m,w_m):=\Im K_m(v_m,w_m),
\end{equation}
for all $v_m,w_m\in T_mM\subset T_m^{\mathbb C}M$.
In this way, the real part of $K$ defines a symmetric bilinear form on $TM$, while its imaginary part defines a skew-symmetric $2$-form.

This construction provides the natural geometric setting in which the Hermitian structure carried by the GNS fibers can be related to the geometry of a smooth parametric model.
More precisely, under suitable compatibility and regularity assumptions, the Hermitian product of the GNS fibers will induce canonically a Hermitian tensor on the complexified tangent bundle of the model.
In the examples considered later, this point of view will allow us to recover familiar geometric structures of information geometry from a single Hermitian tensor: in the classical case, the real part gives the Fisher--Rao metric tensor, while in the quantum cases it yields, according to the model, the Fubini--Study and SLD metric tensors together with their associated imaginary parts.
%%%%%%%%%%%%%%%%%%%%%%%%%%%%%%

\section{Regular parametric models and induced tensors}\label{sec: parametric models and pullbacks}

Recall that in the $C^{*}$-algebraic approach to classical and quantum information geometry, a \emph{parametric (statistical) model} may be described by a triple $(M,\mathrm{i},\mathscr{A})$ where $M$ is a real smooth manifold, $\mathscr{A}$ is a $C^{*}$-algebra, and $\mathrm{i}\colon M\rightarrow\mathcal{S}(\mathscr{A})$ taking values in the set of states is the map determining the model \cite{CDJS2024}.
%%%%%%%%%%%%%%%%%%%%
If $\mathrm{i}$ is continuous with respect to the topology on $M$ and the weak$^*$-topology on $\mathcal{S}(\mathscr{A})$, we can build two non-locally-trivial Hilbert fibrations over $M$  from the \emph{GNS} and the \emph{dual GNS} fibrations introduced in Section \ref{sec: gns fibration} following \cite[\S II.13.17]{FD1988}.
Specifically, we consider 
\begin{equation}\label{eqn: pullback fibrations}
\begin{split}
\mathrm{i}^{*}\mathcal{H}(\mathscr{A})&:=\left\{(m,\psi)\in M\times \mathcal{H}(\mathscr{A})\mid \mathrm{i}(m)=p(\psi)\right\} \\ & \\
\mathrm{i}^{*}\mathcal{H}^{*}(\mathscr{A})&:=\left\{(m,\eta)\in M\times \mathcal{H}^{*}(\mathscr{A})\mid \mathrm{i}(m)=p^{*}(\eta)\right\}    
\end{split}
\end{equation}
and the obvious projections  over $M$ coming from the left projections on $M\times \mathcal{H}(\mathscr{A})$ and $M\times \mathcal{H}^{*}(\mathscr{A})$.

The purpose of this section is to introduce a class of parametric models for which the complexified tangent space $T_m^{\mathbb C}M$ admits a canonical lift to the fibers of the
pullback dual GNS fibration $\mathrm{i}^*\mathcal{H}^*(\mathscr{A})$.
This makes it possible to induce on $T_m^{\mathbb C}M$ a Hermitian form coming from the dual GNS inner product.
Its real and imaginary parts then determine, on the underlying real tangent space $T_mM$, a symmetric bilinear form and a skew-symmetric $2$-form, respectively.

In many examples of practical interest, the map  $\mathrm{i}\colon M\rightarrow\mathcal{S}(\mathscr{A})$ is such that, for every self-adjoint element $a\in \mathscr{A}_{\mathrm{sa}}$, the  expectation-value map $\ell_a \colon M \to \mathbb{R}$ given by  $ \ell_a(m) := \mathrm{i}(m)(a)$ is smooth.   When this holds, every tangent vector $v_m \in T_m M$  determines a real-linear functional  $\eta^{v_m}$ on the real subspace  of $\mathcal{H}_{\mathrm{i}(m)}^\mathbb{R}$ generated by vectors of the form  $\psi_a^{\mathrm{i}(m)}$ with $a \in \mathscr{A}_{\mathrm{sa}}$, according to:
$$
\langle v_m,d\ell_a(m)\rangle =\Re\left( \eta^{v_m}\bigl(\psi_a^{\mathrm{i}(m)}\bigr)\right) \, .
$$
In general, this functional need not extend to a  continuous real-linear functional on the whole GNS Hilbert space.  This motivates the following definition.
%%%%%%%%%%%%%%%%%%%%%%%%%%

\begin{definition}\label{def:gns-smooth-parametric-model}
A parametric model $(M,\mathrm{i},\mathscr{A})$ is called \emph{GNS-smooth} if the following conditions hold:

\begin{enumerate}
\item\label{cond:smoothness-of-expectations}
For every self-adjoint element $a\in\mathscr{A}_{\mathrm{sa}}$, the expectation-value map
\[
\ell_a:M\longrightarrow\mathbb R,
\qquad
\ell_a(m):=\mathrm{i}(m)(a),
\]
is smooth.

\item\label{cond:GNS-boundedness}
For every $m\in M$ and every $v_m\in T_mM$, setting $\rho:=\mathrm{i}(m)$, there exists a constant $C_{v_{m}}>0$ such that
\begin{equation}\label{eqn:GNS-boundedness-condition}
\bigl|\langle v_m,d\ell_a(m)\rangle\bigr|
\leq
C_{v_{m}}\,\|\psi_a^\rho\|_\rho
=
C_{v_{m}}\sqrt{\rho(a^*a)},
\qquad
\forall a\in\mathscr{A}_{\mathrm{sa}}.
\end{equation}

\item\label{cond:injectivity}
For every $m\in M$, if $v_m\in T_mM$ satisfies
\begin{equation}
\langle v_m,d\ell_a(m)\rangle=0,
\qquad
\forall a\in\mathscr{A}_{\mathrm{sa}},
\end{equation}
then $v_m=0$.
\end{enumerate}

A GNS-smooth parametric model is called \emph{normal} if $\mathscr{A}$ is a $W^*$-algebra and $\mathrm{i}(M)\subset\mathcal{S}_n(\mathscr{A})$.
\end{definition}

\begin{proposition}\label{prop:boundedness-canonical-representative}
Let $(M,\mathrm{i},\mathscr{A})$ be a parametric model satisfying condition~\ref{cond:smoothness-of-expectations} in Definition~\ref{def:gns-smooth-parametric-model}. Fix $m\in M$, set $\rho:=\mathrm{i}(m)$, and define
\begin{equation}\label{eqn:V-rho-definition}
V_\rho
:=
\overline{\operatorname{span}_{\mathbb R}
\{\psi_a^\rho:\ a\in\mathscr{A}_{\mathrm{sa}}\}}
\subset
\mathcal{H}_\rho^{\mathbb R}.
\end{equation}
Let $v_m\in T_mM$. 
%%%%%%%%%%%%%%%%%%%%%%%%%%%%
Then, condition \ref{cond:GNS-boundedness} in Definition \ref{def:gns-smooth-parametric-model} is equivalent to the existence of a unique vector $\xi_{v_m}^{\rho}\in V_\rho$ such that
\begin{equation}\label{eqn:canonical-real-representative}
\langle v_m,d\ell_a(m)\rangle
=
\Re\langle\xi_{v_m}^{\rho},\psi_a^\rho\rangle_\rho
\qquad
\forall a\in\mathscr{A}_{\mathrm{sa}}.
\end{equation}
\end{proposition}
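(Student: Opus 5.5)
The plan is to apply the Riesz representation theorem on the real Hilbert space $V_\rho$, equipped with the real inner product $(\cdot,\cdot)_\rho=\Re\langle\cdot,\cdot\rangle_\rho$ of \eqref{eqn:real-hilbert-product-realification-gns}. Since $V_\rho$ is by definition a closed real subspace of the real Hilbert space $\mathcal H_\rho^{\mathbb R}$, it is itself a real Hilbert space. The only real work is to build from $v_m$ a well-defined, bounded real-linear functional on the dense subspace
\[
D_\rho:=\operatorname{span}_{\mathbb R}\{\psi_a^\rho: a\in\mathscr A_{\mathrm{sa}}\}.
\]

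For the direction from condition~\ref{cond:GNS-boundedness} to the representative, I first note that $a\mapsto\ell_a(m)$ is real-linear on $\mathscr A_{\mathrm{sa}}$, so $a\mapsto\langle v_m,d\ell_a(m)\rangle$ is real-linear too. The map $a\mapsto\psi_a^\rho$ is also real-linear, hence every element of $D_\rho$ is of the form $\psi_a^\rho$ for a single $a\in\mathscr A_{\mathrm{sa}}$. I then set
\[
\phi(\psi_a^\rho):=\langle v_m,d\ell_a(m)\rangle .
\]

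The main point is well-definedness. If $\psi_a^\rho=\psi_b^\rho$ with $a,b$ self-adjoint, then $\psi_{a-b}^\rho=0$. The bound \eqref{eqn:GNS-boundedness-condition} applied to $a-b$ then gives $\langle v_m,d\ell_{a-b}(m)\rangle=0$. The same bound shows $|\phi(\xi)|\le C_{v_m}\|\xi\|_\rho$ on $D_\rho$. So $\phi$ extends uniquely by continuity to a bounded real-linear functional on $V_\rho=\overline{D_\rho}$. Riesz's theorem on the real Hilbert space $V_\rho$ then yields a unique $\xi^\rho_{v_m}\in V_\rho$ with $\phi(\eta)=(\xi^\rho_{v_m},\eta)_\rho$ for all $\eta\in V_\rho$. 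Evaluating at $\eta=\psi_a^\rho$ gives \eqref{eqn:canonical-real-representative}.

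Uniqueness inside $V_\rho$ follows directly. If two vectors in $V_\rho$ satisfy \eqref{eqn:canonical-real-representative}, their difference $\zeta\in V_\rho$ is $(\cdot,\cdot)_\rho$-orthogonal to every $\psi_a^\rho$, hence to $D_\rho$, hence to its closure $V_\rho$. In particular $(\zeta,\zeta)_\rho=0$, so $\zeta=0$. I would stress that uniqueness holds only within $V_\rho$: the component orthogonal to $V_\rho$ in $\mathcal H_\rho^{\mathbb R}$ is unconstrained, and this is why the statement restricts to $V_\rho$.

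For the converse, suppose such a $\xi^\rho_{v_m}\in V_\rho$ exists. By Cauchy--Schwarz for the real inner product,
\[
|\langle v_m,d\ell_a(m)\rangle|\le\|\xi^\rho_{v_m}\|_\rho\,\|\psi_a^\rho\|_\rho .
\]
This is condition~\ref{cond:GNS-boundedness} with $C_{v_m}:=\|\xi^\rho_{v_m}\|_\rho$. If $\xi^\rho_{v_m}=0$, take any positive constant, for instance $C_{v_m}=1$, since the condition asks for $C_{v_m}>0$.

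I expect no serious obstacle. The only delicate step is well-definedness of $\phi$ on classes in $\mathscr A/\mathcal N_\rho$, and the boundedness hypothesis supplies it directly.
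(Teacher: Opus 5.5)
Your proposal is correct and follows the paper's proof essentially step by step: you build a well-defined bounded real-linear functional on the dense real span $D_\rho$, extend it to $V_\rho$, apply the real Riesz theorem, and use Cauchy--Schwarz for the converse. Two of your details tighten the paper's argument. First, you note explicitly that every element of $D_\rho$ is a single $\psi_a^\rho$ with $a$ self-adjoint. Second, you choose a positive constant when $\xi^\rho_{v_m}=0$; the paper simply sets $C_{v_m}:=\|\xi^\rho_{v_m}\|_\rho$, which may vanish even though the definition requires $C_{v_m}>0$.
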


\begin{proof}
Assume first that condition  \ref{cond:GNS-boundedness} in Definition \ref{def:gns-smooth-parametric-model} holds. 
%%%%%%%%%%%%%%%%%%%
Let
\begin{equation}
D_\rho
:=
\operatorname{span}_{\mathbb R}
\{\psi_a^\rho:\ a\in\mathscr{A}_{\mathrm{sa}}\}
\subset V_\rho .
\end{equation}
The formula
\begin{equation}
\varphi_{v_m}(\psi_a^\rho)
:=
\langle v_m,d\ell_a(m)\rangle,
\qquad
a\in\mathscr{A}_{\mathrm{sa}},
\end{equation}
defines a real-linear functional on $D_\rho$.
%%%%%%%%%%
Indeed, equation \eqref{eqn:GNS-boundedness-condition} implies that, if $b$ is in the GNS ideal of $\rho$ so that $\psi_{b}^{\rho}=0$, then $\varphi_{v_m}(\psi_a^\rho)=0$, and linearity follows by direct inspection.
%%%%%%%%%%%%%%%%
Moreover, equation \eqref{eqn:GNS-boundedness-condition} also implies that $\varphi_{v_m}$ is bounded on $D_{\rho}$. 
%%%%%%%%%%%%%%%%%%%%%%%%%%
Since $D_\rho$ is dense in $V_\rho$, it extends uniquely to a continuous real-linear functional on $V_\rho$. 
%%%%%%%%%%%%%%%%%%%%
By the real Riesz representation theorem, there exists a unique $\xi_{v_m}^{\rho}\in V_\rho$ such that
\begin{equation}
\varphi_{v_m}(\zeta)
=
\Re\langle\xi_{v_m}^{\rho},\zeta\rangle_\rho
\qquad
\forall \zeta\in V_\rho.
\end{equation}
Restricting this identity to $\zeta=\psi_a^\rho$ gives equation~\eqref{eqn:canonical-real-representative}.
%%%%%%%%%%%%%%%%%%%%%%%%%%%

Conversely, if $\xi_{v_m}^{\rho}\in V_\rho$ satisfies equation~\eqref{eqn:canonical-real-representative}, then
\begin{equation}
\bigl|\langle v_m,d\ell_a(m)\rangle\bigr|
=
\bigl|\Re\langle\xi_{v_m}^{\rho},\psi_a^\rho\rangle_\rho\bigr|
\leq
\|\xi_{v_m}^{\rho}\|_\rho\,\|\psi_a^\rho\|_\rho
\end{equation}
for all $a\in\mathscr{A}_{\mathrm{sa}}$. 
%%%%%%%%%%%%%%%%
Thus condition \ref{cond:GNS-boundedness} in Definition \ref{def:gns-smooth-parametric-model} holds with $C_{v_{m}}:=\|\xi_{v_m}^{\rho}\|_\rho$.
\end{proof}
 
Let $(M,\mathrm{i},\mathscr{A})$ be a GNS-smooth parametric model as in Definition \ref{def:gns-smooth-parametric-model}.
%%%%%%%%%%%%%%%%%%%%%%%%%%
From Proposition~\ref{prop:boundedness-canonical-representative}, we first obtain the real-linear map
\begin{equation}\label{eqn:real-gns-vector-lift}
\Xi_m:T_mM\longrightarrow V_{\mathrm{i}(m)}\subset\mathcal{H}_{\mathrm{i}(m)}^{\mathbb R},
\qquad
\Xi_m(v_m):=\xi_{v_m}^{\mathrm{i}(m)}.
\end{equation}
%%%%%%%%%%%%%%%%%%%%%
The vector $\Xi_m(v_m)$ is referred to as the \emph{real GNS vector lift} of $v_{m}$.
%%%%%%%%%%%%%%%%%%%%%%%%%%%%%
Then, applying the real Riesz isomorphism, we obtain the linear map
\begin{equation}\label{eqn:real-gns-lift}
L_m^{\mathbb R}:T_mM\longrightarrow
(\mathcal{H}_{\mathrm{i}(m)}^{\mathbb R})^*,
\qquad
L_m^{\mathbb R}(v_m)
:=
R_{\mathrm{i}(m)}^{\mathbb R}\bigl(\Xi_m(v_m)\bigr)=\eta_{\mathbb R}^{v_m}.
\end{equation}
%%%%%%%%%%%%%%%%%
The linear functional in equation \eqref{eqn:real-gns-lift} is referred to as the \emph{real GNS lift} of $v_{m}$.
%%%%%%%%%%%%%%%%%%%%%%%%%%%%%%%

Consider the canonical complex structure $J_{\mathrm{i}(m)}$ defined in Since $\mathcal{H}_{\mathrm{i}(m)}^{\mathbb{R}}$ as in equation \eqref{eqn:complex-structure-realification-gns}. 
%%%%%%%%%%%%%%%%%%%%%%
We shall identify real-dual representatives with
complex-dual representatives through the canonical real-linear isomorphism
\begin{equation}\label{eqn:canonical-complexification-real-dual}
\mathfrak C_{\mathrm{i}(m)}:
(\mathcal{H}_{\mathrm{i}(m)}^{\mathbb R})^*
\longrightarrow
\mathcal{H}_{\mathrm{i}(m)}^*,
\qquad
\mathfrak C_{\mathrm{i}(m)}(\alpha)(\zeta)
:=
\alpha(\zeta)-\mathrm i\,\alpha(J_{\mathrm{i}(m)}\zeta),
\end{equation}
where, with an evident abuse of notation, whenever $\zeta\in\mathcal H_{\mathrm i(m)}$ appears as an argument of $\alpha$ or $J_{\mathrm i(m)}$, it is understood as the same vector regarded as
an element of the realification $\mathcal H_{\mathrm i(m)}^{\mathbb R}$.
Equivalently, $\mathfrak C_{\mathrm{i}(m)}(\alpha)$ is the unique complex-linear
functional on $\mathcal{H}_{\mathrm{i}(m)}$ whose real part is $\alpha$.
%%%%%%%%%%%%%%%%%%%
We then set
\begin{equation}\label{eqn:complex-gns-lift-on-real-vectors}
L_m(v_m):=
\mathfrak C_{\mathrm{i}(m)}\bigl(L_m^{\mathbb R}(v_m)\bigr)
\in \mathcal{H}_{\mathrm{i}(m)}^* .
\end{equation}
%%%%%%%%%%%%%%%%%%%%%%%%%%
By construction, for every $a\in\mathscr{A}_{\mathrm{sa}}$, it holds
\begin{equation}\label{eqn:compatibility-gns-lift-on-real-vectors}
\langle v_m,d\ell_a(m)\rangle=\Re\left(
L_m(v_m)(\psi_a^{\mathrm i(m)})
\right).
\end{equation}
%%%%%%%%%%%%%%%%%%%%%%%%%%%%%%
By complexification, we obtain a complex-linear map
\begin{equation}\label{eqn:complex-gns-lift-on-complex-vectors}
L_m^{\mathbb C}:T_m^{\mathbb C}M
\longrightarrow
\mathcal{H}_{\mathrm{i}(m)}^*,
\qquad
L_m^{\mathbb C}(v_m+\mathrm{i}w_m)
:=
L_m(v_m)+\mathrm{i}L_m(w_m),
\end{equation} 
for each $m\in M$.
%%%%%%%%%%%%%%%%%%%%%%%%%%%%%%
For real tangent vectors one has
\begin{equation}\label{eqn:complex-gns-lift-on-real-vectors-2}
L_m(v_m)=R_{\mathrm i(m)}(\Xi_m(v_m)),
\end{equation}
where $\Xi_m(v_m)$ is seen as an element of $\mathcal{H}_{\mathrm{i}(m)}$ rather than of its realification $\mathcal H_{\mathrm i(m)}^{\mathbb R}$.
%%%%%%%%%%%%%%%%%%%
Since the Riesz map $R_{\mathrm i(m)}:\mathcal{H}_{\mathrm i(m)}\to\mathcal{H}_{\mathrm i(m)}^*$ is conjugate-linear with our convention for the GNS inner product, the Riesz vector corresponding to
$L_m^{\mathbb C}(v_m+\mathrm i w_m)$ is
\begin{equation}\label{eqn:complex-gns-vector-lift}
\Xi_m^{\mathbb C}(v_m+\mathrm i w_m)
:=
\Xi_m(v_m)-\mathrm i\Xi_m(w_m),
\end{equation}
and it holds
\begin{equation}
L_m^{\mathbb C}(v_m+\mathrm i w_m)
=
R_{\mathrm i(m)}
\left(
\Xi_m^{\mathbb C}(v_m+\mathrm i w_m)
\right).
\end{equation}
%%%%%%%%%%%%%%%%%%%%%
\begin{remark}
The construction above selects a canonical complex-linear lift by first selecting the Riesz vector
$\Xi_m(v_m)\in V_{\mathrm i(m)}$ and then applying the Riesz map. If, instead, one only asks for a complex-linear functional
$\eta\in\mathcal H_{\mathrm i(m)}^*$ satisfying
\begin{equation}
\langle v_m,d\ell_a(m)\rangle
=
\Re\left(\eta(\psi_a^{\mathrm i(m)})\right),
\qquad
\forall a\in\mathscr A_{\mathrm{sa}},
\end{equation}
then $\eta$ is generally not unique. Its real part is determined only on
$V_{\mathrm i(m)}$, and it may be changed by adding any continuous real-linear functional vanishing on
$V_{\mathrm i(m)}$.
Equivalently, the ambiguity is modeled by the annihilator
\begin{equation}
\operatorname{Ann}(V_{\mathrm i(m)})
\subset
(\mathcal H_{\mathrm i(m)}^{\mathbb R})^*.
\end{equation}
Thus the canonical object is not an arbitrary complex representative, but the vector
$\Xi_m(v_m)\in V_{\mathrm i(m)}$ selected by Proposition~\ref{prop:boundedness-canonical-representative}.
\end{remark}

For $m\in M$, the maps in equation~\eqref{eqn:complex-gns-lift-on-complex-vectors} assemble into a fiberwise complex-linear lift
\begin{equation}\label{eqn:global-complex-gns-lift}
L^{\mathbb C}:T^{\mathbb C}M
\longrightarrow
\mathrm{i}^*\bigl(\mathcal H^*(\mathscr A)\bigr),
\end{equation}
where $\mathrm{i}^*(\mathcal H^*(\mathscr A))$ denotes the pullback of the dual GNS fibration along $\mathrm{i}$.
%%%%%%%%%%%%%%%%%%%%%%%%%%%%%%%%%
We now pull back the fiberwise Hermitian products of the dual GNS fibration along $L^{\mathbb C}$. 
%%%%%%%%%%%%%%%%%%%
The Hilbert product on the dual GNS Hilbert space is the one transported by the conjugate-linear Riesz identification $R_\rho(\xi)(\zeta)=\langle\xi,\zeta\rangle_\rho$, namely
\begin{equation}\label{eqn:dual-gns-inner-product}
\left\langle
R_\rho(\xi),
R_\rho(\eta)
\right\rangle_\rho^*
:=
\langle \eta,\xi\rangle_\rho .
\end{equation}
For every $m\in M$, with $\rho=\mathrm{i}(m)$, define
\begin{equation}\label{eqn:induced-hermitian-form-on-complex-vectors}
K_m(z_m,z_m')
:=
\bigl\langle
L_m^{\mathbb C}(z_m),
L_m^{\mathbb C}(z_m')
\bigr\rangle_{\rho}^*,
\qquad
z_m,z_m'\in T_m^{\mathbb C}M.
\end{equation}
Equivalently, using equation~\eqref{eqn:complex-gns-vector-lift}, it holds
\begin{equation}\label{eqn:induced-hermitian-form-on-complex-vectors-riesz}
K_m(z_m,z_m')
=
\left\langle
\Xi_m^{\mathbb C}(z_m'),
\Xi_m^{\mathbb C}(z_m)
\right\rangle_{\rho}.
\end{equation}

Restricting the real and imaginary parts of $K_m$ to the real tangent space
$T_mM\subseteq T_m^{\mathbb C}M$, we obtain
\begin{equation}\label{eq:induced-bilinear-forms}
\begin{split}
G_m(v_m,w_m)
&:=
\Re K_m(v_m,w_m),\\
\Omega_m(v_m,w_m)
&:=
\Im K_m(v_m,w_m).
\end{split}
\end{equation}
The bilinear forms are well defined by the  construction of $L_m$ in equation~\eqref{eqn:complex-gns-lift-on-real-vectors}, or equivalently by the canonical construction of
the Riesz-vector map $\Xi_m$ in equation~\eqref{eqn:real-gns-vector-lift}.
%%%%%%%%%%%%%%%%%%%%%%%%%%%%%%%%%%%%%%%%%%%%
Condition~\ref{cond:injectivity} in Definition~\ref{def:gns-smooth-parametric-model} implies that $L_m$ is injective, and hence $G_m$ is positive definite.
Indeed, if $G_m(v_m,v_m)=0$, then $L_m(v_m)=0$, and therefore
\begin{equation}
\langle v_m,d\ell_a(m)\rangle=0,
\qquad
\forall a\in\mathscr{A}_{\mathrm{sa}}.
\end{equation}
By condition~\ref{cond:injectivity} in Definition~\ref{def:gns-smooth-parametric-model}, this gives $v_m=0$.
%%%%%%%%%%%%%%%%%%%%%%%%%%%%%%%
Moreover, by construction, $G_m$ is symmetric and $\Omega_m$ is skew-symmetric.
%%%%%%%%%%%%%%%%%%%%%%%%%%%%%

In principle, there is no guarantee that the forms in equation~\eqref{eq:induced-bilinear-forms}
glue together to smooth tensor fields on $M$, and we are thus led to the following definition.
%%%%%%%%%%%%%%%%%%%%%%%%%%%%%%%%%

\begin{definition}\label{def:regular model}
Let $(M,\mathrm{i},\mathscr{A})$ be a GNS-smooth parametric model as in Definition~\ref{def:gns-smooth-parametric-model}, and let $L^{\mathbb C}:T^{\mathbb C}M\to \mathrm{i}^*(\mathcal{H}^*(\mathscr{A}))$ be as in equation \eqref{eqn:global-complex-gns-lift}.
The model is said to be:
\begin{itemize}
\item \emph{Hermitian regular} if the Hermitian form $K_{m}$ in equation \eqref{eqn:induced-hermitian-form-on-complex-vectors} is jointly continuous on $T_{m}^{\mathbb{C}}M\times T_{m}^{\mathbb{C}}M$ and, for all\footnote{Here, $\mathfrak X^{\mathbb C}(M)
:=
\Gamma(T^{\mathbb C}M)
\cong
\mathfrak X(M)\otimes_{\mathbb R}\mathbb C$, and every $Z,W\in \mathfrak X^{\mathbb C}(M)$ may be written as
$$
Z=X+\mathrm{i}Y,
\qquad
W=U+\mathrm{i}V,
\qquad
X,Y,U,V\in\mathfrak X(M).
$$
}
$Z,W\in \mathfrak X^{\mathbb C}(M)$, the function
$$
m\longmapsto K_m(Z(m),W(m))
$$
is smooth as a complex-valued function\footnote{This means smoothness of both its real and imaginary parts.} on the real smooth manifold $M$;
\item \textbf{\emph{symmetric regular}} if the bilinear form $G_{m}$ in equation \eqref{eq:induced-bilinear-forms} is jointly continuous on $T_{m}M\times T_{m}M$ and, for all $X,Y\in\mathfrak X(M)$, the assignment
\begin{equation}\label{eqn: Riemannian metric}
m\longmapsto
G_m(X(m),Y(m))
:=
\Re K_m(X(m),Y(m))
\end{equation}
defines a smooth real-valued function on $M$, in which case $G$ is a smooth
weak\footnote{When $M$ is finite-dimensional, weak Riemannian metric tensors
are automatically strong.} Riemannian metric tensor on $M$;

\item \textbf{\emph{skew-symmetric regular}} if the bilinear form $\Omega_{m}$  in equation \eqref{eq:induced-bilinear-forms} is jointly continuous on $T_{m}M\times T_{m}M$ and, for all $X,Y\in\mathfrak X(M)$, the assignment
\begin{equation}\label{eqn: skew-tensor}
m\longmapsto
\Omega_m(X(m),Y(m))
:=
\Im K_m(X(m),Y(m))
\end{equation}
defines a smooth real-valued function on $M$, in which case $\Omega$ is a smooth $2$-form on $M$.
\end{itemize}
\end{definition}

\begin{remark}\label{rem:regularity-finite-vs-infinite}
The interplay between the conditions in Definitions~\ref{def:gns-smooth-parametric-model} and~\ref{def:regular model} depends strongly on the dimension.
%%%%%%%%%%%%%%%%%%%%%%%%%%%%%%%%%%%%%%%%%
For finite-dimensional faithful models on $\mathscr{A}=\mathcal{B}(\mathcal{H})$ with $\dim\mathcal{H}<\infty$, condition~\ref{cond:GNS-boundedness} in Definition~\ref{def:gns-smooth-parametric-model} is automatic, and Hermitian regularity follows from the smoothness of the map $\varrho\mapsto \mathcal{J}_\varrho^{-1}$, as shown in Proposition~\ref{prop:finite-dimensional-faithful-states}.
%%%%%%%%%%%%%%%%%%%%%%
In infinite dimensions, condition~\ref{cond:GNS-boundedness} becomes a genuine analytic restriction (the Hilbert--Schmidt integrability condition~\eqref{eqn:HS-integrability-SLD}), and regularity must be verified case by case, as in the displaced thermal model of subsection~\ref{subsec:displaced-thermal-gaussian-model}.
%%%%%%%%%%%%%%%%%%%%%%%%%%%%%%%

Moreover, the joint continuity of $K_{m}$, $G_{m}$, and $\Omega_{m}$ in Definition \ref{def:regular model} needs to be checked only if the parameter manifold $M$ is infinite-dimensional.
%%%%%%%%%%%%%%%%%%%

%%%%%%%%%%%%%%%%%%%%%%%%%%%%%%%%%%%%%
\end{remark}

\begin{remark}\label{rem: hermitianregularity-equivalence}
Hermitian regularity is equivalent to the simultaneous validity of symmetric
regularity and skew-symmetric regularity. Indeed, on real tangent vectors one
has
\[
K_m(v_m,w_m)=G_m(v_m,w_m)+i\Omega_m(v_m,w_m),
\]
and the values of \(K_m\) on \(T_m^{\mathbb C}M\) are recovered by
sesquilinearity. More explicitly, if
\[
Z=X+iY,\qquad W=U+iV,
\]
then, with the convention that \(K\) is conjugate-linear in the first argument
and linear in the second,
\[
\begin{aligned}
K(Z,W)
={}&
G(X,U)+G(Y,V)-\Omega(X,V)+\Omega(Y,U)
\\
&+i\big(
\Omega(X,U)+G(X,V)-G(Y,U)+\Omega(Y,V)
\big).
\end{aligned}
\]
Consequently, smoothness and pointwise continuity of \(K\) are equivalent to
the corresponding properties of both \(G\) and \(\Omega\).
\end{remark}

\begin{remark}\label{rem:partial-regularities-independent}
The preceding remark does not mean that symmetric regularity and skew-symmetric regularity are equivalent separately.
For instance, if $M$ is one-dimensional, then every skew-symmetric bilinear form on $T_mM$ vanishes. 
Consequently, whenever the pointwise construction of
$\Omega$ is defined on a one-dimensional model, one has
$\Omega = 0$, and the model is automatically skew-symmetric regular.
However, the symmetric tensor may still fail to be smooth for rank-changing models \cite{S2017a,SAGP2020}.
%%%%%%%%%%%%%%%%%%%%%
A typical source of such examples appears in qubit models whose image is a curve in the closed Bloch ball meeting a rank-changing stratum.
%%%%%%%%%%%%%%%%%%%%%%%%%
More explicitly, let $\mathscr{A}=M_2(\mathbb C)$ and consider a one-dimensional model
$$
\rho_t(a)=\operatorname{Tr}(\varrho_t a), \qquad \varrho_t=\frac{1}{2}\left(I+r(t)\cdot\sigma\right),
$$
where $r(t)$ parametrizes an ellipse in the closed Bloch ball whose width equals the Bloch radius, and whose height is strictly smaller than the Bloch radius.
%%%%%%%%%%%%%%%%%
Since the parameter space is one-dimensional, the induced two-form satisfies $\Omega=0$.
%%%%%%%%%%%%%%%%%%%%%%%%%%%
Nevertheless, if the canonical representatives fail to depend smoothly on $t$ at the rank-changing points of the curve, then the function $t\longmapsto G_t(\partial_t,\partial_t)$
need not be smooth.
\end{remark}

\begin{remark}
In the following sections, we present examples in which $\Omega$ identically vanishes (Section~\ref{sec:classical-case}), is symplectic (Section~\ref{sec:quantum-pure-states}), or fails to be closed (example~\ref{exmp:qubit-case-omega-not-closed} and subsection~\ref{subsec:displaced-thermal-gaussian-model}). 
%%%%%%%%%%%%%%%%%%%%%%%%
A structural analysis of the closedness question is given in Section~\ref{sec:closed-extensions}.
\end{remark}

An immediate consequence of the construction is that regularity is stable under restriction to immersed submanifolds.

\begin{proposition}\label{prop:submanifold}
Let $(M,\mathrm{i},\mathscr{A})$ be a GNS-smooth parametric model as in Definition~\ref{def:gns-smooth-parametric-model}, and let $k:N\hookrightarrow M$ be a smooth injective immersion.
Then the restricted model $(N,\mathrm{i}\circ k,\mathscr{A})$ is again GNS-smooth.
%%%%%%%%%%%%%%%%%%%%%%%%%%%%%%%%%%%%%%%%%
Moreover, the following hold:
\begin{enumerate}
\item if $(M,\mathrm{i},\mathscr{A})$ is Hermitian regular as in Definition \ref{def:regular model}, then
$(N,\mathrm{i}\circ k,\mathscr{A})$ is Hermitian regular and
\begin{equation}
K^N=k^*K^M;
\end{equation}

\item if $(M,\mathrm{i},\mathscr{A})$ is symmetric regular as in Definition \ref{def:regular model}, then
$(N,\mathrm{i}\circ k,\mathscr{A})$ is symmetric regular and
\begin{equation}
G^N=k^*G^M;
\end{equation}

\item if $(M,\mathrm{i},\mathscr{A})$ is skew-symmetric regular as in Definition \ref{def:regular model}, then
$(N,\mathrm{i}\circ k,\mathscr{A})$ is skew-symmetric regular and
\begin{equation}
\Omega^N=k^*\Omega^M.
\end{equation}
\end{enumerate}
\end{proposition}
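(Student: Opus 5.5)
The plan is to show that the canonical representatives of the restricted model coincide with the canonical representatives of the ambient model. Once this is established, every claim reduces to a pointwise identity plus a smoothness argument.

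Fix $n\in N$, set $m:=k(n)$ and $\rho:=\mathrm{i}(m)=(\mathrm{i}\circ k)(n)$. The expectation maps of the restricted model are $\ell_a^N=\ell_a\circ k$. They are smooth as compositions of smooth maps, which gives condition~\ref{cond:smoothness-of-expectations}. By the chain rule, $\langle u_n,d\ell_a^N(n)\rangle=\langle dk_n(u_n),d\ell_a(m)\rangle$ for all $u_n\in T_nN$ and $a\in\mathscr A_{\mathrm{sa}}$.
\begin{itemize}
\item Condition~\ref{cond:GNS-boundedness} for $N$ then holds with constant $C_{u_n}:=C_{dk_n(u_n)}$.
\item Condition~\ref{cond:injectivity} for $N$ follows by combining condition~\ref{cond:injectivity} for $M$ with the injectivity of $dk_n$, which holds because $k$ is an immersion.
\end{itemize}
Hence the restricted model is GNS-smooth.

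The key step is the identity
\[
\Xi^N_n(u_n)=\Xi^M_m\bigl(dk_n(u_n)\bigr).
\]
The space $V_\rho$ in Proposition~\ref{prop:boundedness-canonical-representative} depends only on the state $\rho$, not on the parameter manifold. The vector $\Xi^M_m(dk_n u_n)$ lies in $V_\rho$ and satisfies~\eqref{eqn:canonical-real-representative} for the functional $a\mapsto\langle u_n,d\ell^N_a(n)\rangle$, by the chain-rule identity above. The uniqueness part of Proposition~\ref{prop:boundedness-canonical-representative} then forces the identity. Complexifying via~\eqref{eqn:complex-gns-vector-lift} gives $\Xi^{N,\mathbb C}_n=\Xi^{M,\mathbb C}_m\circ dk_n^{\mathbb C}$. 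Formula~\eqref{eqn:induced-hermitian-form-on-complex-vectors-riesz} then yields $K^N_n(z,z')=K^M_m(dk^{\mathbb C}_n z,dk^{\mathbb C}_n z')$, that is, $K^N=k^*K^M$ pointwise. Taking real and imaginary parts on real vectors, as in~\eqref{eq:induced-bilinear-forms}, gives $G^N=k^*G^M$ and $\Omega^N=k^*\Omega^M$ pointwise, independently of any regularity.

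It remains to transfer the regularity properties. Joint continuity of $K^N_n$ (respectively $G^N_n$, $\Omega^N_n$) follows because it is the composition of the jointly continuous $K^M_m$ (respectively $G^M_m$, $\Omega^M_m$) with the continuous linear map $dk_n\times dk_n$.

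I expect smoothness to be the main obstacle. Given $X,Y\in\mathfrak X^{\mathbb C}(N)$, the fields $dk(X)$ and $dk(Y)$ are only vector fields along $k$, not vector fields on $M$. Regularity on $M$ is phrased in terms of global vector fields, so it does not apply to them directly.
\begin{itemize}
\item Since smoothness is local, I would fix $n$ and use the local immersion theorem. There is a neighbourhood $U\ni n$ such that $k|_U$ is an embedding onto a submanifold admitting a slice chart $W$ around $k(n)$.
\item In such a chart, I would extend $dk(X|_U)$ and $dk(Y|_U)$ to smooth vector fields $\tilde X,\tilde Y$ on $W$, constant in the transverse directions. Multiplying by a bump function supported in $W$ and equal to $1$ near $k(n)$, and extending by zero, produces global fields on $M$.
\item Near $n$ this gives $K^N(X,Y)=K^M(\tilde X,\tilde Y)\circ k$, which is smooth as a composition of smooth maps. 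The same argument applies to $G$ and $\Omega$ separately.
\end{itemize}
When $M$ is infinite-dimensional, this step requires the immersion to be split, so that slice charts and bump functions exist on the model space. I would state this implicitly, as part of the meaning of ``immersion'' in that setting, or alternatively argue through the tensorial, local-coordinate characterization of smoothness of $K^M$.
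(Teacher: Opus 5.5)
Your proposal is correct and follows the same route as the paper. The chain rule for $\ell^N_a=\ell_a\circ k$ gives the three GNS-smoothness conditions. Uniqueness in Proposition~\ref{prop:boundedness-canonical-representative}, with $V_\rho$ depending only on the state, identifies the canonical lift of $N$ with that of $M$ composed with $dk_n$; this yields $K^N=k^*K^M$ pointwise, and $G^N=k^*G^M$, $\Omega^N=k^*\Omega^M$ follow by taking real and imaginary parts.

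The paper stops at the pointwise identity and simply asserts that regularity is inherited. Your argument fills in two steps it leaves implicit: joint continuity of $K^N_n$, and the fact that $dk(X)$ is only a field along $k$ rather than a vector field on $M$. Resolving the latter with slice charts and bump functions, with the split-immersion caveat in infinite dimensions, is sound.
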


\begin{proof}
For $a\in\mathscr{A}_{\mathrm{sa}}$, set $\ell_a^M(m):=\mathrm{i}(m)(a)$ and $\ell_a^N(n):=(\mathrm{i}\circ k)(n)(a)$.
%%%%%%%%%%%%%%%%%%%%%%%%%%%%
Then $\ell_a^N=\ell_a^M\circ k$, so condition~\ref{cond:smoothness-of-expectations} of
Definition~\ref{def:gns-smooth-parametric-model} is inherited from the smoothness of
$\ell_a^M$ and $k$.

Let $n\in N$ and $u_n\in T_nN$. Since
\begin{equation}
\langle u_n,d \ell_a^N(n)\rangle
=
\langle T_nk(u_n),d \ell_a^M(k(n))\rangle ,
\qquad
\forall a\in\mathscr{A}_{\mathrm{sa}},
\end{equation}
condition~\ref{cond:GNS-boundedness} for the model $(M,\mathrm i,\mathscr A)$ gives
\begin{equation}
|\langle u_n,d \ell_a^N(n)\rangle|
\leq
C_{T_nk(u_n)}\,
\|\psi_a^{\mathrm i(k(n))}\|_{\mathrm i(k(n))}
=
C_{T_nk(u_n)}\,
\|\psi_a^{(\mathrm i\circ k)(n)}\|_{(\mathrm i\circ k)(n)} .
\end{equation}
Thus condition~\ref{cond:GNS-boundedness} in Definition \ref{def:gns-smooth-parametric-model} holds for the restricted model. 
%%%%%%%%%%%%%%%%%%%%%%%%%%%%%%%%%%%%%%%%%%%%
If
\begin{equation}
\langle u_n,d \ell_a^N(n)\rangle=0,
\qquad
\forall a\in\mathscr{A}_{\mathrm{sa}},
\end{equation}
then
\begin{equation}
\langle T_nk(u_n),d \ell_a^M(k(n))\rangle=0,
\qquad
\forall a\in\mathscr{A}_{\mathrm{sa}}.
\end{equation}
Condition \ref{cond:injectivity} in Definition \ref{def:gns-smooth-parametric-model} for $(M,\mathrm{i},\mathscr{A})$ gives $T_nk(u_n)=0$, and since $k$ is an immersion,
$u_n=0$. 
%%%%%%%%%%%%%%%%%%%
Hence the restricted model is GNS-smooth as in Definition \ref{def:gns-smooth-parametric-model}.
%%%%%%%%%%%%%%%%%%%%%%%%%%%
Now, since $(\mathrm{i}\circ k)(n)=\mathrm{i}(k(n))$, the original and restricted models have the same GNS fiber and the same subspace
$V_{\mathrm{i}(k(n))}$ at the corresponding point. By uniqueness of the canonical representative,
\begin{equation}
L_n^{N,\mathbb C}(z_n)
=
L_{k(n)}^{M,\mathbb C}
\bigl(T_n^{\mathbb C}k(z_n)\bigr),
\qquad
\forall z_n\in T_n^{\mathbb C}N.
\end{equation}
Therefore
\begin{equation}
\begin{split}
K_n^N(z_n,z_n')
&=
\left\langle
L_n^{N,\mathbb C}(z_n),
L_n^{N,\mathbb C}(z_n')
\right\rangle_{\mathrm{i}(k(n))}^*
\\
&=
K_{k(n)}^M
\left(
T_n^{\mathbb C}k(z_n),
T_n^{\mathbb C}k(z_n')
\right).
\end{split}
\end{equation}
Thus
\begin{equation}
K^N=k^*K^M.
\end{equation}
Consequently, Hermitian regularity is preserved under restriction.
%%%%%%%%%%%%%%%%%%%%%%
Taking real and imaginary parts on real tangent vectors gives
\begin{equation}
G^N=k^*G^M,
\qquad
\Omega^N=k^*\Omega^M.
\end{equation}
Thus symmetric regularity and skew-symmetric regularity are also preserved under restriction.
\end{proof}

\section{Classical models}\label{sec:classical-case}

The purpose of this section is to show that an important class of statistical models of probability measures used in classical information geometry are Hermitian regular GNS-smooth parametric models in the sense of Definitions \ref{def:gns-smooth-parametric-model} and \ref{def:regular model}.
%%%%%%%%%%%%%%%%%%%%%%%%%

In classical information geometry, one usually works with statistical models satisfying regularity assumptions strong enough to make logarithmic derivatives and the Fisher--Rao metric tensor well defined.
%%%%%%%%%%%%%%%%%%%%%
A natural and sufficiently broad framework is provided by the $k$-integrable statistical models of Ay, Jost, L{\^e}, and Schwachh{\"o}fer \cite{AJLS2015,AJLS2017}.
%%%%%%%%%%%%%%%%%%%%%%
In particular, for $k=2$, logarithmic derivatives are square-integrable and their $\mathcal{L}^2$-pairing gives the Fisher--Rao tensor (see \cite[Def.~4.4 and Def.~4.5]{AJLS2015} and \cite[Ch.~4]{AJLS2017}).
%%%%%%%%%%%%%%%%%%%%%%%%%%%%%%%

In this section we consider the dominated faithful case where all probability measures in the model are absolutely continuous with respect to a fixed $\sigma$-finite reference measure $\nu$, and their densities are strictly positive $\nu$-almost everywhere.
%%%%%%%%%%%%%%%%%%%%%%%%%%%%%%%%%%%%%%%%%%%%%%%%%%%
This is precisely the setting naturally associated with normal faithful states on $\mathscr{A}=\mathcal{L}^\infty(X,\nu)$.
%%%%%%%%%%%%%%%%%%%%%%%%%%%%%%%%%%%%%%%%

\begin{definition}\label{def:dominated-2-integrable-statistical-model}
Let $(X,\nu)$ be a $\sigma$-finite measure space, and let $M$ be a real
smooth Banach manifold. A faithful dominated $2$-integrable statistical
model with regular density function is a map
\begin{equation}
p:X\times M\longrightarrow\mathbb R
\end{equation}
satisfying the following conditions.

\begin{enumerate}
\item For every $m\in M$, the function $p(\cdot;m)$ is measurable and satisfies
\begin{equation}
p(x;m)>0
\quad \nu\text{-a.e.},
\qquad
\int_X p(x;m)\,\mathrm d\nu(x)=1.
\end{equation}
We denote by $\rho_m$ the corresponding probability measure such that
\begin{equation}
\frac{\mathrm d\rho_m}{\mathrm d\nu}(x):=p(x;m)\,.
\end{equation}

\item The associated map of density classes
\begin{equation}
P:M\longrightarrow\mathcal{L}^{1}_{\mathbb{R}}(X,\nu),
\qquad
P(m):=[p(\cdot;m)]_\nu,
\end{equation}
is smooth.

\item\label{cond:L1-partial-derivatives} For every $m\in M$ and every $v_m\in T_mM$, the directional derivative $\partial_{v_m}p(x;m)$ exists for $\nu$-almost every $x\in X$,   and is such that 
\begin{equation}
[\partial_{v_m}p(\cdot;m)]_\nu
=
T_{m}P(v_m)\in T_{P(m)}\mathcal{L}^{1}_{\mathbb{R}}(X,\nu)\cong \mathcal{L}^{1}_{\mathbb{R}}(X,\nu).
\end{equation}

\item For every $m\in M$ and every $v_m\in T_mM$, the logarithmic derivative, or score,
\begin{equation}\label{eqn:classical-score-definition}
s_{v_m}(x)
:=
\frac{\partial_{v_m}p(x;m)}{p(x;m)}
\end{equation}
belongs to $\mathcal{L}^2_{\mathbb R}(X,\rho_m)$.

\item\label{cond:continuity-fisher-rao} For every $m\in M$, the formula
\begin{equation}\label{eqn:classical-FR-pointwise}
G_m^{\mathrm{FR}}(v_m,w_m)
:=
\int_X s_{v_m}(x)s_{w_m}(x)\,\mathrm d\rho_m(x)
\end{equation}
defines a continuous bilinear form on $T_mM\times T_mM$.

\item\label{cond:smoothness-fisher-rao} For every pair of smooth vector fields $U,V\in\mathfrak X(M)$, the function
\begin{equation}\label{eqn:score-tensors-classical}
m\longmapsto
G_m^{\mathrm{FR}}(U(m),V(m))
=
\int_X
s_{U(m)}(x)s_{V(m)}(x)
\,\mathrm d\rho_m(x)
\end{equation}
is smooth.
\end{enumerate}

The tensor
\begin{equation}\label{eqn:classical-FR-pre}
G^{\mathrm{FR}}(U,V)(m)
:=
\int_X
s_{U(m)}(x)s_{V(m)}(x)
\,\mathrm d\rho_m(x)
\end{equation}
is called the \emph{Fisher--Rao tensor} of the model.
%%%%%%%%%%%%%%
The model is called \emph{nondegenerate} if $G^{\mathrm{FR}}$ is positive definite, hence a weak Riemannian metric tensor on $M$.
\end{definition}

\begin{remark}
The continuity requirement in equation~\eqref{eqn:classical-FR-pointwise}
is included because, on an infinite-dimensional parameter manifold, it is
not a consequence of square-integrability of the scores alone.  Indeed, the
condition $s_{v_m}\in\mathcal L^2(X,\rho_m)$ for every $v_m\in T_mM$ only
says that the score map
\[
S_m:T_mM\longrightarrow \mathcal L^2_{\mathbb R}(X,\rho_m),
\qquad
S_m(v_m):=s_{v_m},
\]
is everywhere defined.  It does not imply that $S_m$ is continuous.  The
continuity of $G_m^{\mathrm{FR}}$ is equivalent to the continuity of the
quadratic form
\[
v_m\longmapsto \|S_m(v_m)\|^2_{\mathcal L^2(X,\rho_m)}
\]
and, in particular, is guaranteed if $S_m$ is continuous.

If $M$ is finite-dimensional, then every everywhere-defined linear map
$S_m:T_mM\to\mathcal L^2_{\mathbb R}(X,\rho_m)$ is automatically
continuous.  Thus, in finite dimension, the continuity condition in point 5. is automatic
once the scores are defined for all tangent vectors.
\end{remark}

\begin{remark}\label{rem:derivative-under-integral}
Condition~\ref{cond:L1-partial-derivatives} in Definition \ref{def:dominated-2-integrable-statistical-model} makes explicit the usual passage of derivatives under the integral sign.
%%%%%%%%%%%%%%%%%%%%%%%%%%%%%%%%
Let $a\in\mathcal{L}^\infty_{\mathbb R}(X,\nu)$ and consider the expectation-value function $\ell_a:M\to\mathbb R$ given by
\begin{equation}
\ell_a(m):=\int_X a(x)p(x;m)\,\mathrm d\nu(x)\,.
\end{equation}
%%%%%%%%%%%%%%%%%%%%%%%%%
If $\gamma_m^{v_m}:(-\epsilon,\epsilon)\to M$ is a smooth curve such that $\gamma_m^{v_m}(0)=m$ and $\dot\gamma_m^{v_m}(0)=v_m$, then
\begin{equation}
\langle v_{m}, \mathrm{d}\ell_{a}(m)\rangle=\frac{\mathrm d}{\mathrm dt}\bigg|_{t=0}\ell_a(\gamma_m^{v_m}(t))=\frac{\mathrm d}{\mathrm dt} \bigg|_{t=0}\int_X a(x)p(x;\gamma_m^{v_m}(t))\,\mathrm d\nu(x).
\end{equation}
%%%%%%%%%%%%%%%%%%%%%
By Condition~\ref{cond:L1-partial-derivatives} in Definition \ref{def:dominated-2-integrable-statistical-model}, the curve $t\longmapsto [p(\cdot;\gamma_m^{v_m}(t))]_\nu$ is differentiable in $\mathcal{L}^1(X,\nu)$ at $t=0$, with derivative $[\partial_{v_m}p(\cdot;m)]_\nu$.
%%%%%%%%%%%%%%%%%%%%%%%%%%%%%%
Since $a\in\mathcal{L}^{\infty}_{\mathbb{R}}(X,\nu)$ defines a continuous linear functional on
$\mathcal{L}^{1}_{\mathbb{R}}(X,\nu)$ by integration, it follows that
\begin{equation}
\frac{\mathrm d}{\mathrm dt}\bigg|_{t=0}
\int_X a(x)p(x;\gamma_m^{v_m}(t))\,\mathrm d\nu(x)
=
\int_X a(x)\partial_{v_m}p(x;m)\,\mathrm d\nu(x).
\end{equation}
Using the score in equation \eqref{eqn:classical-score-definition}, this can be rewritten as
\begin{equation}\label{eqn:compatibility-condition-classical-score}
\langle v_{m}, \mathrm{d}\ell_{a}(m)\rangle
=
\int_X a(x)s_{v_m}(x)\,\mathrm d\rho_m(x).
\end{equation}
\end{remark}

Every faithful dominated $2$-integrable statistical model determines a parametric model $(M,\mathrm{i},\mathscr{A})$ of faithful normal states in the sense of the operator-algebraic approach to information geometry \cite{CDJS2024}.
%%%%%%%%%%%%%%%%%%%%%%%%%
Indeed, set $\mathscr{A}=\mathcal{L}^\infty(X,\nu)$, and, for each $m\in M$, define a state $\rho_m\equiv \mathrm{i}(m)$ on $\mathscr{A}$ by
\begin{equation}\label{eqn:classical-state-density}
\rho_m(a)
=
\int_X a(x)p(x;m)\,\mathrm d\nu(x),
\qquad
a\in\mathscr{A}.
\end{equation}
%%%%%%%%%%%%%%%%%%%%%%%%%%%%%%%%%
Since $p(\cdot;m)>0$ $\nu$-almost everywhere, $\rho_m$ is a normal faithful state on $\mathscr{A}$.
%%%%%%%%%%%%%%%%%%%%%

The next proposition shows that, when the  faithful dominated $2$-integrable statistical model is nondegenerate, the associated operator-algebraic model is GNS-smooth and Hermitian regular in the sense of Definitions~\ref{def:gns-smooth-parametric-model} and~\ref{def:regular model}.
Moreover, the canonical GNS representative is the classical score, the induced Riemannian metric is the Fisher--Rao metric, and the induced skew-symmetric tensor vanishes identically.
%%%%%%%%%%%%%%%%%%%%%%%%%%%%%%%%%%%%%%%

\begin{proposition}\label{prop:classical-case}
Let $(M,X,\nu,p)$ be a nondegenerate faithful dominated $2$-integrable
statistical model with regular density function in the sense of
Definition~\ref{def:dominated-2-integrable-statistical-model}.
Let $\mathscr{A}=\mathcal{L}^\infty(X,\nu)$ and let
$(M,\mathrm i,\mathscr{A})$ be the associated operator-algebraic parametric
model of normal faithful states defined by
equation~\eqref{eqn:classical-state-density}.
Then $(M,\mathrm i,\mathscr{A})$ is a Hermitian regular GNS-smooth parametric model as in Definitions~\ref{def:gns-smooth-parametric-model} and \ref{def:regular model}.
%%%%%%%%%%%%%%%%%%%%%%%%%%%%%%

For every $v_m\in T_mM$, equation~\eqref{eqn:real-gns-vector-lift} is given by
\begin{equation}\label{eqn:real-gns-vector-lift-classical-case}
\Xi_m(v_m)=s_{v_m},
\end{equation}
where $s_{v_m}$ is the logarithmic derivative in
$\mathcal{L}^2_{\mathbb R}(X,\rho_m)$, while equation \eqref{eqn:complex-gns-vector-lift} is given by
\begin{equation}\label{eqn:complex-gns-vector-lift-classical-case}
\Xi_{m}^{\mathbb{C}}(z_m)=\widehat s_{z_m}:=s_{v_m}-\mathrm i s_{w_m},
\end{equation}
with $z_m=v_m+\mathrm i w_m\in T_m^{\mathbb C}M$.
%%%%%%%%%%%%%%%%%%%%%%%%%%%%%%%%
Moreover, the induced Hermitian form in equation \eqref{eqn:induced-hermitian-form-on-complex-vectors-riesz} is
\begin{equation}\label{eqn:classical-Hermitian-tensor}
K_m(z_m,z_m')
=
\int_X
\overline{\widehat s_{z_m'}(x)}
\widehat s_{z_m}(x)
\,\mathrm d\rho_m(x).
\end{equation}
%%%%%%%%%%%%%%%%%%%%%%%%%%%
On real tangent vectors, its imaginary part $\Omega_m$ vanishes, and its real part is
\begin{equation}\label{eqn:classical-FR}
G_m(v_m,w_m)
=
\int_X
s_{v_m}(x)s_{w_m}(x)
\,\mathrm d\rho_m(x).
\end{equation}
Thus $G=G^{\mathrm{FR}}$ and $\Omega=0$.
%%%%%%%%%%%%%%%%
\end{proposition}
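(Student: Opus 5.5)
The plan is to identify the GNS data of $\rho_m$ concretely, check the three conditions of Definition~\ref{def:gns-smooth-parametric-model}, and then use uniqueness in Proposition~\ref{prop:boundedness-canonical-representative} to identify $\Xi_m$ with the score.

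\emph{Step 1: GNS data.} For $\mathscr A=\mathcal L^\infty(X,\nu)$ and the faithful normal state $\rho_m$, the map $[a]_{\rho_m}\mapsto a$ sends $\mathscr A/\mathcal N_{\rho_m}$ isometrically into $\mathcal L^2(X,\rho_m)$, since $\rho_m(a^*a)=\int|a|^2\,\mathrm d\rho_m$. Bounded functions are dense in $\mathcal L^2(X,\rho_m)$, so this map extends to a unitary $\mathcal H_{\rho_m}\cong\mathcal L^2(X,\rho_m)$ under which $\psi_a^{\rho_m}=a$. Real bounded functions are dense in $\mathcal L^2_{\mathbb R}(X,\rho_m)$, so
\[
V_{\rho_m}=\mathcal L^2_{\mathbb R}(X,\rho_m)\subset \mathcal L^2(X,\rho_m)^{\mathbb R}.
\]

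\emph{Step 2: GNS-smoothness.} Condition~\ref{cond:smoothness-of-expectations} holds because $\ell_a$ is the composition of the smooth map $P$ with the continuous linear functional $f\mapsto\int af\,\mathrm d\nu$ on $\mathcal L^1$. By Remark~\ref{rem:derivative-under-integral},
\[
\langle v_m,d\ell_a(m)\rangle=\int a\,s_{v_m}\,\mathrm d\rho_m = \Re\langle s_{v_m},\psi_a^{\rho_m}\rangle_{\rho_m}
\]
for real $a$, where the last equality uses that $s_{v_m}$ is real. Cauchy--Schwarz then gives condition~\ref{cond:GNS-boundedness} with $C_{v_m}=\|s_{v_m}\|_{\mathcal L^2(\rho_m)}$. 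For condition~\ref{cond:injectivity}, suppose all the derivatives $\langle v_m,d\ell_a(m)\rangle$ vanish. Then $s_{v_m}$ is orthogonal to a dense subspace of $\mathcal L^2_{\mathbb R}(X,\rho_m)$, so $s_{v_m}=0$. Hence $G^{\mathrm{FR}}_m(v_m,v_m)=0$, and nondegeneracy gives $v_m=0$.

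\emph{Step 3: identification of the lift and of $K$.} Since $s_{v_m}\in V_{\rho_m}$ satisfies \eqref{eqn:canonical-real-representative}, uniqueness in Proposition~\ref{prop:boundedness-canonical-representative} gives $\Xi_m(v_m)=s_{v_m}$. Then \eqref{eqn:complex-gns-vector-lift} gives $\Xi^{\mathbb C}_m(z_m)=s_{v_m}-\mathrm i s_{w_m}$ for $z_m=v_m+\mathrm i w_m$. Substituting into \eqref{eqn:induced-hermitian-form-on-complex-vectors-riesz}, with the inner product conjugate-linear in the first slot, yields \eqref{eqn:classical-Hermitian-tensor}. For real $v_m,w_m$ the integrand $s_{w_m}s_{v_m}$ is real, so $\Omega_m=0$ and $G_m=G^{\mathrm{FR}}_m$.

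\emph{Step 4: Hermitian regularity.} By Remark~\ref{rem: hermitianregularity-equivalence}, it suffices to show symmetric and skew-symmetric regularity. Skew-symmetric regularity is trivial because $\Omega=0$. Symmetric regularity is exactly conditions~\ref{cond:continuity-fisher-rao} and~\ref{cond:smoothness-fisher-rao} of Definition~\ref{def:dominated-2-integrable-statistical-model}.

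The step needing the most care is Step~1, specifically the density claims: the image of $\mathscr A/\mathcal N_{\rho_m}$ must be all of $\mathcal L^2(\rho_m)$, and $V_{\rho_m}$ must equal the real subspace. Faithfulness is what makes the quotient map injective. Density follows because $\rho_m$ is a finite measure, so truncations $a\mathbf 1_{\{|a|\le n\}}$ converge to $a$ in $\mathcal L^2$ by dominated convergence.
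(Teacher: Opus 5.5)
Your proposal is correct and follows the paper's proof essentially step by step. You identify $\mathcal H_{\rho_m}$ with $\mathcal L^2(X,\rho_m)$ and $V_{\rho_m}$ with $\mathcal L^2_{\mathbb R}(X,\rho_m)$, check the three GNS-smoothness conditions via the score identity of Remark~\ref{rem:derivative-under-integral} and nondegeneracy, read off $\Xi_m(v_m)=s_{v_m}$, and get Hermitian regularity from conditions~\ref{cond:continuity-fisher-rao}--\ref{cond:smoothness-fisher-rao} together with Remark~\ref{rem: hermitianregularity-equivalence}; the only cosmetic difference is that you conclude $s_{v_m}=0$ in the injectivity step from density of real bounded functions in $\mathcal L^2_{\mathbb R}(X,\rho_m)$, whereas the paper uses that bounded functions separate elements of $\mathcal L^1(X,\rho_m)$, which is equivalent here.
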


\begin{proof}
For each $m\in M$, the GNS Hilbert space $\mathcal{H}_{\rho_m}$ is canonically
identified with $\mathcal{L}^2(X,\rho_m)$, and the GNS vector
$\psi_a^{\rho_m}$ associated with $a\in\mathcal{L}^\infty(X,\nu)$ is the class of $a$ in $\mathcal{L}^2(X,\rho_m)$ because
\begin{equation}
\langle \psi_a^{\rho_m},\psi_b^{\rho_m}\rangle_{\rho_m}
=
\rho_m(a^*b)
=
\int_X \overline{a(x)}b(x)\,\mathrm d\rho_m(x).
\end{equation}
Since $\rho_m$ is faithful, the null ideal is the usual null space modulo
$\rho_m$-almost everywhere equality.
%%%%%%%%%%%%%%%%%%%%%%%%%%%

Since $\mathscr{A}$ is commutative, self-adjoint elements correspond to real-valued functions. 
%%%%%%%%%%%%%%%%%
Moreover, bounded real-valued functions are dense in $\mathcal{L}^2_{\mathbb R}(X,\rho_m)$. 
%%%%%%%%%%%%%%%%%%%%
Hence the distinguished real subspace in equation~\eqref{eqn:V-rho-definition} is
\begin{equation}\label{eqn:vrho-classical-case}
V_{\rho_m}
=
\mathcal{L}^2_{\mathbb R}(X,\rho_m).
\end{equation}

We first prove that $(M,\mathrm i,\mathscr{A})$ is GNS-smooth as in Definition~\ref{def:gns-smooth-parametric-model}. 
%%%%%%%%%%%%%%%%%%%%%%%%%%
Let $a\in\mathcal{L}^\infty_{\mathbb R}(X,\nu)$ and consider the expectation-value function
\begin{equation}
\ell_a(m)=\rho_m(a)
=
\int_X a(x)p(x;m)\,\mathrm d\nu(x).
\end{equation}
By Definition~\ref{def:dominated-2-integrable-statistical-model}, the map
$P\colon M\rightarrow\mathcal{L}^1_{\mathbb R}(X,\nu)$ is smooth, and
$a\in\mathcal{L}^\infty_{\mathbb R}(X,\nu)$ defines a continuous linear functional $\Lambda_a$ on $\mathcal{L}^1_{\mathbb R}(X,\nu)$ by integration.
Since $\ell_a=\Lambda_a\circ P$, the map $\ell_a$ is smooth. This proves
condition~\ref{cond:smoothness-of-expectations} in
Definition~\ref{def:gns-smooth-parametric-model}.

Let now $m\in M$ and $v_m\in T_mM$. By
Remark~\ref{rem:derivative-under-integral}, one has
\begin{equation}\label{eqn:classical-score-representation-proof}
\langle v_m,\mathrm d\ell_a(m)\rangle
=
\int_X a(x)s_{v_m}(x)\,\mathrm d\rho_m(x),
\end{equation}
where $s_{v_m}$ is the score in
equation~\eqref{eqn:classical-score-definition}. 
%%%%%%%%%%%%%%%%%%%%%%%
Since
$s_{v_m}\in\mathcal{L}^2_{\mathbb R}(X,\rho_m)\subset\mathcal{L}^{2}(X,\rho_{m})$, condition~\ref{cond:GNS-boundedness} in
Definition~\ref{def:gns-smooth-parametric-model} holds with $C_{v_{m}}=\Vert s_{v_m} \Vert$.
%%%%%%%%%%%%%%%%%

It remains to verify condition~\ref{cond:injectivity} in
Definition~\ref{def:gns-smooth-parametric-model}. 
%%%%%%%%%%%%%%%%%%%%%%%%%
Assume that
\begin{equation}
\langle v_m,\mathrm d\ell_a(m)\rangle
=
\int_X a(x)s_{v_m}(x)\,\mathrm d\rho_m(x)
=
0,
\end{equation}
for all $a\in\mathscr{A}_{\mathrm{sa}}=\mathcal{L}^\infty_{\mathbb R}(X,\nu)$.
%%%%%%%%%%%%%%
Since $s_{v_m}\in\mathcal{L}^2(X,\rho_m)\subseteq\mathcal{L}^1(X,\rho_m)$, it
follows that $s_{v_m}=0$ $\rho_m$-almost everywhere. Hence
\begin{equation}
G_m^{\mathrm{FR}}(v_m,v_m)=0.
\end{equation}
Since the model is nondegenerate, $G^{\mathrm{FR}}$ is positive definite, and therefore $v_m=0$,  thus proving condition~\ref{cond:injectivity}.
%%%%%%%%%%%%%%%%%%%%%%%%%%%%%%%

Equation \eqref{eqn:real-gns-vector-lift-classical-case} follows from equations \eqref{eqn:real-gns-vector-lift}, \eqref{eqn:vrho-classical-case} and \eqref{eqn:classical-score-representation-proof}.
%%%%%%%%%%%%%%%%%%%%%%%%
Consequently, equation \eqref{eqn:complex-gns-vector-lift-classical-case} follows from equation \eqref{eqn:complex-gns-vector-lift}. 
%%%%%%%%%%%%%%%%%%%%%%%%%%
Using equation~\eqref{eqn:induced-hermitian-form-on-complex-vectors-riesz}, we obtain
\begin{equation}
K_m(z_m,z_m')
=
\left\langle
\widehat s_{z_m'},
\widehat s_{z_m}
\right\rangle_{\rho_m}
=
\int_X
\overline{\widehat s_{z_m'}(x)}
\widehat s_{z_m}(x)
\,\mathrm d\rho_m(x).
\end{equation}
%%%%%%%%%%%%%%%%%%%%%%%%%
If $v_m,w_m\in T_mM$ are real tangent vectors, then
$\widehat s_{v_m}=s_{v_m}$ and $\widehat s_{w_m}=s_{w_m}$ are real-valued.
%%%%%%%%%%%%%%%%%%%%%%%%%%%%
Therefore equation~\eqref{eq:induced-bilinear-forms} gives
\begin{equation}
G_m(v_m,w_m)
=
G_m^{\mathrm{FR}}(v_m,w_m),
\qquad
\Omega_m(v_m,w_m)=0.
\end{equation}
%%%%%%%%%%%%%%%%%%%%%%%%%%%%
By conditions \ref{cond:continuity-fisher-rao} and \ref{cond:smoothness-fisher-rao} in Definition~\ref{def:dominated-2-integrable-statistical-model}, it follows that $G^{\mathrm{FR}}$ is a smooth weak Riemannian metric tensor.
%%%%%%%%%%%%%%%%%%%%%%%%%%%%%%%%%%%%%%%%
Since $\Omega=0$, the skew-symmetric tensor is trivially smooth. 
%%%%%%%%%%%%%%%%%%
Therefore, by Remark~\ref{rem: hermitianregularity-equivalence}, the model is Hermitian regular as in Definition~\ref{def:regular model}.
%%%%%%%%%%%%%%%%%%%%%%
\end{proof}

\section{Quantum pure states}\label{sec:quantum-pure-states}

We now consider the model of pure normal states on the algebra of bounded operators on a separable complex Hilbert space $\mathcal{H}$, where the Hilbert product is conjugate-linear in the first argument and linear in the second.
%%%%%%%%%%%%%%
The purpose of the section is to show that these states form a Hermitian regular GNS-smooth parametric model as in Definitions \ref{def:gns-smooth-parametric-model} and \ref{def:regular model}, and that the resulting geometric tensors coincide (up to normalization and sign/order convention) with the standard Fubini--Study metric tensor and symplectic form.
%%%%%%%%%%%%%%%%%%%%%%%%%%%

Let $\mathscr{A}=\mathcal{B}(\mathcal{H})$ and $M=\mathbb{CP}(\mathcal{H})$, where $\mathbb{CP}(\mathcal{H}):=\mathbb{S}(\mathcal{H})/U(1)$ denotes the projective Hilbert space of $\mathcal{H}$, with $\mathbb{S}(\mathcal{H})$ the unit sphere in $\mathcal{H}$. 
%%%%%%%%%%%%%%%%%%%
To make contact with parametric models in information geometry which use real manifolds, we regard $M=\mathbb{CP}(\mathcal{H})$ as a real smooth Hilbert manifold, although it also carries a natural complex Hilbert manifold structure.
%%%%%%%%%%%%%%%%%%%%%%%%%%%%%%%
Define the map $\mathrm{i}\colon M
\longrightarrow
\mathcal{S}(\mathscr{A})$ setting
\begin{equation}
\mathrm{i}([\psi])(a)=\rho_\psi(a)=\langle\psi,a\psi\rangle, \quad \forall\;a\in\mathscr{A},
\end{equation}
where $\psi\in\mathcal{H}$ is any normalized representative of the ray $[\psi]$.
%%%%%%%%%%%%%%%%%%%%%%%%%
This map is well defined because $\rho_{e^{\mathrm i\theta}\psi}=\rho_\psi$.
%%%%%%%%%%%%%%%%%%%%%%%%%%

For each normalized representative $\psi$ of the ray $[\psi]$, the GNS Hilbert
space $\mathcal{H}_{\rho_\psi}$ may be identified with $\mathcal{H}$ by the map
\begin{equation}\label{eqn:pure-state-GNS-identification}
U_\psi:\mathcal{H}_{\rho_\psi}\longrightarrow\mathcal{H},
\qquad
U_\psi(\psi_a^{\rho_\psi})=a\psi,
\qquad a\in \mathscr{A}.
\end{equation}
Indeed, it holds
\begin{equation}
\left\langle
\psi_a^{\rho_\psi},
\psi_b^{\rho_\psi}
\right\rangle_{\rho_\psi}
=
\rho_\psi(a^*b)
=
\langle a\psi,b\psi\rangle,
\end{equation}
so $U_\psi$ is an isometry on the dense subspace of GNS vectors. 
%%%%%%%%%%%%%%%%%%%%%
Its range is all of $\mathcal{H}$, because for every $\varphi\in\mathcal{H}$ the rank-one operator
\begin{equation}
\theta_{\varphi,\psi}(\eta):=\varphi\langle\psi,\eta\rangle
\end{equation}
satisfies $\theta_{\varphi,\psi}\psi=\varphi$. 
%%%%%%%%%%%%%%%%%%%%%%%%%%%
Hence $U_\psi$ extends to a unitary isomorphism. 
%%%%%%%%%%%%%%%%%%%%%%
This identification depends on the choice of normalized
representative because, if $\widetilde\psi=e^{\mathrm i\theta}\psi$, then
\begin{equation}
U_{\widetilde\psi}
=
e^{\mathrm i\theta}U_\psi .
\end{equation}
%%%%%%%%%%%%%%%%%%%%%%%%%%%%%%%
The tensors obtained below are independent of this choice.
%%%%%%%%%%%%%%%%%%%%%%%%%%%%%%%
Under the identification induced by $U_\psi$, the distinguished real subspace
$V_{\rho_\psi}$ in equation~\eqref{eqn:V-rho-definition} is
\begin{equation}\label{eqn:pure-state-Vrho}
U_\psi(V_{\rho_\psi})
=
\{\varphi\in\mathcal{H}:\langle\psi,\varphi\rangle\in\mathbb R\}.
\end{equation}
%%%%%%%%%%%%%%%%%%%%%%%%%%%%%%
Indeed, if $\varphi=U_{\psi}(\psi_{a}^{\rho_{\psi}})=a\psi$ with $a=a^*$, then
\begin{equation}
\langle\psi,\varphi\rangle
=
\langle\psi,a\psi\rangle
\in\mathbb R.
\end{equation}
%%%%%%%%%%%%%%%%%%%%%%%%%%%%%
Conversely, if $\varphi\in\mathcal{H}$ satisfies
$\langle\psi,\varphi\rangle\in\mathbb R$, we can write
\begin{equation}
\varphi=\alpha\psi+\chi,
\qquad
\alpha\in\mathbb R,
\qquad
\chi\perp\psi,
\end{equation}
and the self-adjoint operator
\begin{equation}
a
=
\alpha\,\theta_{\psi,\psi}
+
\theta_{\chi,\psi}
+
\theta_{\psi,\chi}
\end{equation}
satisfies $U_{\psi}(\psi_{a}^{\rho_{\psi}})=a\psi=\varphi$. 
%%%%%%%%%%%%%%%%%%%%%%
Moreover, it also holds
\begin{equation}\label{eqn:pure-state-vertical-direction}
U_\psi(V_{\rho_\psi})^{\perp_{\mathbb R}}
=
\mathbb R\,\mathrm i\psi,
\end{equation}
which is precisely the vertical phase direction.

Let $v_{[\psi]}\in T_{[\psi]}\mathbb{CP}(\mathcal{H})$. Choose a smooth curve
of normalized vectors $\psi(t)$ such that
\begin{equation}
\psi(0)=\psi,
\qquad
\frac{\mathrm d}{\mathrm dt}[\psi(t)]\bigg|_{t=0}
=
v_{[\psi]},
\end{equation}
and set
\begin{equation}
\dot\psi
:=
\frac{\mathrm d}{\mathrm dt}\psi(t)\bigg|_{t=0}.
\end{equation}
%%%%%%%%%%%%%%%%%%%%%%%
Since $\|\psi(t)\|=1$, it follows that
\begin{equation}
\Re\langle\psi,\dot\psi\rangle=0.
\end{equation}
%%%%%%%%%%%%%%%%%%%%%%%%%
We define the horizontal component of $\dot\psi$ by
\begin{equation}\label{eqn:pure-state-horizontal-component}
\dot\psi_{\mathrm h}
:=
\dot\psi-\langle\psi,\dot\psi\rangle\psi,
\end{equation}
so that
\begin{equation}
\langle\psi,\dot\psi_{\mathrm h}\rangle=0.
\end{equation}
%%%%%%%%%%%%%%%%%%%%%%%%%%
If another normalized lift is chosen, say
\begin{equation}
\widetilde\psi(t)=e^{\mathrm i\theta(t)}\psi(t),
\end{equation}
then
\begin{equation}
\dot{\widetilde\psi}_{\mathrm h}
=
e^{\mathrm i\theta(0)}\dot\psi_{\mathrm h}.
\end{equation}
%%%%%%%%%%%%%%%%%%%%%%%%%%%%%%%%
Together with $U_{e^{\mathrm i\theta}\psi}=e^{\mathrm i\theta}U_\psi$, this
shows that the GNS vector determined below does not depend on the chosen
normalized representative.
%%%%%%%%%%%%%%%%%%%%%%%%%%%%%%%%%%%%

Recall that the projective Hilbert space $\mathbb{CP}(\mathcal{H})=M$ carries its standard Fubini--Study K\"ahler structure \cite{CMP1990,AS1999,BZ2006,EMM2010}. 
%%%%%%%%%%%%%%%%%%%%%
With the convention that the Hilbert product is conjugate-linear in the first argument and linear in the second, we normalize the Fubini--Study Hermitian tensor by
\begin{equation}\label{eqn:FS-Hermitian-tensor}
h_{FS,[\psi]}(v_{[\psi]},w_{[\psi]})
:=
\left\langle
\dot\psi_{\mathrm h},\dot\phi_{\mathrm h}
\right\rangle,
\end{equation}
where $\dot\psi_{\mathrm h}$ and $\dot\phi_{\mathrm h}$ are the horizontal components of normalized lifts representing $v_{[\psi]}$ and $w_{[\psi]}$.
%%%%%%%%%%%%%%%%%%%%%%%%%%%%%%
Its real and imaginary parts are denoted by
\begin{equation}\label{eqn:FS-real-imaginary-parts}
g_{FS}:=\Re h_{FS},
\qquad
\omega_{FS}:=\Im h_{FS},
\end{equation}
and are, respectively, the Fubini--Study metric tensor and symplectic form.
%%%%%%%%%%%%%%%%%%%%%%%%%%%%%%%%
Other normalizations of the Fubini--Study metric differ from this one by an overall positive constant.
%%%%%%%%%%%%%%%%%%%%%%%%%%%%%%%%%%%%%%%%

\begin{proposition}\label{prop:Fubini--Study}
Let $\mathscr{A}=\mathcal{B}(\mathcal{H})$ and $M=\mathbb{CP}(\mathcal{H})$.
%%%%%%%%%%%%%%%%%%%%%%%%%%%
The pure-state model $(M,\mathrm{i},\mathscr{A})$ defined above is a Hermitian regular GNS-smooth parametric model of normal states as in Definitions~\ref{def:gns-smooth-parametric-model} and~\ref{def:regular model}.
%%%%%%%%%%%%%%%%%%%%%%%%%%%%%%%%%%%%%%

Moreover, if $v_{[\psi]},w_{[\psi]}\in T_{[\psi]}M$ are represented by normalized curves $\psi(t)$ and $\phi(t)$ with
\begin{equation}
\psi(0)=\phi(0)=\psi,
\end{equation}
and if $\dot\psi_{\mathrm h}$ and $\dot\phi_{\mathrm h}$ denote the horizontal components of their derivatives at $t=0$, then, on real tangent vectors, the Hermitian form in equations \eqref{eqn:induced-hermitian-form-on-complex-vectors} and \eqref{eqn:induced-hermitian-form-on-complex-vectors-riesz} reads
\begin{equation}\label{eqn:pure-state-K-real}
K_{[\psi]}(v_{[\psi]},w_{[\psi]})
=
4\langle\dot\phi_{\mathrm h},\dot\psi_{\mathrm h}\rangle
=
4\left(
\langle\dot\phi,\dot\psi\rangle
-
\langle\dot\phi,\psi\rangle
\langle\psi,\dot\psi\rangle
\right).
\end{equation}
Consequently, the real and imaginary parts in equation \eqref{eq:induced-bilinear-forms} read
\begin{equation}\label{eqn:pure-state-G}
G_{[\psi]}(v_{[\psi]},w_{[\psi]})
=
4\Re\langle\dot\phi_{\mathrm h},\dot\psi_{\mathrm h}\rangle
\end{equation}
and
\begin{equation}\label{eqn:pure-state-Omega}
\Omega_{[\psi]}(v_{[\psi]},w_{[\psi]})
=
4\Im\langle\dot\phi_{\mathrm h},\dot\psi_{\mathrm h}\rangle,
\end{equation}
respectively.
%%%%%%%%%%%%%%
Therefore, with the normalization in
equation~\eqref{eqn:FS-Hermitian-tensor}, one has
\begin{equation}
K=4\overline{h_{FS}},
\qquad
G=4g_{FS},
\qquad
\Omega=-4\omega_{FS}.
\end{equation}
%%%%%%%%%%%%%%%%%%%%%%
\end{proposition}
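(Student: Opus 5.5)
The plan is to compute the canonical real GNS vector lift $\Xi_{[\psi]}$ explicitly, using the unitary identification $U_\psi$ of equation~\eqref{eqn:pure-state-GNS-identification}. I expect it to be twice the horizontal velocity. From this, GNS-smoothness and the formula for $K$ should follow quickly.

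\textbf{Condition 1 of Definition~\ref{def:gns-smooth-parametric-model}.} For $a\in\mathscr A_{\mathrm{sa}}$, the function $\psi\mapsto\langle\psi,a\psi\rangle$ is a bounded quadratic form on $\mathcal H$, hence smooth. Its restriction to the unit sphere $\mathbb S(\mathcal H)$ is smooth and $U(1)$-invariant. Since $\mathbb S(\mathcal H)\to\mathbb{CP}(\mathcal H)$ is a smooth submersion (a principal $U(1)$-bundle), this restriction descends to a smooth function $\ell_a$ on $M$.

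\textbf{The derivative of $\ell_a$.} Take a normalized curve $\psi(t)$ representing $v_{[\psi]}$. Then
\[
\langle v_{[\psi]},d\ell_a([\psi])\rangle=2\Re\langle\dot\psi,a\psi\rangle .
\]
Substitute the decomposition $\dot\psi=\dot\psi_{\mathrm h}+\langle\psi,\dot\psi\rangle\psi$ from equation~\eqref{eqn:pure-state-horizontal-component}. The extra term is $2\Re\bigl(\overline{\langle\psi,\dot\psi\rangle}\langle\psi,a\psi\rangle\bigr)$. Here $\langle\psi,\dot\psi\rangle$ is purely imaginary because $\Re\langle\psi,\dot\psi\rangle=0$, and $\langle\psi,a\psi\rangle$ is real, so this term vanishes. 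Hence
\[
\langle v_{[\psi]},d\ell_a([\psi])\rangle=\Re\langle 2\dot\psi_{\mathrm h},a\psi\rangle=\Re\bigl\langle U_\psi^{-1}(2\dot\psi_{\mathrm h}),\psi_a^{\rho_\psi}\bigr\rangle_{\rho_\psi}.
\]

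\textbf{Condition 2 and identification of $\Xi$.} Since $\langle\psi,\dot\psi_{\mathrm h}\rangle=0\in\mathbb R$, equation~\eqref{eqn:pure-state-Vrho} gives $2\dot\psi_{\mathrm h}\in U_\psi(V_{\rho_\psi})$. By the uniqueness in Proposition~\ref{prop:boundedness-canonical-representative},
\[
\Xi_{[\psi]}(v_{[\psi]})=U_\psi^{-1}(2\dot\psi_{\mathrm h}),
\]
and condition~2 holds with $C=2\|\dot\psi_{\mathrm h}\|$. This vector does not depend on the choice of representative, since $\dot\psi_{\mathrm h}$ and $U_\psi$ both pick up the same phase $e^{\mathrm i\theta}$ under a change of lift.

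\textbf{Condition 3 (injectivity).} Suppose all the derivatives $\langle v_{[\psi]},d\ell_a([\psi])\rangle$ vanish. Then $2\dot\psi_{\mathrm h}$ is real-orthogonal to $U_\psi(V_{\rho_\psi})$. By equation~\eqref{eqn:pure-state-vertical-direction} it lies in $\mathbb R\,\mathrm i\psi$. It is also orthogonal to $\psi$, so $\dot\psi_{\mathrm h}=0$. Under the standard identification $T_{[\psi]}\mathbb{CP}(\mathcal H)\cong\psi^{\perp}$, $v\mapsto\dot\psi_{\mathrm h}$, which is the kernel of the projection restricted to horizontals, this means $v_{[\psi]}=0$.

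\textbf{The tensors.} Equation~\eqref{eqn:induced-hermitian-form-on-complex-vectors-riesz} on real vectors, together with unitarity of $U_\psi$, gives
\[
K(v,w)=\langle 2\dot\phi_{\mathrm h},2\dot\psi_{\mathrm h}\rangle,
\]
which is the first equality in equation~\eqref{eqn:pure-state-K-real}. The second equality comes from expanding the horizontal components. Taking real and imaginary parts and comparing with equation~\eqref{eqn:FS-Hermitian-tensor}, where the arguments appear in the opposite order, yields $K=4\overline{h_{FS}}$, $G=4g_{FS}$ and $\Omega=-4\omega_{FS}$.

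\textbf{Hermitian regularity (main obstacle).} I expect this step to need the most care, because $M$ may be infinite-dimensional. I will use the identification $T_{[\psi]}M\cong\psi^\perp$ given by the standard affine charts $\chi\mapsto[\psi+\chi]$, $\chi\in\psi^\perp$. In such a chart $K$ equals $4\overline{h_{FS}}$, and $h_{FS}$ is given by the explicit expression
\[
\frac{\langle\dot\chi,\dot\chi'\rangle}{1+\|\chi\|^2}-\frac{\langle\dot\chi,\chi\rangle\langle\chi,\dot\chi'\rangle}{(1+\|\chi\|^2)^2}.
\]
This expression is smooth in $\chi$ and jointly continuous in the tangent arguments, being bounded by a multiple of $\|\dot\chi\|\,\|\dot\chi'\|$. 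Hence $G$ and $\Omega$ are smooth and pointwise continuous. By Remark~\ref{rem: hermitianregularity-equivalence}, the model is therefore Hermitian regular. Finally, the image consists of vector states, which are normal, so the model is a model of normal states.
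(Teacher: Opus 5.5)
Your proposal is correct and follows essentially the same route as the paper. You identify $U_\psi\bigl(\Xi_{[\psi]}(v_{[\psi]})\bigr)=2\dot\psi_{\mathrm h}$ from the horizontal-component computation and the uniqueness in Proposition~\ref{prop:boundedness-canonical-representative}, and then read off $K=4\langle\dot\phi_{\mathrm h},\dot\psi_{\mathrm h}\rangle=4\overline{h_{FS}}$. The only differences are minor: you obtain injectivity from $U_\psi(V_{\rho_\psi})^{\perp_{\mathbb R}}=\mathbb R\,\mathrm i\psi$ rather than from $2\dot\psi_{\mathrm h}\in U_\psi(V_{\rho_\psi})$, and your affine-chart formula for $h_{FS}$ carries out the smoothness and joint-continuity check that the paper only asserts by appealing to the smoothness of the Fubini--Study structure.
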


\begin{proof}
We first verify GNS-smoothness in the sense of
Definition~\ref{def:gns-smooth-parametric-model}. 
%%%%%%%%%%%%%%%%%%%%%%%%%%%%%
Let $a\in\mathscr A_{\mathrm{sa}}$ and consider the function
$F_a:\mathbb S(\mathcal H)\longrightarrow\mathbb R$ given by $F_a(\psi)=\langle\psi,a\psi\rangle$.
%%%%%%%%%%%%%%%%%%%%%%%%%%%
This function is smooth because it is the restriction to $\mathbb S(\mathcal H)$ of the smooth quadratic map $\psi\mapsto\langle\psi,a\psi\rangle$ on $\mathcal H$.
%%%%%%%%%%%%%%%%%%%
Moreover, $F_a(e^{\mathrm i\theta}\psi)=F_a(\psi)$, which means it descends to a well-defined function on $M=\mathbb{CP}(\mathcal{H})=\mathbb{S}(\mathcal{H})/U(1)$.
%%%%%%%%%%%%%%%%%%%%
This function is precisely the expectation-value function $\ell_a([\psi])=\langle\psi,a\psi\rangle$, and satisfies $F_a=\ell_a\circ q$, where $q:\mathbb{S}(\mathcal{H})\to\mathbb{CP}(\mathcal H)=M$ is the quotient
projection. 
%%%%%%%%%%%%%%%%%
Since $q$ is the projection of a principal $U(1)$-bundle, smoothness of the invariant function $F_a$ implies smoothness of the descended function $\ell_a$.
%%%%%%%%%%%%%%%%%%%%%%%%%%
This proves condition~\ref{cond:smoothness-of-expectations} in
Definition~\ref{def:gns-smooth-parametric-model}.
%%%%%%%%%%%%%%%%%%%%%%%%%%%%%%%%%%%%%%

Let $v_{[\psi]}\in T_{[\psi]}M$ and let $\psi(t)$ be a
smooth normalized lift as above. 
%%%%%%%%%%%%%%%%%%%%
For $a=a^*$, it holds
\begin{equation}\label{eqn:pure-state-expectation-derivative}
\begin{split}
\left\langle v_{[\psi]},\mathrm{d}\ell_a([\psi])\right\rangle
&=
\frac{\mathrm d}{\mathrm dt}\bigg|_{t=0}
\langle\psi(t),a\psi(t)\rangle
\\
&=
\langle\dot\psi,a\psi\rangle
+
\langle\psi,a\dot\psi\rangle
\\
&=
2\Re\langle\dot\psi,a\psi\rangle .
\end{split}
\end{equation}
Since $\langle\psi,\dot\psi\rangle$ is purely imaginary and
$\langle\psi,a\psi\rangle$ is real, equations \eqref{eqn:pure-state-horizontal-component} and \eqref{eqn:pure-state-expectation-derivative} imply that
\begin{equation}\label{eqn:horizontal-compatibility-pure-states}
\left\langle v_{[\psi]},\mathrm{d}\ell_a([\psi])\right\rangle
=
2\Re\langle\dot\psi_{\mathrm h},a\psi\rangle.
\end{equation}
%%%%%%%%%%%%%%%%%%%%%%%
Using the GNS identification $U_\psi$, equation~\eqref{eqn:horizontal-compatibility-pure-states} can be rewritten as
\begin{equation}
\left\langle v_{[\psi]},\mathrm{d}\ell_a([\psi])\right\rangle
=
\Re\left\langle
2\dot\psi_{\mathrm h},
U_\psi(\psi_a^{\rho_\psi})
\right\rangle
\qquad
\forall a\in\mathscr A_{\mathrm{sa}}.
\end{equation}
Since $\dot\psi_{\mathrm h}$ is orthogonal to $\psi$, equation~\eqref{eqn:pure-state-Vrho} implies $2\dot\psi_{\mathrm h}\in U_\psi(V_{\rho_\psi})$, and thus equation~\eqref{eqn:canonical-real-representative} holds with
\begin{equation}\label{eqn:pure-state-Xi}
U_\psi\left(\Xi_{[\psi]}(v_{[\psi]})\right)
=
2\dot\psi_{\mathrm h}.
\end{equation}
By Proposition~\ref{prop:boundedness-canonical-representative}, condition~\ref{cond:GNS-boundedness} in Definition~\ref{def:gns-smooth-parametric-model} follows.

We now verify condition~\ref{cond:injectivity} in Definition \ref{def:gns-smooth-parametric-model}. 
%%%%%%%%%%%%%%%%%%%%%%%%%%%%%%%
Assume that
\begin{equation}
\left\langle v_{[\psi]},\mathrm{d}\ell_a([\psi])\right\rangle=0,
\qquad
\forall a\in\mathscr{A}_{\mathrm{sa}}.
\end{equation}
%%%%%%%%%%%%%%%%%%%%%%%
Equation \eqref{eqn:horizontal-compatibility-pure-states} implies
\begin{equation}
\Re\langle 2\dot\psi_{\mathrm h},a\psi\rangle=0,
\qquad
\forall a\in\mathscr{A}_{\mathrm{sa}},
\end{equation}
%%%%%%%%%%%%%%%%%
which means that  $2\dot\psi_{\mathrm h}$ is real-orthogonal to
$U_\psi(V_{\rho_\psi})$ since the latter is the closed real span of vectors of the form $a\psi$ with $a=a^*$.
%%%%%%%%%%%%%%%%%%%%%%%%%%%%%%%%
But $2\dot\psi_{\mathrm h}$ belongs to
$U_\psi(V_{\rho_\psi})$, because it is orthogonal to $\psi$, and thus $\dot\psi_{\mathrm h}=0$, which means $v_{[\psi]}=0$, thus proving condition~\ref{cond:injectivity} in Definition \ref{def:gns-smooth-parametric-model}.
%%%%%%%%%%%%%%%%%%%%%%%%%%%%%%%%
Since $\mathscr{A}=\mathcal{B}(\mathcal{H})$ is a $W^*$-algebra and pure vector
states are normal, the model $(M,\mathrm{i},\mathscr{A})$ is normal.
%%%%%%%%%%%%%%%%%%%%%%%%%%%

It remains to compute the induced Hermitian tensor. Let
$v_{[\psi]},w_{[\psi]}\in T_{[\psi]}M$ be represented by
normalized curves $\psi(t)$ and $\phi(t)$ with
$\psi(0)=\phi(0)=\psi$, and let $\dot\psi_{\mathrm h}$ and
$\dot\phi_{\mathrm h}$ be the corresponding horizontal components. 
%%%%%%%%%%%%%%%%%
From equation~\eqref{eqn:pure-state-Xi}, it follows that
\begin{equation}
U_\psi\left(\Xi_{[\psi]}(v_{[\psi]})\right)
=
2\dot\psi_{\mathrm h},
\qquad
U_\psi\left(\Xi_{[\psi]}(w_{[\psi]})\right)
=
2\dot\phi_{\mathrm h}.
\end{equation}
%%%%%%%%%%%%%%%%%%%%%%%% 
Using equation~\eqref{eqn:induced-hermitian-form-on-complex-vectors-riesz}, for real tangent vectors we obtain
\begin{equation}
\begin{split}
K_{[\psi]}(v_{[\psi]},w_{[\psi]})
&=
\left\langle
\Xi_{[\psi]}(w_{[\psi]}),
\Xi_{[\psi]}(v_{[\psi]})
\right\rangle_{\rho_\psi}
\\
&=
4\langle\dot\phi_{\mathrm h},\dot\psi_{\mathrm h}\rangle 
\\
&=4\left(
\langle\dot\phi,\dot\psi\rangle
-
\langle\dot\phi,\psi\rangle
\langle\psi,\dot\psi\rangle\right),
\end{split}
\end{equation}
because $\dot\psi_{\mathrm h}=\dot\psi-\langle\psi,\dot\psi\rangle\psi$ and $\dot\phi_{\mathrm h}=\dot\phi-\langle\psi,\dot\phi\rangle\psi$ according to equation \eqref{eqn:pure-state-horizontal-component}, thus proving equation~\eqref{eqn:pure-state-K-real}. 
%%%%%%%%%%%%%%%%%%%%%%%%%%%%%%%
By sesquilinearity, this identifies the induced Hermitian tensor with
\begin{equation}
K=4\overline{h_{FS}},
\end{equation}
where $h_{FS}$ is as in equation~\eqref{eqn:FS-Hermitian-tensor}.
%%%%%%%%%%%%%%%%%%%%%%%
Taking real and imaginary parts, equation \eqref{eq:induced-bilinear-forms} gives
\begin{equation}
G=4g_{FS},
\qquad
\Omega=-4\omega_{FS}.
\end{equation}
%%%%%%%%%%%%%%%%%%%%%%%%
Since $g_{FS}$ and $\omega_{FS}$ are the smooth Fubini--Study metric tensor and symplectic form, respectively, it follows that $G$ and $\Omega$ are smooth. 
%%%%%%%%%%%%%%%%%%%%%%%%%
Therefore, the model is symmetric regular and skew-symmetric regular in the sense of Definition~\ref{def:regular model}, and also Hermitian regular by Remark~\ref{rem: hermitianregularity-equivalence}.
%%%%%%%%%%%%%%%%%%%%%%%%%%%%%%%%%%
\end{proof}

\begin{remark}
For a complexified tangent vector $z_{[\psi]}=v_{[\psi]}+\mathrm i w_{[\psi]}$, the GNS lift in equation \eqref{eqn:complex-gns-lift-on-complex-vectors} is
\begin{equation} 
L_{[\psi]}^{\mathbb C}(z_{[\psi]})
=
L_{[\psi]}(v_{[\psi]})
+
\mathrm iL_{[\psi]}(w_{[\psi]}).
\end{equation}
Since the Riesz map is conjugate-linear with our convention for the Hilbert product, the corresponding Riesz vector is represented under $U_\psi$ by
\begin{equation}
U_\psi\left(
\widehat\Xi_{[\psi]}^{\mathbb C}(z_{[\psi]})
\right)
=
2\left(
\dot\psi_{\mathrm h}
-
\mathrm i\dot\phi_{\mathrm h}
\right),
\end{equation}
where $\dot\psi_{\mathrm h}$ and $\dot\phi_{\mathrm h}$ represent
$v_{[\psi]}$ and $w_{[\psi]}$, respectively. This is the same convention
responsible for the formula $\widehat s_{v+\mathrm i w}=s_v-\mathrm i s_w$
in the commutative case.
\end{remark}

\begin{remark}\label{rem:relation-with-quantum-geometric-tensor}
The  Hermitian tensor obtained in Proposition~\ref{prop:Fubini--Study} recovers, up to normalization and ordering conventions, the usual \emph{quantum geometric tensor} on pure-state models \cite{PV1980,B1984a,CMP1990a,AS1999,BZ2006,EMM2010,BH2001a,KSMP2017}.
%%%%%%%%%%%%%%%%%%%%%%%%%%%%%
Recall that, for a smooth family of normalized vectors
$\lambda\mapsto\psi(\lambda)$ representing rays in
$\mathbb{CP}(\mathcal{H})$, the quantum geometric tensor is usually written locally as
\begin{equation}
Q_{\mu\nu}
=
\left\langle
\partial_\mu\psi_{\mathrm h},
\partial_\nu\psi_{\mathrm h}
\right\rangle,
\end{equation}
where
\begin{equation}
\partial_\mu\psi_{\mathrm h}
=
\partial_\mu\psi
-
\langle\psi,\partial_\mu\psi\rangle\psi
\end{equation}
is the horizontal component of $\partial_\mu\psi$. 
%%%%%%%%%%%%%%%
Equivalently 
\begin{equation}
Q_{\mu\nu}
=
\left\langle
\partial_\mu\psi,
\left(\mathbb I-|\psi\rangle\langle\psi|\right)
\partial_\nu\psi
\right\rangle .
\end{equation}
%%%%%%%%%%%%%%%%%%%%%%%%%%%%
From Proposition~\ref{prop:Fubini--Study}, and recalling the Fubini--Study Hermitian tensor in equation \eqref{eqn:FS-Hermitian-tensor}, it follows that $K=4\overline{Q}$. 
%%%%%%%%%%%%%%%%%%

The point of the present construction is to show that the usual quantum geometric tensor is obtained from the same dual GNS mechanism that gives the Fisher--Rao tensor in the commutative case and the SLD metric for faithful quantum states in finite dimensions. 
%%%%%%%%%%%%%%%%%%%%%%%%%%
\end{remark}

\section{Faithful quantum states}\label{sec:quantum-faithful-states}

We now turn to the case of parametric models of faithful normal states on $\mathscr{A}\cong\mathcal B(\mathcal{H})$, where $\mathcal H$ is a separable complex Hilbert space.
%%%%%%%%%%%%%%%%%%%%%%%%%%%%%%%
A normal state on $\mathcal B(\mathcal H)$ is uniquely represented by a positive trace-class operator of unit trace.
%%%%%%%%%%%%%%%%%%%%%%%%%%%%%%%%%%
Thus, a normal parametric model in this setting consists of a smooth manifold $M$ together with a map $\mathrm{i}:M\to \mathscr S(\mathscr{A})$, $m\mapsto \rho_m$ 
such that, for each $m\in M$, there exists a density operator
\begin{equation}
\varrho_m\in \mathcal{B}_1(\mathcal H),
\qquad
\varrho_m\geq 0,
\qquad
\operatorname{Tr}(\varrho_m)=1,
\end{equation}
where $\mathcal{B}_1(\mathcal H)$ denotes the space of trace-class operators on $\mathcal{H}$, satisfying
\begin{equation}
\rho_m(a)=\operatorname{Tr}(\varrho_m a),
\qquad
a\in\mathscr{A}.
\end{equation}
The state $\rho_m$ is faithful precisely when $\varrho_m$ is injective. 
%%%%%%%%%%%%%%%%%%%%%%
Notice that, in infinite dimension, injectivity does not imply that $\varrho_m^{-1}$ is bounded; equivalently, a faithful density operator need not be bounded below.
%%%%%%%%%%%%%%%%%%%%%%%%%%%%%%%%%%%%%%%%%

Fix $m\in M$ and write
\begin{equation}
\rho:=\rho_m,
\qquad
\varrho:=\varrho_m.
\end{equation}
%%%%%%%%%%%%%%%%%%%%%%%%
Because $\rho$ is faithful, its GNS ideal in equation \eqref{eqn:gelfand-ideal} is trivial. 
%%%%%%%%%%%%%%%%%%%%%%%
Indeed, if $a\in\mathscr{A}$ satisfies
\begin{equation}
\rho(a^*a)=\operatorname{Tr}(\varrho a^*a)=0,
\end{equation}
then $a\varrho^{1/2}=0$. 
%%%%%%%%%%%%%%%%%%%%%%%%%%%
Since $\varrho$ is injective, $\varrho^{1/2}$ is injective and has dense range, and hence $a=0$. 
%%%%%%%%%%%%%%%%%%%%%%%%%
Therefore the GNS Hilbert space $\mathcal H_\rho$ is the completion of $\mathscr{A}=\mathcal B(\mathcal H)$ with respect to the inner product
\begin{equation}
\langle a,b\rangle_\rho
=
\rho(a^*b)
=
\operatorname{Tr}(\varrho a^*b).
\end{equation}
%%%%%%%%%%%%%%%%%%%%%%%%%%

There is a canonical Hilbert--Schmidt realization of this GNS space. 
%%%%%%%%%%%%%%%%%%%%%
On the dense subspace of GNS vectors define
\begin{equation}
U_\varrho:\{\psi_a^\rho=[a]_\rho:\ a\in\mathscr{A}=\mathcal B(\mathcal H)\}
\longrightarrow \mathcal{B}_2(\mathcal H),
\qquad
U_\varrho(\psi_a^\rho):=a\varrho^{1/2},
\end{equation}
where $\mathcal{B}_2(\mathcal H)$ is the space of Hilbert--Schmidt operators on $\mathcal{H}$.
%%%%%%%%%%%%%%%%%%%%%%%%%%%%%%%%%%%%%%
For all $a,b\in\mathscr{A}$ one has
\begin{equation}
\langle \psi_a^\rho,\psi_b^\rho\rangle_\rho
=
\operatorname{Tr}(\varrho a^*b)
=
\operatorname{Tr}\!\left((a\varrho^{1/2})^*(b\varrho^{1/2})\right),
\end{equation}
so $U_\varrho$ is an isometry on a dense subspace. 
%%%%%%%%%%%%%%%%%%%%%
It remains to observe that its range is dense in $\mathcal{B}_2(\mathcal H)$. 
%%%%%%%%%%%%%%%%%%%%%%%%%%%%
Since $\varrho^{1/2}$ is injective, one has
\begin{equation}\label{eqn:varrho-square-root-dense-range}
\overline{\operatorname{Ran}(\varrho^{1/2})}
=
\ker(\varrho^{1/2})^\perp
=
\mathcal H.
\end{equation}
%%%%%%%%%%%%%%%%%%%
Therefore, if $|u\rangle\langle v|$ is a rank-one operator, we may choose a sequence $(w_n)_n\subset\mathcal H$ such that $\varrho^{1/2}w_n\to v$, and then
\begin{equation}
|u\rangle\langle w_n|\,\varrho^{1/2}
\longrightarrow
|u\rangle\langle v|
\end{equation}
in Hilbert--Schmidt norm. 
%%%%%%%%%%%%%%%%%%%%%%%
Hence all rank-one operators, and therefore all finite-rank operators, belong to the Hilbert--Schmidt closure of
\begin{equation}
\{a\varrho^{1/2}:a\in\mathscr{A}=\mathcal B(\mathcal H)\}.
\end{equation}
%%%%%%%%%%%%%%%%%%%
Since finite-rank operators are dense in $\mathcal{B}_2(\mathcal H)$, the isometry $U_\varrho$ extends uniquely to a unitary isomorphism
\begin{equation}\label{eqn:GNS-HS-identification-faithful}
U_\varrho:\mathcal H_\rho\longrightarrow \mathcal{B}_2(\mathcal H).
\end{equation}
%%%%%%%%%%%%%%%%%%%%%%%%%%%%%%%%%%%%

We now transport the distinguished real subspace $V_\rho$ of equation~\eqref{eqn:V-rho-definition} through the unitary map $U_\varrho$. 
%%%%%%%%%%%%%%%%%%%%%%%%%%
Since $U_\varrho$ is a real-linear homeomorphism on the realified Hilbert spaces, one obtains
\begin{equation}\label{eq:HS_image_of_Vrho}
U_\varrho(V_\rho)
=
\overline{\operatorname{span}_{\mathbb R}
\{a\varrho^{1/2}:a=a^*\in\mathcal B(\mathcal H)\}}^{\|\cdot\|_{HS}^{\mathbb{R}}}
\subset \mathcal{B}_{2}^{\mathbb{R}}(\mathcal H).
\end{equation}
This closed real Hilbert subspace is the faithful quantum analogue of the real $\mathcal{L}^2$ space in which the classical score function lives.
%%%%%%%%%%%%%%%%%%%%%%%

Let now $(M,\mathrm i,\mathscr{A})$ be a GNS-smooth parametric model of normal faithful states as in Definition~\ref{def:gns-smooth-parametric-model}. 
%%%%%%%%%%%%%%%%%%%%%%%%%%%%%
For every $m\in M$ and every $v_m\in T_m M$, Proposition~\ref{prop:boundedness-canonical-representative} gives the real GNS vector lift $\xi_{v_{m}}^{\rho_{m}}=\Xi_m(v_m)\in V_{\rho_m}\subset\mathcal{H}_{\rho_{m}}^{\mathbb{R}}$, and we define its Hilbert--Schmidt representative by
\begin{equation}\label{eq:def_X_vm_pre}
X_{v_m}
:=
U_{\varrho_m}\bigl(\Xi_m(v_m)\bigr)
\in \mathcal{B}_{2}^{\mathbb{R}}(\mathcal H).
\end{equation}
%%%%%%%%%%%%%%%%%%%
By construction, it holds
\begin{equation}
X_{v_m}
\in
\overline{\operatorname{span}_{\mathbb R}
\{a\varrho_m^{1/2}:a=a^*\in\mathcal B(\mathcal H)\}}^{\|\cdot\|_{HS}^{\mathbb{R}}}.
\end{equation}
%%%%%%%%%%%%%%%%%%%%%%%%%%
Moreover, in Hilbert--Schmidt form,   equation \eqref{eqn:canonical-real-representative} becomes 
\begin{equation}\label{eq:HS_quantum_compatibility_pre}
\langle v_m,\mathrm{d}\ell_a(m)\rangle
=
\Re\operatorname{Tr}\!\left(X_{v_m}^*a\varrho_m^{1/2}\right),
\end{equation}
for all $a\in\mathscr{A}_{\mathrm{sa}}=\mathcal{B}(\mathcal{H})_{\mathrm{sa}}$, where $\ell_a(m):=\rho_m(a)=\operatorname{Tr}(\varrho_m a)$.
%%%%%%%%%%%%%%%%%%%

This formula should be read as the faithful quantum counterpart of the classical score identity in equation \eqref{eqn:compatibility-condition-classical-score}.
%%%%%%%%%%%%%%%%%%%%%%%%%%%%%%
The role of the classical score $s_{v_m}$ is now played first by the Hilbert--Schmidt operator $X_{v_m}$.
%%%%%%%%%%%%%%%%%%%

When the map $\varrho:M\longrightarrow \mathcal B_1(\mathcal H)_{\mathrm{sa}}$ determined by the model through
$m\mapsto \varrho_m$ is smooth, where $\mathcal B_1(\mathcal H)_{\mathrm{sa}}$ is regarded as a real Banach
space with the trace norm, then for every $v_m\in T_mM$ one has
\begin{equation}\label{eqn:rho-dot-faithful}
\partial_{v_m}\varrho_m
:=
T_m\varrho(v_m)
\in \mathcal B_1(\mathcal H)_{\mathrm{sa}}.
\end{equation}
Moreover, since $\operatorname{Tr}(\varrho_m)=1$ for all $m\in M$, differentiating
the normalization condition gives
\begin{equation}
\operatorname{Tr}(\partial_{v_m}\varrho_m)=0.
\end{equation}
%%%%%%%%%%%%%%%%%%%%%%%%%%

For every $a=a^*\in\mathcal B(\mathcal H)$, the map
\begin{equation}
\Lambda_a:\mathcal B_1(\mathcal H)_{\mathrm{sa}}\longrightarrow \mathbb R,
\qquad
\Lambda_a(W):=\operatorname{Tr}(Wa),
\end{equation}
is a continuous real-linear functional. 
%%%%%%%%%%%%%%%%%
Hence, for the expectation-value map
\begin{equation}
\ell_a(m):=\rho_m(a)=\operatorname{Tr}(\varrho_m a),
\end{equation}
one obtains
\begin{equation}\label{eq:quantum_derivative_under_trace_pre}
\langle v_m,\mathrm{d}\ell_a(m)\rangle
=
\operatorname{Tr}\!\left((\partial_{v_m}\varrho_m)a\right)
\end{equation}
for all $a\in\mathscr{A}_{\mathrm{sa}}=\mathcal B(\mathcal H)_{\mathrm{sa}}$.
%%%%%%%%%%%%%%%%%%%%%%%%%%%
Combining this identity with equation~\eqref{eq:HS_quantum_compatibility_pre}
leads to the weak symmetric logarithmic derivative relation for the Hilbert--Schmidt representative $X_{v_m}$. 
%%%%%%%%%%%%%%%%%%%%%%%%
This is made precise in the proposition following the Lemma.
%%%%%%%%%%%%%%%%%%%

\begin{lemma}\label{lem:trace-class-separated-by-selfadjoints}
Let $T\in\mathcal B_1(\mathcal H)_{\mathrm{sa}}$ be a self-adjoint
trace-class operator.  If
\[
\operatorname{Tr}(Ta)=0,
\qquad
\forall a\in\mathcal B(\mathcal H)_{\mathrm{sa}},
\]
then $T=0$.
\end{lemma}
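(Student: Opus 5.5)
The plan is to test the vanishing of $\operatorname{Tr}(Ta)$ against rank-one projections, which are self-adjoint and bounded, and then to conclude from the spectral theorem for compact self-adjoint operators. Fix a unit vector $u\in\mathcal H$ and take $a=|u\rangle\langle u|\in\mathcal B(\mathcal H)_{\mathrm{sa}}$. Computing the trace in an orthonormal basis whose first element is $u$, we get
\[
\operatorname{Tr}(Ta)=\operatorname{Tr}\bigl(T|u\rangle\langle u|\bigr)=\langle u,Tu\rangle .
\]
This step is legitimate because $Ta$ is trace class, being the product of a trace-class operator with a bounded one. The hypothesis therefore gives $\langle u,Tu\rangle=0$ for every $u\in\mathcal H$.

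From here there are two routes, and I would present the first one.

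The first route uses polarization. Since the inner product is conjugate-linear in the first argument and the space is complex, the quadratic form $u\mapsto\langle u,Tu\rangle$ determines the sesquilinear form $(u,w)\mapsto\langle u,Tw\rangle$. Explicitly,
\[
4\langle u,Tw\rangle=\sum_{k=0}^{3}\mathrm i^{-k}\,\bigl\langle u+\mathrm i^{k}w,\;T(u+\mathrm i^{k}w)\bigr\rangle .
\]
Every term on the right vanishes, so $\langle u,Tw\rangle=0$ for all $u,w$, and hence $T=0$. Self-adjointness is not even needed here; it only matters for the second route.

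The second route uses the spectral theorem. $T$ is compact and self-adjoint, so it has an orthonormal eigenbasis $(e_n)$ with real eigenvalues $\lambda_n$. Taking $u=e_n$ gives $\lambda_n=\langle e_n,Te_n\rangle=0$ for every $n$, and therefore $T=0$.

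No step here is a genuine obstacle. The only point needing care is the trace identity $\operatorname{Tr}(T|u\rangle\langle u|)=\langle u,Tu\rangle$, and in particular confirming that the ordering convention ($T$ on the left, the projection on the right) yields $\langle u,Tu\rangle$ rather than its conjugate. Since $T$ is self-adjoint, both expressions are real and equal in any case.
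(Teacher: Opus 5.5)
Your proof is correct, but it goes in the opposite direction from the paper's. The paper enlarges the test class. It writes an arbitrary $b\in\mathcal B(\mathcal H)$ as $b_1+\mathrm i b_2$ with $b_1,b_2$ self-adjoint, obtains $\operatorname{Tr}(Tb)=0$ for every bounded $b$, and then concludes $T=0$ from the trace duality $\mathcal B_1(\mathcal H)^*\cong\mathcal B(\mathcal H)$; implicitly this also uses Hahn--Banach, since the dual must separate points. You instead shrink the test class to rank-one projections, compute $\operatorname{Tr}(T|u\rangle\langle u|)=\langle u,Tu\rangle$ by hand, and finish with polarization. This avoids any duality theorem, stays fully elementary, and proves a stronger statement: vanishing against rank-one projections alone already forces $T=0$. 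As you note, your polarization route does not use that $T$ is self-adjoint, and the paper's argument does not use it either. If you want to exploit self-adjointness, there is an even shorter argument. $T$ is itself a bounded self-adjoint operator, so you may take $a=T$ and get $\|T\|_{HS}^2=\operatorname{Tr}(T^*T)=\operatorname{Tr}(T^2)=0$.
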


\begin{proof}
If $b\in\mathcal B(\mathcal H)$ is arbitrary, write
\[
b=b_1+\mathrm i b_2,
\qquad
b_1:=\frac12(b+b^*),
\qquad
b_2:=\frac{1}{2\mathrm i}(b-b^*),
\]
with $b_1,b_2\in\mathcal B(\mathcal H)_{\mathrm{sa}}$.  By assumption,
\[
\operatorname{Tr}(Tb_1)=\operatorname{Tr}(Tb_2)=0,
\]
and hence $\operatorname{Tr}(Tb)=0$ for all
$b\in\mathcal B(\mathcal H)$.  Since
$\mathcal B_1(\mathcal H)^*\cong\mathcal B(\mathcal H)$ through the trace
duality pairing, this implies $T=0$.
\end{proof}

\begin{proposition}\label{prop:quantum_HS_score}
Let $(M,\mathrm{i},\mathscr{A})$ be a GNS-smooth parametric model of normal faithful states on
$\mathscr{A}\cong\mathcal B(\mathcal H)$ as in
Definition~\ref{def:gns-smooth-parametric-model}. 
%%%%%%%%%%%%%
Assume that the density-operator map
\begin{equation}
\varrho:M\longrightarrow \mathcal B_1(\mathcal H)_{\mathrm{sa}},
\qquad
m\longmapsto\varrho_m,
\end{equation}
is smooth, where $\mathcal B_1(\mathcal H)_{\mathrm{sa}}$ is regarded as a real Banach space with the trace norm.
%%%%%%%%%%%%%%%%%%%%%%%%%%%

Then, for every $m\in M$ and every $v_m\in T_mM$, the Hilbert--Schmidt representative $X_{v_m}$ defined in equation~\eqref{eq:def_X_vm_pre} satisfies
the weak symmetric logarithmic derivative relation
\begin{equation}\label{eq:weak_sld_X}
2\,\partial_{v_m}\varrho_m
=
\varrho_m^{1/2}X_{v_m}^*
+
X_{v_m}\varrho_m^{1/2},
\end{equation}
where $\partial_{v_m}\varrho_m$ is as in equation \eqref{eqn:rho-dot-faithful}.
%%%%%%%%%%%%%%%%%%%%%%%%

Moreover, for a complexified tangent vector $z_m=v_m+\mathrm i w_m\in T_m^{\mathbb C}M$, the Hilbert--Schmidt image of equation \eqref{eqn:complex-gns-vector-lift} is
\begin{equation}\label{eq:X_complexified_correct}
\widehat X_{z_m}\equiv  U_\varrho\bigl(\Xi_m^{\mathbb C}(z_m)\bigr)
:=
X_{v_m}-\mathrm i X_{w_m}.
\end{equation}
%%%%%%%%%%%%%%%%%%%%%%%%%%%%%%%%
Consequently, for $z_m,z'_m\in T_m^{\mathbb C}M$, the induced Hermitian form in equation \eqref{eqn:induced-hermitian-form-on-complex-vectors-riesz} is
\begin{equation}\label{eq:K_quantum_HS}
K_m(z_m,z'_m)
=
\operatorname{Tr}\!\left(\widehat X_{z'_m}^{\,*}\widehat X_{z_m}\right).
\end{equation}
%%%%%%%%%%%%%%%%%%%%%%%%%%%
In particular, for real tangent vectors $v_m,w_m\in T_mM$, one has
\begin{equation}\label{eq:K_quantum_HS_real}
K_m(v_m,w_m)
=
\operatorname{Tr}\!\left(X_{w_m}^{*}X_{v_m}\right),
\end{equation}
and therefore equation \eqref{eq:induced-bilinear-forms} gives
\begin{equation}\label{eq:G_Omega_quantum_HS}
\begin{split}
G_m(v_m,w_m)
&=
\Re\operatorname{Tr}\!\left(X_{w_m}^{*}X_{v_m}\right), 
\\
\Omega_m(v_m,w_m)
&=
\Im\operatorname{Tr}\!\left(X_{w_m}^{*}X_{v_m}\right).
\end{split}
\end{equation}
\end{proposition}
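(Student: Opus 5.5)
The plan is to compare two expressions for the same derivative $\langle v_m,\mathrm d\ell_a(m)\rangle$: the Hilbert--Schmidt compatibility identity \eqref{eq:HS_quantum_compatibility_pre} and the derivative-under-the-trace formula \eqref{eq:quantum_derivative_under_trace_pre}. Lemma~\ref{lem:trace-class-separated-by-selfadjoints} then lets us identify the resulting trace-class operators.

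Fix $m$ and write $\varrho=\varrho_m$ and $X=X_{v_m}\in\mathcal B_2(\mathcal H)$. For $a=a^*$ we rewrite the right-hand side of \eqref{eq:HS_quantum_compatibility_pre} using cyclicity of the trace. This is legitimate because $X^*\in\mathcal B_2$ and $a\varrho^{1/2}\in\mathcal B_2$, so their product is trace class and $\operatorname{Tr}(X^*a\varrho^{1/2})=\operatorname{Tr}(\varrho^{1/2}X^*a)$. Taking the real part via complex conjugation gives
\[
\overline{\operatorname{Tr}(\varrho^{1/2}X^*a)}=\operatorname{Tr}\bigl((\varrho^{1/2}X^*a)^*\bigr)=\operatorname{Tr}(aX\varrho^{1/2})=\operatorname{Tr}(X\varrho^{1/2}a),
\]
where the last step is again cyclicity for a product of a bounded operator with a trace-class operator. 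It follows that
\[
\langle v_m,\mathrm d\ell_a(m)\rangle=\tfrac12\operatorname{Tr}\bigl((\varrho^{1/2}X^*+X\varrho^{1/2})a\bigr).
\]
The operator $T:=2\partial_{v_m}\varrho-\varrho^{1/2}X^*-X\varrho^{1/2}$ is trace class (a product of two Hilbert--Schmidt operators is trace class) and self-adjoint, since $(X\varrho^{1/2})^*=\varrho^{1/2}X^*$. Comparing with \eqref{eq:quantum_derivative_under_trace_pre}, which applies because $\varrho$ is smooth into $\mathcal B_1(\mathcal H)_{\mathrm{sa}}$, we get $\operatorname{Tr}(Ta)=0$ for all $a\in\mathcal B(\mathcal H)_{\mathrm{sa}}$. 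Lemma~\ref{lem:trace-class-separated-by-selfadjoints} then gives $T=0$, which is \eqref{eq:weak_sld_X}.

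The remaining assertions are formal. $U_\varrho$ is complex-linear and unitary, so applying it to \eqref{eqn:complex-gns-vector-lift} gives $\widehat X_{z_m}=X_{v_m}-\mathrm iX_{w_m}$. Substituting into \eqref{eqn:induced-hermitian-form-on-complex-vectors-riesz} and using unitarity, with the Hilbert--Schmidt inner product $\langle A,B\rangle=\operatorname{Tr}(A^*B)$, gives $K_m(z_m,z_m')=\langle \widehat X_{z_m'},\widehat X_{z_m}\rangle_{HS}=\operatorname{Tr}(\widehat X_{z_m'}^*\widehat X_{z_m})$. Restricting to real vectors and taking real and imaginary parts as in \eqref{eq:induced-bilinear-forms} yields \eqref{eq:K_quantum_HS_real} and \eqref{eq:G_Omega_quantum_HS}.

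The only delicate step is the trace manipulation. Since $\varrho^{-1/2}$ may be unbounded in infinite dimensions, every product has to be kept in the form (Hilbert--Schmidt)$\times$(Hilbert--Schmidt) or (bounded)$\times$(trace class) so that cyclicity holds. I would check each such product explicitly rather than forming any logarithmic derivative operator.
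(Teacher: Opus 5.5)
Your proposal is correct and follows essentially the same route as the paper. You equate the trace-derivative formula with the Hilbert--Schmidt compatibility identity, write the real part as a symmetrized trace using conjugation and cyclicity, and conclude with Lemma~\ref{lem:trace-class-separated-by-selfadjoints}; the complexified lift and the Hermitian-form formulas then follow from complex-linearity and unitarity of $U_\varrho$. The only difference is cosmetic: you apply cyclicity before conjugating, whereas the paper conjugates first.
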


\begin{proof}
Fix $m\in M$ and $v_m\in T_mM$, and write $\varrho:=\varrho_m$.
%%%%%%%%%%%%%%%%%%%%%%%
Recalling that  $\varrho:M\to\mathcal B_1(\mathcal H)_{\mathrm{sa}}$ is smooth in trace norm, it holds
\begin{equation}\label{eq:HS_compatibility_used_in_proof}
\operatorname{Tr}\!\left((\partial_{v_m}\varrho_m)a\right)\stackrel{\eqref{eq:quantum_derivative_under_trace_pre}}{=}\langle v_m,\mathrm d\ell_a(m)\rangle
\stackrel{\eqref{eq:HS_quantum_compatibility_pre}}{=}
\Re\operatorname{Tr}\!\left(X_{v_m}^*a\varrho^{1/2}\right)
\end{equation}
for all $a\in\mathscr{A}_{\mathrm{sa}}=\mathcal B(\mathcal H)_{\mathrm{sa}}$.
%%%%%%%%%%%%%%%%%%
Since $X_{v_m}$ and $\varrho^{1/2}$ are Hilbert--Schmidt, the products $X_{v_m}^*a\varrho^{1/2}$ and $\varrho^{1/2}aX_{v_m}$ are trace-class for all $a\in\mathscr{A}=\mathcal{B}(\mathcal{H})$. 
%%%%%%%%%%%%%%%%%
Therefore, for $a=a^*$, it holds
\begin{align}
\Re\operatorname{Tr}\!\left(X_{v_m}^*a\varrho^{1/2}\right)
&=
\frac12\left(
\operatorname{Tr}\!\left(X_{v_m}^*a\varrho^{1/2}\right)
+
\overline{\operatorname{Tr}\!\left(X_{v_m}^*a\varrho^{1/2}\right)}
\right)
\nonumber\\
&=
\frac12\left(
\operatorname{Tr}\!\left(X_{v_m}^*a\varrho^{1/2}\right)
+
\operatorname{Tr}\!\left(\varrho^{1/2}aX_{v_m}\right)
\right)
\nonumber\\
&=
\frac12
\operatorname{Tr}\!\left(
(\varrho^{1/2}X_{v_m}^*+X_{v_m}\varrho^{1/2})a
\right).
\end{align}
%%%%%%%%%%%%%%%%%%%%%%%%%%%%
Equation~\eqref{eq:HS_compatibility_used_in_proof} thus gives
\begin{equation}
\operatorname{Tr}\!\left((\partial_{v_m}\varrho_m)a\right)
=
\frac12
\operatorname{Tr}\!\left(
(\varrho^{1/2}X_{v_m}^*+X_{v_m}\varrho^{1/2})a
\right)
\end{equation}
for all $a\in\mathscr{A}_{\mathrm{sa}}=\mathcal B(\mathcal H)_{\mathrm{sa}}$, 

Set
\[
T
:=
2\,\partial_{v_m}\varrho_m
-
\varrho^{1/2}X_{v_m}^*
-
X_{v_m}\varrho^{1/2}.
\]
The operator $T$ is self-adjoint and trace-class.  The preceding identity
says that
\[
\operatorname{Tr}(Ta)=0,
\qquad
\forall a\in\mathcal B(\mathcal H)_{\mathrm{sa}}.
\]
By Lemma~\ref{lem:trace-class-separated-by-selfadjoints}, $T=0$, and hence
\begin{equation}
2\,\partial_{v_m}\varrho_m
=
\varrho^{1/2}X_{v_m}^*
+
X_{v_m}\varrho^{1/2}.
\end{equation}%%%%%%%%%%%%%%%%%
This proves equation~\eqref{eq:weak_sld_X}.
%%%%%%%%%%%%%%%%%%%

Consider now $z_m=v_m+\mathrm i w_m\in T_m^{\mathbb C}M$.
%%%%%%%%%%%%%%%%%%%%%%%%%%%%%%%%%%%%%%%%%%%%%%%%%%%%%%
Since $U_\varrho$ is complex-linear, equation~\eqref{eqn:complex-gns-vector-lift} leads to
\begin{equation}
U_\varrho\bigl(\Xi_m^{\mathbb C}(z_m)\bigr)
\stackrel{\eqref{eq:def_X_vm_pre}}{=}
X_{v_m}-\mathrm iX_{w_m},
\end{equation}
thus proving equation~\eqref{eq:X_complexified_correct}.
%%%%%%%%%%%%%%%%%%%%%%%%%%%

Finally, using equation~\eqref{eqn:induced-hermitian-form-on-complex-vectors-riesz} and the unitary
identification $U_\varrho:\mathcal H_\rho\to\mathcal B_2(\mathcal H)$, we get
\begin{equation}
K_m(z_m,z'_m)
=
\left\langle
\Xi_m^{\mathbb C}(z'_m),
\Xi_m^{\mathbb C}(z_m)
\right\rangle_{\rho}
=
\operatorname{Tr}\!\left(\widehat X_{z'_m}^{\,*}\widehat X_{z_m}\right),
\end{equation}
thus proving equation~\eqref{eq:K_quantum_HS}. 
%%%%%%%%%%%%%%%%
Restricting to real tangent vectors gives
\begin{equation}
K_m(v_m,w_m)
=
\operatorname{Tr}\!\left(X_{w_m}^*X_{v_m}\right),
\end{equation}
and taking real and imaginary parts gives equation~\eqref{eq:G_Omega_quantum_HS}.
\end{proof}

\begin{remark}\label{rem:quasi SLD from HS representative}
The Hilbert--Schmidt representative $X_{v_m}$ is the primary object in the faithful infinite-dimensional setting. Since $\varrho_m$ is faithful, $\varrho_m^{1/2}$ is injective and has dense range. Hence $X_{v_m}$ determines a densely defined logarithmic-derivative-type operator on $\operatorname{Ran}(\varrho_m^{1/2})$ by
\begin{equation}\label{eq:quasi-SLD}
L_{v_m}:\operatorname{Ran}(\varrho_m^{1/2})\longrightarrow\mathcal H,
\qquad
L_{v_m}(\varrho_m^{1/2}\xi):=X_{v_m}\xi .
\end{equation}
Equivalently,
\begin{equation}
L_{v_m}=X_{v_m}\varrho_m^{-1/2},
\qquad
X_{v_m}=L_{v_m}\varrho_m^{1/2}
\end{equation}
on $\operatorname{Ran}(\varrho_m^{1/2})$.

The operator $L_{v_m}$ is symmetric on $\operatorname{Ran}(\varrho_m^{1/2})$. Indeed, by construction,
\begin{equation}
X_{v_m}\in
\overline{\operatorname{span}_{\mathbb R}
\{a\varrho_m^{1/2}:a=a^*\in\mathcal B(\mathcal H)\}}^{\|\cdot\|_{HS}} .
\end{equation}
Thus there are bounded self-adjoint operators $a_n=a_n^*$ such that
\begin{equation}
a_n\varrho_m^{1/2}\longrightarrow X_{v_m}
\end{equation}
in Hilbert--Schmidt norm, hence also in operator norm. For $\xi,\eta\in\mathcal H$, one has
\begin{equation}
\langle a_n\varrho_m^{1/2}\xi,\varrho_m^{1/2}\eta\rangle
=
\langle \varrho_m^{1/2}\xi,a_n\varrho_m^{1/2}\eta\rangle .
\end{equation}
Passing to the limit gives
\begin{equation}
\langle L_{v_m}\varrho_m^{1/2}\xi,\varrho_m^{1/2}\eta\rangle
=
\langle \varrho_m^{1/2}\xi,L_{v_m}\varrho_m^{1/2}\eta\rangle ,
\end{equation}
which is the symmetry of $L_{v_m}$ on $\operatorname{Ran}(\varrho_m^{1/2})$.

The weak SLD relation of Proposition~\ref{prop:quantum_HS_score} may therefore be read as
\begin{equation}
2\,\partial_{v_m}\varrho_m
=
\varrho_m^{1/2}X_{v_m}^*
+
X_{v_m}\varrho_m^{1/2},
\end{equation}
that is, as the SLD equation written in terms of the Hilbert--Schmidt amplitude $X_{v_m}=L_{v_m}\varrho_m^{1/2}$. In general, however, $L_{v_m}$ need not be bounded, and products such as $\varrho_m L_{v_m}L_{w_m}$ need not be intrinsically meaningful. Thus, in infinite dimension, $X_{v_m}$ is the robust object, while $L_{v_m}$ should be treated as a derived densely defined operator.

If, in addition,
\begin{equation}
\varrho_m=\sum_j p_j |e_j\rangle\langle e_j|,
\qquad
p_j>0,
\end{equation}
and $\dot\varrho:=\partial_{v_m}\varrho_m$, then the formal matrix coefficients of $L_{v_m}$ are
\begin{equation}
(L_{v_m})_{ij}
=
\frac{2\dot\varrho_{ij}}{p_i+p_j}.
\end{equation}
Consequently,
\begin{equation}\label{eqn:HS-integrability-SLD}
\|X_{v_m}\|_{HS}^2
=
4\sum_{i,j}
\frac{|\dot\varrho_{ij}|^2p_j}{(p_i+p_j)^2}.
\end{equation}
Thus, in this eigenbasis representation, GNS-boundedness is equivalent to the finiteness of the right-hand side of equation~\eqref{eqn:HS-integrability-SLD}. In finite dimensions this condition is automatic on faithful states; in infinite dimensions it is a genuine restriction on admissible tangent directions.

Finally, if $L_{v_m}$ is bounded on $\operatorname{Ran}(\varrho_m^{1/2})$ with respect to the ambient Hilbert norm, then it extends uniquely to a bounded self-adjoint operator on $\mathcal H$, and the weak SLD relation becomes the usual symmetric logarithmic derivative equation
\begin{equation}
2\,\partial_{v_m}\varrho_m
=
\varrho_m L_{v_m}+L_{v_m}\varrho_m
\end{equation}
in $\mathcal B_1(\mathcal H)_{\mathrm{sa}}$.
\end{remark}

\subsection{Faithful quantum states in finite dimensions}
\label{subsec:faithful-quantum-states-finite-dimensional}

We now focus on the finite-dimensional case where  $\mathcal H$ is a finite-dimensional complex Hilbert space, and
\begin{equation}
\mathscr S_f(\mathcal H)
:=
\left\{
\varrho\in\mathcal B(\mathcal H)_{\mathrm{sa}}
:\;
\varrho>0,\;
\operatorname{Tr}(\varrho)=1
\right\}
\end{equation}
denotes the space of faithful density operators on $\mathcal H$, equivalently, the space of faithful states\footnote{Being $\mathcal{B}(\mathcal{H})$ finite-dimensional, all states are normal.} on $\mathscr{A}=\mathcal{B}(\mathcal{H})$. 
%%%%%%%%%%%%%%%%%%%%%%%%
In this finite-dimensional situation, 
$\mathscr S_f(\mathcal H)$ is an open submanifold of the affine hyperplane $\left\{
A\in\mathcal B(\mathcal H)_{\mathrm{sa}}:
\operatorname{Tr}(A)=1
\right\}$ \cite{GKM2005,CCIMV2019,CDILM2017}.
%%%%%%%%%%%%%%%%%%%%%%%%%%%%%
Thus, for every $\varrho\in\mathscr S_f(\mathcal H)$, it holds
\begin{equation}
\begin{split}
T_\varrho\mathscr S_f(\mathcal H)
&=
\left\{
v\in\mathcal B(\mathcal H)_{\mathrm{sa}}:
\operatorname{Tr}(v)=0
\right\},\\ & \\
T_\varrho^{\mathbb C}\mathscr S_f(\mathcal H)
&\cong
\left\{
z\in\mathcal B(\mathcal H):
\operatorname{Tr}(z)=0
\right\}.
\end{split}
\end{equation}
%%%%%%%%%%%%%

\begin{proposition}\label{prop:finite-dimensional-faithful-states}
Let $\mathcal{H}$ be finite-dimensional.
%%%%%%%%%%%%%%%%%%%%
The parametric model $(M,\mathrm{i},\mathscr{A})$ of faithful normal states where $M=\mathscr S_f(\mathcal H)$, $\mathscr{A}=\mathcal{B}(\mathcal{H})$, and $\mathrm i:\mathscr S_f(\mathcal H)\longrightarrow \mathscr S(\mathscr A)$ is given by 
\begin{equation}
\mathrm i(\varrho)=\rho_\varrho,
\qquad
\rho_\varrho(a):=\operatorname{Tr}(\varrho a),
\end{equation}
is a Hermitian regular GNS-smooth normal parametric model as in Definitions~\ref{def:gns-smooth-parametric-model} and~\ref{def:regular model}.
%%%%%%%%%%%%%%%%%

For every $\varrho\in M$ and every
$v\in T_\varrho M$, the Hilbert--Schmidt representative
$X_v$ defined in equation~\eqref{eq:def_X_vm_pre} is of the form
\begin{equation}\label{eqn:SLD-X-fin-dim-prop}
X_v=L_v\varrho^{1/2},
\end{equation}
where $L_v=L_v^*$ is the unique self-adjoint solution of the symmetric logarithmic derivative equation
\begin{equation}\label{eq:finite-dimensional-SLD-equation}
\frac{1}{2}(\varrho L_v+L_v\varrho)=v.
\end{equation}
%%%%%%%%%%%%%%%%%
Equivalently, if
\begin{equation}
\mathcal{J}_\varrho:\mathcal B(\mathcal H)_{\mathrm{sa}}
\longrightarrow
\mathcal B(\mathcal H)_{\mathrm{sa}},
\qquad
\mathcal{J}_\varrho(A):=\frac12(\varrho A+A\varrho),
\end{equation}
then
\begin{equation}
L_v=\mathcal{J}_\varrho^{-1}(v).
\end{equation}
%%%%%%%%%%%%%%%%%%%
For $z=v+\mathrm i w\in T_\varrho^{\mathbb C} M$, set $L_z:=L_v+\mathrm iL_w$.
%%%%%%%%%%%%%
Then, equation \eqref{eq:X_complexified_correct} reads
\begin{equation}
\widehat X_z
=
L_{\overline z}\varrho^{1/2},
\qquad
L_{\overline z}:=L_v-\mathrm iL_w,
\end{equation}
and  equations \eqref{eq:K_quantum_HS}, \eqref{eq:K_quantum_HS_real}, and \eqref{eq:G_Omega_quantum_HS} read
\begin{equation} \label{eqn:induced-tensor-fin-dim-faithful-states}
\begin{split} 
K_\varrho(z,z')
&=
\operatorname{Tr}\!\left(\widehat X_{z'}^{\,*}\widehat X_z\right)
=
\operatorname{Tr}\!\left(
\varrho\,L_{\overline{z'}}^{\,*}L_{\overline z}
\right) \\
K_\varrho(v,w)
&=
\operatorname{Tr}\!\left( X_{w}^{\,*}  X_v\right)
=
\operatorname{Tr}(\varrho L_wL_v) \\
G_\varrho(v,w)
&=
\Re\operatorname{Tr}(\varrho L_wL_v) =
\frac{1}{2}\operatorname{Tr}\!\left(
\varrho(L_vL_w+L_wL_v)
\right)\\
\Omega_\varrho(v,w)
&=
\Im\operatorname{Tr}(\varrho L_wL_v)
=
-\frac{1}{2i}
\operatorname{Tr}\!\left(\varrho[L_v,L_w]\right),
\end{split}
\end{equation}
%%%%%%%%%%%%%%%%%%%%%%%%%%%
In particular, $G$ coincides with the SLD quantum Fisher information metric \cite{P2009a,H1967a}. 
%%%%%%%%%%%%%%%%%%%%%%%
With the convention in which the Bures metric is normalized through the infinitesimal Bures distance, the latter is $G/4$.
%%%%%%%%%%%%%%%%%%%%%%%%%%%%%%%%%%%%
\end{proposition}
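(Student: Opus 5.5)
The plan is to first exhibit the candidate $X_v:=L_v\varrho^{1/2}$ with $L_v=\mathcal J_\varrho^{-1}(v)$, show that it satisfies the defining identity of Proposition~\ref{prop:boundedness-canonical-representative} and lies in $U_\varrho(V_\rho)$, and then conclude GNS-smoothness, the formulas for $K,G,\Omega$, and regularity.

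\emph{Invertibility of $\mathcal J_\varrho$.} In an eigenbasis $\varrho=\sum_j p_j|e_j\rangle\langle e_j|$ with $p_j>0$, the map $\mathcal J_\varrho$ acts on matrix entries as $A_{ij}\mapsto \tfrac12(p_i+p_j)A_{ij}$. Hence it is invertible on $\mathcal B(\mathcal H)$, and it preserves self-adjointness. Since $\varrho\mapsto\mathcal J_\varrho$ is polynomial and inversion is smooth on invertible operators of the finite-dimensional space $\mathcal B(\mathcal B(\mathcal H))$, the map $\varrho\mapsto\mathcal J_\varrho^{-1}$ is smooth.

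\emph{GNS-smoothness.} Here $\ell_a(\varrho)=\operatorname{Tr}(\varrho a)$ is affine, so it is smooth and $\langle v,d\ell_a\rangle=\operatorname{Tr}(va)$. Set $X_v:=L_v\varrho^{1/2}$. Since $L_v$ is self-adjoint, we have $X_v\in\{a\varrho^{1/2}:a=a^*\}=U_\varrho(V_\rho)$; in finite dimensions this span is already closed. For $a=a^*$,
\[
\Re\operatorname{Tr}(X_v^*a\varrho^{1/2})=\Re\operatorname{Tr}(\varrho L_v a)=\tfrac12\operatorname{Tr}((\varrho L_v+L_v\varrho)a)=\operatorname{Tr}(va).
\]
By Proposition~\ref{prop:boundedness-canonical-representative}, this gives condition~\ref{cond:GNS-boundedness}, and by uniqueness $U_\varrho(\Xi_\varrho(v))=X_v$. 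Injectivity also follows: if $\operatorname{Tr}(va)=0$ for all self-adjoint $a$, then $v=0$ by Lemma~\ref{lem:trace-class-separated-by-selfadjoints}. The model is normal trivially, because in finite dimensions every state is normal.

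\emph{Formulas for $K$, $G$, $\Omega$.} For $z=v+\mathrm i w$, equation~\eqref{eq:X_complexified_correct} gives $\widehat X_z=(L_v-\mathrm iL_w)\varrho^{1/2}=L_{\overline z}\varrho^{1/2}$. Substituting into \eqref{eq:K_quantum_HS} and using cyclicity of the trace yields
\[
K_\varrho(z,z')=\operatorname{Tr}(\varrho L_{\overline{z'}}^*L_{\overline z}).
\]
For real $v,w$ this reduces to $\operatorname{Tr}(\varrho L_wL_v)$. Because $\varrho$, $L_v$, $L_w$ are self-adjoint, $\overline{\operatorname{Tr}(\varrho L_wL_v)}=\operatorname{Tr}(\varrho L_vL_w)$. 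Taking real and imaginary parts then gives the symmetrized expression $\tfrac12\operatorname{Tr}(\varrho\{L_v,L_w\})$ for $G$ and $\tfrac1{2\mathrm i}\operatorname{Tr}(\varrho[L_w,L_v])=-\tfrac1{2\mathrm i}\operatorname{Tr}(\varrho[L_v,L_w])$ for $\Omega$. The symmetric part is by definition the SLD quantum Fisher metric; the factor $1/4$ relating it to the Bures metric will be quoted from the literature.

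\emph{Hermitian regularity.} Since $\dim M<\infty$, joint continuity is automatic by Remark~\ref{rem:regularity-finite-vs-infinite}. For smooth vector fields $X,Y$ the function $\varrho\mapsto\operatorname{Tr}(\varrho\,\mathcal J_\varrho^{-1}(Y_\varrho)\mathcal J_\varrho^{-1}(X_\varrho))$ is a composition of smooth maps, so $G$ and $\Omega$ are smooth. Remark~\ref{rem: hermitianregularity-equivalence} then gives Hermitian regularity.

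The only delicate point is the identification $U_\varrho(\Xi(v))=X_v$. It rests on membership in $U_\varrho(V_\rho)$, which is where closedness of the real span in finite dimensions is used; everything else is routine.
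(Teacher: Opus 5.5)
Your proposal is correct and follows the paper's proof essentially step by step. Both arguments invert $\mathcal J_\varrho$ in the eigenbasis of $\varrho$, set $X_v=L_v\varrho^{1/2}$, check that it lies in $U_\varrho(V_\rho)$ and satisfies the compatibility identity, invoke the uniqueness in Proposition~\ref{prop:boundedness-canonical-representative}, read off $K$, $G$ and $\Omega$ by cyclicity of the trace, and deduce Hermitian regularity from the smoothness of $\varrho\mapsto\mathcal J_\varrho^{-1}$. The only minor difference is that you obtain injectivity from Lemma~\ref{lem:trace-class-separated-by-selfadjoints}, whereas the paper simply takes $a=v$ to get $\operatorname{Tr}(v^2)=0$.
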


\begin{proof}
Recalling that $M$ is an open submanifold of the affine hyperplane  $\left\{
A\in\mathcal B(\mathcal H)_{\mathrm{sa}}:
\operatorname{Tr}(A)=1
\right\}$, for every $a\in\mathscr{A}_{\mathrm{sa}}=\mathcal B(\mathcal H)_{\mathrm{sa}}$, the expectation-value map $\ell_a(\varrho)=\operatorname{Tr}(\varrho a)$ is smooth, thus proving condition \ref{cond:smoothness-of-expectations} in Definition \ref{def:gns-smooth-parametric-model}.
%%%%%%%%%%%%%%%%%%%
Moreover, for
$v\in T_\varrho M$ one has
\begin{equation}\label{eqn:derivative-quant-exp-value-fin-dim}
\langle v,\mathrm d\ell_a(\varrho)\rangle
=
\operatorname{Tr}(va).
\end{equation}
%%%%%%%%%%%%%%%%%%%%%%%%

Since $\varrho>0$, the linear map
\begin{equation}
\mathcal{J}_\varrho:\mathcal B(\mathcal H)_{\mathrm{sa}}
\longrightarrow
\mathcal B(\mathcal H)_{\mathrm{sa}},
\qquad
\mathcal{J}_\varrho(a):=\frac12(\varrho a+a\varrho),
\end{equation}
is an isomorphism. 
%%%%%%%%%%%%%%%%
Indeed, using the spectral decomposition of $\varrho$ given by 
\begin{equation}
\varrho=\sum_i p_i |e_i\rangle\langle e_i|,
\qquad
p_i>0,
\end{equation}
then
\begin{equation}
(\mathcal{J}_\varrho(a))_{ij}
=
\frac{p_i+p_j}{2}a_{ij},
\end{equation}
where the components are computed with respect to the orthonormal basis diagonalizing $\varrho$.
%%%%%%%%%%%%%%%%%%%%%%%%%
Hence $\mathcal{J}_\varrho(a)=0$ implies $a=0$, and finite-dimensionality gives surjectivity.
%%%%%%%%%%%%%%%%%%%%%%%%%%%%%%%%%%%%%
Therefore, for every
$v\in T_\varrho M$, there exists a unique self-adjoint
operator $L_v$ satisfying
\begin{equation}\label{eqn:SLD-fin-dim}
\frac{1}{2}(\varrho L_v+L_v\varrho)=v.
\end{equation}
%%%%%%%%%%%%%%%%%%%%%%%
In the eigenbasis of $\varrho$, this operator is given explicitly by
\begin{equation}
(L_v)_{ij}=\frac{2v_{ij}}{p_i+p_j}.
\end{equation}
%%%%%%%%%%%%%%%%%%%%%%%%%%%%%%%%

We now identify the canonical Hilbert--Schmidt representative $X_v$ in equation \eqref{eq:def_X_vm_pre}.
Set
\begin{equation}\label{eqn:SLD-X-fin-dim}
X_v:=L_v\varrho^{1/2}.
\end{equation}
Since $L_v=L_v^*$, one has
\begin{equation}
X_v\in
\left\{
a\varrho^{1/2}:a\in\mathcal B(\mathcal H)_{\mathrm{sa}}
\right\}
=
U_\varrho(V_{\rho_\varrho}).
\end{equation}
Moreover, for every $a=a^*\in\mathcal B(\mathcal H)$,
\begin{align}
\Re\operatorname{Tr}(X_v^*a\varrho^{1/2})
&=
\Re\operatorname{Tr}(\varrho^{1/2}L_v a\varrho^{1/2})
\nonumber\\
&=
\Re\operatorname{Tr}(\varrho L_v a)
\nonumber\\
&=
\frac12\operatorname{Tr}\!\left((\varrho L_v+L_v\varrho)a\right)
\nonumber\\
&=
\operatorname{Tr}(va)
\nonumber\\
&=
\langle v,\mathrm d\ell_a(\varrho)\rangle .
\end{align}
Thus equation~\eqref{eqn:canonical-real-representative} holds with Hilbert--Schmidt image $X_v=L_v\varrho^{1/2}$. By Proposition~\ref{prop:boundedness-canonical-representative}, condition~\ref{cond:GNS-boundedness} in Definition~\ref{def:gns-smooth-parametric-model} follows, and $X_v$ is the Hilbert--Schmidt representative of the canonical GNS vector lift.

Condition \ref{cond:injectivity} in Definition~\ref{def:gns-smooth-parametric-model} can be proved considering  equation \eqref{eqn:derivative-quant-exp-value-fin-dim} with $a=v$.
%%%%%%%%%%%%%%%%%%%%%%%%%%%%%%%%

We now compute the induced bilinear forms. 
%%%%%%%%%%%%%%%%%%%%%%%
Let $z=v+\mathrm i w\in T_\varrho^{\mathbb C}M$.
%%%%%%%%%%%%%%%%%%%%%%%%%%%%%%%%%%%%%%%%
Equations~\eqref{eq:X_complexified_correct} and \eqref{eqn:SLD-X-fin-dim-prop} imply
\begin{equation}
\widehat X_z
=
X_v-\mathrm iX_w
=
(L_v-\mathrm iL_w)\varrho^{1/2}
=
L_{\overline z}\varrho^{1/2}.
\end{equation}
%%%%%%%%%%%%%%%%%%%%%%%%%%
Equation \eqref{eqn:induced-tensor-fin-dim-faithful-states} then follows by direct inspection.
%%%%%%%%%%%%%%%%%%%%%%%%%%

It remains only to note regularity of the model in the sense of Definition \ref{def:regular model}.
%%%%%%%%%%%%%%%%%%%%%%
The map
\begin{equation}
(\varrho,v)\longmapsto L_v=\mathcal{J}_\varrho^{-1}(v)
\end{equation}
is smooth on the finite-dimensional vector bundle
$T\mathscr S_f(\mathcal H)$, because $\varrho\mapsto\mathcal{J}_\varrho$ is
smooth and takes values in the open set of invertible linear maps on
$\mathcal B(\mathcal H)_{\mathrm{sa}}$. 
%%%%%%%%%%%%%%
Hence, the formulas in equation \eqref{eqn:induced-tensor-fin-dim-faithful-states} imply the model is Hermitian regular according to Definition \ref{def:regular model}.
\end{proof}

\begin{remark}\label{rem:antisymmetric-part-SLD-commutators}
The antisymmetric part of the GNS Hermitian tensor has a natural interpretation
in the multiparameter setting. 
%%%%%%%%%%%%%
In the finite-dimensional faithful case, for
$\varrho\in M$ and $u,w\in T_\varrho M$, equation
\eqref{eqn:induced-tensor-fin-dim-faithful-states} gives
\begin{equation}
\Omega_\varrho(u,w)
=
-\frac{1}{2i}
\operatorname{Tr}\!\left(\varrho[L_u,L_w]\right).
\end{equation}
%%%%%%%%%%%%%%%%%%%%%%%%%%%%%
Therefore,  $\Omega_\varrho(u,w)$ is proportional to the expectation value of the commutator of the SLD operators associated with the tangent directions $u$ and $w$.
%%%%%%%%%%%%%%%%%%%%%%%%%%%
Consequently, $\Omega$ measures a local obstruction, at the level of the SLD representatives, to treating different tangent directions as mutually compatible
classical observables. 
%%%%%%%%%%%%%%%%%%%%%%%
If the SLD operators commute pairwise at $\varrho$, namely
if
\begin{equation}
[L_u,L_w]=0,
\qquad
\forall u,w\in T_\varrho M,
\end{equation}
then $\Omega_\varrho$ vanishes identically. 
%%%%%%%%%%%%%%%%%%%%%%
In multiparameter quantum estimation, this is the usual pointwise SLD-commutativity condition associated with local
quasi-classicality \cite{S2019}. 
%%%%%%%%%%%%%%%%%%%%%%
In finite dimension, pairwise commuting self-adjoint SLDs are simultaneously diagonalizable, and the corresponding
SLD-commutator obstruction to joint attainability disappears \cite{P2009a}.
%%%%%%%%%%%%%%%%%%%%%%%%%%%%%%%%%%%%

The converse, however, does not hold in general. 
%%%%%%%%%%%%%%%%%%%%%%%%%%%%%%%%%
The condition
$\Omega_\varrho=0$ only says that
\begin{equation}
\operatorname{Tr}\!\left(\varrho[L_u,L_w]\right)=0,
\qquad
\forall u,w\in T_\varrho M,
\end{equation}
not that the operators $L_u$ and $L_w$ commute. 
%%%%%%%%%%%%%%%%%%%%%%%%%%%%%%%
Therefore isotropic submanifolds of $\Omega$ need not be quasiclassical models in the strong sense of pairwise
commuting SLDs. 
%%%%%%%%%%%%%%%%%%%%%%%%
Moreover, this is a pointwise tangent-space condition. 
%%%%%%%%%%%%%%%%%%%%%%%%%%%%%%
In particular, in one-dimensional models the pairwise commutativity condition is vacuous, and should not be confused with a global statement that the whole family of states is simultaneously diagonalizable.
%%%%%%%%%%%%%%%%%%%%%%%%%%%%%%%%%%
\end{remark}

\begin{remark}[Relation with mean Uhlmann curvature]\label{rem:mean-uhlmann-curvature}
In the finite-dimensional faithful setting, the skew-symmetric tensor obtained
above is closely related to a quantity already used in multiparameter quantum
estimation. In the notation common in that literature, the mean Uhlmann
curvature is often written as
\begin{equation}
U_{\mu\nu}
:=
-\frac{i}{4}\operatorname{Tr}\!\left(\rho[L_\mu,L_\nu]\right),
\end{equation}
where $L_\mu$ denotes the SLD associated with the parameter $\lambda^\mu$.
This quantity appears, for instance, in the compatibility condition for
multiparameter quantum estimation and in the study of mixed-state quantum holonomies and finite-temperature geometric phases \cite{RJD2016,CSDV2019,CSV2018,CSV2018a,LVSC2019,BLVSC2019}.
%%%%%%%%%%%%%%%%%%%%%%%%%%%%%%%%
In particular, equation~\eqref{eqn:induced-tensor-fin-dim-faithful-states}
gives
\begin{equation}
\Omega_{\mu\nu}
=
-\frac{1}{2i}\operatorname{Tr}\!\left(\rho[L_\mu,L_\nu]\right)
=
-2U_{\mu\nu}.
\end{equation}
Thus the two-form $\Omega$ recovers, up to the conventional normalization and
sign, the mean Uhlmann curvature. The additional point of the present
construction is that this skew tensor is not introduced through SLD
commutators or mixed-state holonomy, but arises   as the imaginary
part of the Hermitian tensor induced by the dual GNS fibration.
\end{remark}

\begin{example}[Faithful qubits]\label{exmp:qubit-case-omega-not-closed}
We now specialize Proposition~\ref{prop:finite-dimensional-faithful-states} to the case $\mathcal H=\mathbb C^2$.
%%%%%%%%%%%%%%%%%%%%%
The aim is to show that $\Omega$ is not closed here, in sharp contrast with what happens in the case of pure quantum states in Section \ref{sec:quantum-pure-states}.
%%%%%%%%%%%%%%%%%%%%%%%%%%%%%%%

Let $\sigma_1,\sigma_2,\sigma_3$ be the Pauli matrices, and identify the manifold of faithful qubit states with the interior of the Bloch ball, namely
\begin{equation}
M
=
\left\{
\mathbf r=(r_1,r_2,r_3)\in\mathbb R^3:
|\mathbf r|<1
\right\}.
\end{equation}
The corresponding density operator is
\begin{equation}
\varrho_{\mathbf r}
:=
\frac12(I+\mathbf r\cdot\sigma),
\end{equation}
where
\begin{equation}
\mathbf r\cdot\sigma
:=
r_1\sigma_1+r_2\sigma_2+r_3\sigma_3,
\end{equation}
and the associated normal state on $\mathscr A=\mathcal B(\mathbb C^2)$ is
\begin{equation}
\rho_{\mathbf r}(a):=\operatorname{Tr}(\varrho_{\mathbf r}a).
\end{equation}

Let
\begin{equation}
u=(u_1,u_2,u_3)\in T_{\mathbf r}M\cong\mathbb R^3.
\end{equation}
Then
\begin{equation}
\partial_u\varrho_{\mathbf r}
=
\frac12\,u\cdot\sigma.
\end{equation}
By Proposition~\ref{prop:finite-dimensional-faithful-states},  we have $X_u=L_u\varrho_{\mathbf r}^{1/2}$, where $L_u=L_u^*$ is the unique solution of the SLD equation \eqref{eq:finite-dimensional-SLD-equation}.
%%%%%%%%%%%%%%%%%%%%%%%%%%%%%%%
Writing
\begin{equation}
L_u=\alpha_u I+\beta_u\cdot\sigma,
\end{equation}
and using
\begin{equation}
(\mathbf a\cdot\sigma)(\mathbf b\cdot\sigma)
=
(\mathbf a\cdot\mathbf b)I
+
i(\mathbf a\times\mathbf b)\cdot\sigma,
\end{equation}
one obtains
\begin{equation}
\alpha_u+\mathbf r\cdot\beta_u=0,
\qquad
\beta_u+\alpha_u\mathbf r=u.
\end{equation}
Therefore
\begin{equation}
\alpha_u
=
-\frac{\mathbf r\cdot u}{1-|\mathbf r|^2},
\qquad
\beta_u
=
u+
\frac{\mathbf r\cdot u}{1-|\mathbf r|^2}\mathbf r,
\end{equation}
and hence
\begin{equation}\label{eqn:qubit-SLD-Lu}
L_u
=
-\frac{\mathbf r\cdot u}{1-|\mathbf r|^2}I
+
\left(
u+
\frac{\mathbf r\cdot u}{1-|\mathbf r|^2}\mathbf r
\right)\cdot\sigma.
\end{equation}

For a complexified tangent vector
\begin{equation}
z=u+iw\in T_{\mathbf r}^{\mathbb C}M\cong\mathbb C^3,
\qquad
u,w\in T_{\mathbf r}M,
\end{equation}
we use the complex-linear extension
\begin{equation}
L_z:=L_u+iL_w.
\end{equation}
Equivalently, writing
\begin{equation}
L_z=\alpha_z I+\beta_z\cdot\sigma,
\end{equation}
one has
\begin{equation}
\alpha_z
=
-\frac{\mathbf r\cdot z}{1-|\mathbf r|^2},
\qquad
\beta_z
=
z+
\frac{\mathbf r\cdot z}{1-|\mathbf r|^2}\mathbf r.
\end{equation}
However, the Hilbert--Schmidt Riesz-vector representative corresponding to the
complexified dual lift is not $L_z\varrho_{\mathbf r}^{1/2}$, but rather
\begin{equation}
\widehat X_z
=
L_{\overline z}\varrho_{\mathbf r}^{1/2},
\end{equation}
as in Proposition~\ref{prop:finite-dimensional-faithful-states}.

Consequently, for
$z,z'\in T_{\mathbf r}^{\mathbb C}M$, the induced Hermitian tensor is
\begin{equation}
K_{\mathbf r}(z,z')
=
\operatorname{Tr}\!\left(\widehat X_{z'}^{\,*}\widehat X_z\right)
=
\operatorname{Tr}\!\left(
\varrho_{\mathbf r}L_{\overline{z'}}^{\,*}L_{\overline z}
\right).
\end{equation}
A direct computation gives
\begin{equation}\label{eqn:qubit-Hermitian-tensor}
K_{\mathbf r}(z,z')
=
\overline z\cdot z'
+
\frac{(\mathbf r\cdot\overline z)(\mathbf r\cdot z')}{1-|\mathbf r|^2}
-
i\,\mathbf r\cdot(\overline z\times z').
\end{equation}
The sign of the last term is fixed by the convention
$K_{\mathbf r}(u,v)=\operatorname{Tr}(\varrho_{\mathbf r}L_vL_u)$ on real tangent
vectors.

In particular, for the coordinate vector fields $\partial_{r_i}$, one has
\begin{equation}
L_i:=L_{\partial_{r_i}}
=
\sigma_i+
\frac{r_i}{1-|\mathbf r|^2}
(\mathbf r\cdot\sigma-I).
\end{equation}
Set
\begin{equation}
b_i
:=
e_i+
\frac{r_i}{1-|\mathbf r|^2}\mathbf r,
\end{equation}
so that
\begin{equation}
L_i
=
-\frac{r_i}{1-|\mathbf r|^2}I+b_i\cdot\sigma.
\end{equation}

For real tangent vectors $u,v\in T_{\mathbf r}M$, equation
\eqref{eqn:qubit-Hermitian-tensor} gives
\begin{equation}
G_{\mathbf r}(u,v)
=
\Re K_{\mathbf r}(u,v)
=
u\cdot v+
\frac{(\mathbf r\cdot u)(\mathbf r\cdot v)}{1-|\mathbf r|^2}.
\end{equation}
Thus $G$ is the SLD quantum Fisher metric on the manifold of faithful qubit
states. With the convention in which the Bures metric is normalized through the
infinitesimal Bures distance, this is four times the Bures metric
\cite{BZ2006,D1999a}.

For the antisymmetric tensor, the convention in
Proposition~\ref{prop:finite-dimensional-faithful-states} gives
\begin{equation}
\Omega_{\mathbf r}(u,v)
=
\Im K_{\mathbf r}(u,v)
=
-\frac{1}{2i}
\operatorname{Tr}\!\left(
\varrho_{\mathbf r}[L_u,L_v]
\right).
\end{equation}
Since
\begin{equation}
[L_i,L_j]
=
2i(b_i\times b_j)\cdot\sigma,
\end{equation}
we get
\begin{equation}
\Omega_{ij}(\mathbf r)
:=
\Omega_{\mathbf r}(\partial_{r_i},\partial_{r_j})
=
-\operatorname{Tr}\!\left(
\varrho_{\mathbf r}(b_i\times b_j)\cdot\sigma
\right).
\end{equation}
Using
\begin{equation}
\operatorname{Tr}(\varrho_{\mathbf r}\,\mathbf a\cdot\sigma)
=
\mathbf r\cdot\mathbf a ,
\qquad
\forall \mathbf a\in\mathbb R^3,
\end{equation}
it follows that
\begin{equation}
\Omega_{ij}(\mathbf r)
=
-\mathbf r\cdot(b_i\times b_j).
\end{equation}
The terms containing $\mathbf r$ twice vanish by antisymmetry of the vector
product, so
\begin{equation}
\mathbf r\cdot(b_i\times b_j)
=
\mathbf r\cdot(e_i\times e_j)
=
\varepsilon_{ijk}r_k.
\end{equation}
Therefore
\begin{equation}\label{eqn:qubit-Omega}
\Omega
=
-r_3\,dr_1\wedge dr_2
+
r_2\,dr_1\wedge dr_3
-
r_1\,dr_2\wedge dr_3.
\end{equation}
Consequently,
\begin{equation}
d\Omega
=
-3\,dr_1\wedge dr_2\wedge dr_3
\neq 0.
\end{equation}
Thus, although the faithful qubit model is a finite-dimensional regular model,
the corresponding antisymmetric tensor $\Omega$ need not be closed. This is in
sharp contrast with the pure-state case, where $\Omega$ is, up to normalization,
the Fubini--Study symplectic form.
\end{example}
%%%%%%%%%%%%%%

\subsection{Displaced thermal states and fixed-temperature submodels}
\label{subsec:displaced-thermal-gaussian-model}

Let $\mathcal H:=\ell^2(\mathbb N_0)$, with canonical orthonormal basis
$(e_n)_{n\geq 0}$.\footnote{If $(h_n)_{n\geq 0}$ denotes the Hermite basis of
$L^2(\mathbb R)$, the unitary map $U:L^2(\mathbb R)\to\ell^2(\mathbb N_0)$
given by $U(h_n)=e_n$ identifies the Schr\"odinger realization of the harmonic
oscillator with the number-basis realization used here. We work on
$\ell^2(\mathbb N_0)$ because the thermal reference states are diagonal in this
basis.}
On the dense finite-particle subspace
\begin{equation}
\mathcal D_{\mathrm{fin}}
:=
\operatorname{span}\{e_n:\ n\in\mathbb N_0\}
\end{equation}
we have the annihilation and creation operators defined, respectively, by
\begin{equation}
ae_n=\sqrt n\,e_{n-1},
\qquad
a^\dagger e_n=\sqrt{n+1}\,e_{n+1}.
\end{equation}
We set
\begin{equation}
N:=a^\dagger a,
\qquad
Q:=\frac{a+a^\dagger}{\sqrt2},
\qquad
P:=\frac{a-a^\dagger}{i\sqrt2}.
\end{equation}
Then, on $\mathcal D_{\mathrm{fin}}$,
\begin{equation}
[Q,P]=iI,
\qquad
N=\frac12(Q^2+P^2-I).
\end{equation}

For $(x,y)\in\mathbb R^2$, let
\begin{equation}\label{eqn:weyl-displacement-operator}
W(x,y):=\exp\!\left(i(yQ-xP)\right)
\end{equation}
be the Weyl displacement operator. With this convention,
\begin{equation}
W(x,y)^*QW(x,y)=Q+xI,
\qquad
W(x,y)^*PW(x,y)=P+yI.
\end{equation}
Equivalently,
\begin{equation}
W(x,y)QW(x,y)^*=Q-xI,
\qquad
W(x,y)PW(x,y)^*=P-yI.
\end{equation}

Let
\begin{equation}
M:=\mathbb R^2\times(0,\infty),
\qquad
m=(x,y,\beta).
\end{equation}
For $\beta>0$, set
\begin{equation}
q_\beta:=e^{-\beta},
\qquad
\bar n_\beta:=\frac{q_\beta}{1-q_\beta}
=
\frac{1}{e^\beta-1},
\qquad
t_\beta:=\tanh\!\left(\frac\beta2\right)
=
\frac{1-q_\beta}{1+q_\beta}.
\end{equation}
The thermal density operator is
\begin{equation}\label{eqn:thermal-operator}
\tau_\beta:=(1-q_\beta)q_\beta^N.
\end{equation}
Since
\begin{equation}
\tau_\beta e_n=(1-q_\beta)q_\beta^n e_n,
\end{equation}
all eigenvalues of $\tau_\beta$ are strictly positive and sum to $1$. Thus
$\tau_\beta$ is a faithful density operator on $\mathcal H$.

We define the displaced thermal density operator by
\begin{equation}\label{eqn:thermal_displaced_density_operator}
\varrho_{x,y,\beta}
:=
W(x,y)\tau_\beta W(x,y)^*.
\end{equation}
The associated normal state on $\mathscr A:=\mathcal B(\mathcal H)$ is
\begin{equation}
\rho_{x,y,\beta}(b)
:=
\operatorname{Tr}(\varrho_{x,y,\beta}b),
\qquad
b\in\mathscr A.
\end{equation}
We thus obtain a normal parametric model $(M,\mathrm i,\mathscr A)$ by setting
\begin{equation}
\mathrm i:M\longrightarrow\mathscr S(\mathscr A),
\qquad
\mathrm i(x,y,\beta):=\rho_{x,y,\beta}.
\end{equation}
Since $\tau_\beta$ is faithful and $W(x,y)$ is unitary, each
$\varrho_{x,y,\beta}$ is faithful. The family $\rho_{x,y,\beta}$ is the
three-dimensional family of one-mode displaced thermal states.

We shall prove that this model is GNS-smooth and Hermitian regular. The proof
has three steps. First, we prove smoothness of all expectation-value functions.
Second, we identify the canonical Hilbert--Schmidt representatives of the
coordinate tangent vectors. Third, we compute the induced Hermitian tensor and
its real and imaginary parts.

\begin{lemma}
\label{lem:displaced-thermal-smooth-expectations}
For every $b\in\mathscr{A}_{\mathrm{sa}}=\mathcal B(\mathcal H)_{\mathrm{sa}}$, the function
\begin{equation}
\ell_b:M\longrightarrow\mathbb R,
\qquad
\ell_b(x,y,\beta)
:=
\operatorname{Tr}(\varrho_{x,y,\beta}b)
\end{equation}
is smooth.
\end{lemma}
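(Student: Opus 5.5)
The strategy is to show that the density-operator map $m\mapsto\varrho_{x,y,\beta}$ is smooth into $\mathcal B_1(\mathcal H)_{\mathrm{sa}}$ with the trace norm. Then $\ell_b=\Lambda_b\circ\varrho$, where $\Lambda_b(T)=\operatorname{Tr}(Tb)$ is a continuous real-linear functional, exactly as in Proposition~\ref{prop:quantum_HS_score}. Since the Weyl group $(x,y)\mapsto W(x,y)$ is only strongly continuous, we do not differentiate $W$ in operator norm. Instead, we write the displacement as a flow generated by unbounded operators and absorb the unboundedness into the trace-class operator $\tau_\beta$, which decays exponentially in $N$.

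\textbf{Step 1: smoothness in $\beta$.} Take $\tau_\beta=\sum_n(1-q_\beta)q_\beta^n|e_n\rangle\langle e_n|$. For every $k$, the $k$-th $\beta$-derivative of the coefficients is bounded by $C_k(1+n)^k e^{-\beta' n}$, uniformly for $\beta$ in a compact subset of $(0,\infty)$ and some $\beta'>0$. Hence $\beta\mapsto\tau_\beta$ is smooth in trace norm. More precisely, $\beta\mapsto (1+N)^{j}\tau_\beta(1+N)^{j}$ is smooth in trace norm for every $j$.

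\textbf{Step 2: handling the displacement.} Write $W(x,y)=W(x,0)W(0,y)$ up to a scalar phase, which cancels in $W\tau W^*$. Formally,
\[
\partial_x\bigl(W\tau W^*\bigr)=-i\bigl[P',\,W\tau W^*\bigr],
\]
where $P'$ is the conjugated generator. Pulled back by $W$, this derivative becomes $W(\text{commutators of }Q,P\text{ with }\tau_\beta)W^*$. Iterating, every mixed derivative $\partial_x^j\partial_y^k\partial_\beta^l$ of $\varrho$ has the form
\[
W(x,y)\,\mathrm{ad}_{P}^{j}\,\mathrm{ad}_{Q}^{k}\bigl(\partial_\beta^l\tau_\beta\bigr)\,W(x,y)^*,
\]
up to signs. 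By Step 1, each such operator is trace class, since $Q$ and $P$ are $(1+N)^{1/2}$-bounded and $(1+N)^{r}\tau_\beta(1+N)^r$ is trace class. To justify these derivatives in trace norm, I will write
\[
T=(1+N)^{-r}\bigl[(1+N)^{r}T(1+N)^{r}\bigr](1+N)^{-r}.
\]
I will then use that $W(x,y)(1+N)^{-r}$ is norm-smooth after composing with $(1+N)^{-r-1/2}$-type bounds, or equivalently that $(1+N)^{r}W(x,y)(1+N)^{-r-s}$ is smooth in operator norm for $s$ large. This follows from the explicit Fock-basis matrix elements, or from the resolvent estimate
\[
\|(1+N)^{r}W(1+N)^{-r}\|\le C(1+|x|+|y|)^{2r}.
\]
The product of norm-smooth bounded factors with a trace-norm-smooth middle factor is trace-norm smooth.

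\textbf{Main obstacle.} I expect the hard part to be Step 2: establishing these weighted estimates for the Weyl operators and the operator-norm differentiability of $(x,y)\mapsto(1+N)^{-s}W(x,y)$ with values in $\mathcal B(\mathcal H)$. I would first try a Duhamel formula,
\[
W(x+h,y)-W(x,y)=\int_0^h \partial_t W\,dt,
\]
on $\mathcal D_{\mathrm{fin}}$. The formal derivative $-iP\,W$ is bounded once it is composed with $(1+N)^{-1/2}$, and the estimate then extends by density. After this, $\ell_b$ is a composition of a smooth map with a bounded linear functional, and is therefore smooth.
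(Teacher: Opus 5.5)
Your proposal is correct, but it takes a genuinely different route from the paper.

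\textbf{The paper's route.} The paper avoids unbounded operators altogether. It uses the Glauber--Sudarshan representation of $\tau_\beta$ as a Gaussian mixture of coherent projectors. After displacement this gives $\varrho_{x,y,\beta}=(\pi\bar n_\beta)^{-1}\int e^{-|\eta-\alpha|^2/\bar n_\beta}|\eta\rangle\langle\eta|\,\mathrm d\eta$ with $\alpha=(x+iy)/\sqrt2$. So $\ell_b$ is the Gaussian convolution of the bounded continuous function $\eta\mapsto\langle\eta|b|\eta\rangle$. Smoothness in $(\alpha,\bar n_\beta)$ follows by differentiating the Gaussian kernel under the integral with local Gaussian majorants, and one then composes with the smooth map $\beta\mapsto\bar n_\beta$.

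\textbf{Your route.} You prove the stronger statement that $m\mapsto\varrho_m$ is smooth into $\mathcal B_1(\mathcal H)_{\mathrm{sa}}$. You do this through the factorization $\varrho_{x,y,\beta}=[W(x,y)(1+N)^{-r}]\,[(1+N)^r\tau_\beta(1+N)^r]\,[W(x,y)(1+N)^{-r}]^*$. The middle factor is trace-norm smooth and the outer factors are operator-norm smooth.

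\textbf{Comparison.} The paper's argument is shorter and elementary. However, it relies on the special fact that thermal states have a regular, positive, Gaussian $P$-function. Read as a $\mathcal B_1$-valued Bochner integral, it would incidentally also give trace-norm smoothness, since $\eta\mapsto|\eta\rangle\langle\eta|$ is bounded and trace-norm continuous. Your argument needs weighted estimates and some care with domains. In exchange, it applies to displacements of any reference state for which $(1+N)^r\tau(1+N)^r$ is trace class for every $r$. It also directly yields the trace-norm smoothness assumed in Proposition~\ref{prop:quantum_HS_score}. Finally, it gives the commutator formulas $\partial_x\varrho=W(-i[P,\tau_\beta])W^*$ and $\partial_y\varrho=W(i[Q,\tau_\beta])W^*$, which the paper justifies separately by finite-rank truncation in Lemma~\ref{lem:displaced-thermal-HS-representatives}.

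\textbf{A step to make explicit in Step 2.} Boundedness of $P\,W(x,y)(1+N)^{-1/2}=W(x,y)(P+y)(1+N)^{-1/2}$ only gives norm-Lipschitz continuity of $W(x,y)(1+N)^{-1/2}$. It does not give norm differentiability, because $W$ is merely strongly continuous. What closes the Duhamel argument is compactness: $p(Q,P)(1+N)^{-r}$ is compact whenever $r>\deg p/2$, and a strongly continuous unitary family composed with a compact operator is norm-continuous. Hence the derivatives $W(x,y)\,p(Q,P)(1+N)^{-r}$ are norm-continuous. This makes $W(x,y)(1+N)^{-r}$ of class $C^k$ in operator norm for $r>k/2$, which suffices because $r$ is arbitrary.
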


\begin{proof}
We use the coherent-state representation of the thermal state. Set
\begin{equation}
\alpha:=\frac{x+iy}{\sqrt2},
\qquad
D(\alpha):=W(\sqrt2\,\operatorname{Re}\alpha,\sqrt2\,\operatorname{Im}\alpha),
\end{equation}
so that
\begin{equation}
D(\alpha)=\exp(\alpha a^\dagger-\overline{\alpha}a).
\end{equation}
The normalized coherent states are  
\begin{equation}
|\alpha\rangle:=D(\alpha)e_0,
\end{equation}
and satisfy
\begin{equation}
|\alpha\rangle
=
e^{-|\alpha|^2/2}
\sum_{n=0}^{\infty}
\frac{\alpha^n}{\sqrt{n!}}e_n.
\end{equation}
The thermal state admits the Glauber--Sudarshan representation \cite{G1963a,S1963a,F1989}
\begin{equation}\label{eqn:thermal-state-coherent-representation}
\tau_\beta
=
\frac{1}{\pi\bar n_\beta}
\int_{\mathbb C}
e^{-|\eta|^2/\bar n_\beta}
|\eta\rangle\langle\eta|\,\mathrm{d}\eta .
\end{equation}
Equivalently, using the Weyl relation
\begin{equation}
D(\alpha)D(\eta)
=
\exp\left(
\frac{\alpha\overline{\eta}-\overline{\alpha}\eta}{2}
\right)
D(\alpha+\eta),
\end{equation}
we get
\begin{equation}\label{eqn:displaced-thermal-coherent-representation}
\varrho_{x,y,\beta}
=
\frac{1}{\pi\bar n_\beta}
\int_{\mathbb C}
e^{-|\eta-\alpha|^2/\bar n_\beta}
|\eta\rangle\langle\eta|\,\mathrm{d}\eta .
\end{equation}
Hence
\begin{equation}\label{eqn:ell-b-gaussian-convolution}
\ell_b(x,y,\beta)
=
\frac{1}{\pi\bar n_\beta}
\int_{\mathbb C}
e^{-|\eta-\alpha|^2/\bar n_\beta}
B_b(\eta)\,\mathrm{d}\eta,
\end{equation}
where
\begin{equation}
B_b(\eta):=\langle \eta|b|\eta\rangle.
\end{equation}
The function $B_b$ is bounded by $\|b\|$ and is continuous. Indeed, the
coherent-state map $\eta\mapsto|\eta\rangle$ is norm-continuous because
\begin{equation}
\langle \zeta|\eta\rangle
=
\exp\left(
-\frac{|\zeta|^2}{2}
-\frac{|\eta|^2}{2}
+\overline{\zeta}\eta
\right),
\end{equation}
and therefore
\begin{equation}
\||\zeta\rangle-|\eta\rangle\|^2
=
2-2\operatorname{Re}\langle \zeta|\eta\rangle
\longrightarrow 0
\qquad
\text{as }\eta\to\zeta.
\end{equation}

Identify $\mathbb C$ with $\mathbb R^2$ and write, for $s>0$,
\begin{equation}
\gamma_s(u):=\frac{1}{\pi s}e^{-|u|^2/s}.
\end{equation}
Then equation~\eqref{eqn:ell-b-gaussian-convolution} becomes
\begin{equation}
\ell_b(x,y,\beta)
=
\int_{\mathbb R^2}
\gamma_{\bar n_\beta}(\eta-\alpha)B_b(\eta)\,\mathrm{d}\eta.
\end{equation}
%%%%%%%%%%%%%%%%
Let $K\Subset\mathbb R^2$ be compact, and $[s_0,s_1]\Subset(0,\infty)$. For every
multi-index $\mu$ in the displacement variables and every $j\in\mathbb N_0$,
the derivative
\begin{equation}
\partial_\alpha^\mu\partial_s^j\gamma_s(\eta-\alpha)
\end{equation}
is a polynomial in $\eta-\alpha$ and $s^{-1}$ times
$e^{-|\eta-\alpha|^2/s}$. Hence, uniformly for
$\alpha\in K$ and $s\in[s_0,s_1]$, it is dominated by an integrable function of
$\eta$. Since $B_b\in \mathcal{L}^\infty(\mathbb R^2)$, dominated convergence allows
differentiation under the integral sign. The same local majorants also imply
continuity of all differentiated integrals. Thus
\begin{equation}
(\alpha,s)\longmapsto
\int_{\mathbb R^2}
\gamma_s(\eta-\alpha)B_b(\eta)\,\mathrm{d}\eta
\end{equation}
is smooth on $\mathbb R^2\times(0,\infty)$.
%%%%%%%%%%%
Finally, the map
\begin{equation}
\beta\longmapsto\bar n_\beta=\frac{1}{e^\beta-1}
\end{equation}
is smooth from $(0,\infty)$ to $(0,\infty)$, and
$(x,y)\mapsto \alpha=(x+iy)/\sqrt2$ is real-linear. Therefore
$(x,y,\beta)\mapsto\ell_b(x,y,\beta)$ is smooth.
\end{proof}

\begin{lemma}
\label{lem:displaced-thermal-HS-representatives}
Set
\begin{equation}
S_{x,y,\beta}:=\varrho_{x,y,\beta}^{1/2},
\qquad
\sigma_\beta:=\tau_\beta^{1/2}.
\end{equation}
Then the following operators are Hilbert--Schmidt:
\begin{equation}\label{eqn:displaced-thermal-Xx}
X_x
:=
2t_\beta W(x,y)Q\sigma_\beta W(x,y)^*
=
2t_\beta(Q-xI)S_{x,y,\beta},
\end{equation}
\begin{equation}\label{eqn:displaced-thermal-Xy}
X_y
:=
2t_\beta W(x,y)P\sigma_\beta W(x,y)^*
=
2t_\beta(P-yI)S_{x,y,\beta},
\end{equation}
and
\begin{equation}\label{eqn:displaced-thermal-Xbeta}
\begin{split} 
X_\beta
&:=
W(x,y)(\bar n_\beta I-N)\sigma_\beta W(x,y)^* \\
&=
\left(
\bar n_\beta I
-
\frac12\bigl((Q-xI)^2+(P-yI)^2-I\bigr)
\right)
S_{x,y,\beta}.
\end{split}
\end{equation}
Moreover,
\begin{equation}
X_x,\ X_y,\ X_\beta
\in
U_{\varrho_{x,y,\beta}}\bigl(V_{\rho_{x,y,\beta}}\bigr),
\end{equation}
where $V_{\rho_{x,y,\beta}}$ is as in equation \eqref{eq:HS_image_of_Vrho}.
%%%%%%%%%%%%%%%%%%%%%%%

For every $b\in\mathscr{A}_{\mathrm{sa}}=\mathcal B(\mathcal H)_{\mathrm{sa}}$, the coordinate derivatives of the
expectation-value functions satisfy
\begin{equation}\label{eqn:displaced-thermal-coordinate-compatibility}
\partial_j\ell_b(x,y,\beta)
=
\Re\operatorname{Tr}\!\left(X_j^*bS_{x,y,\beta}\right),
\qquad
j=x,y,\beta.
\end{equation}
Equivalently,
\begin{equation}\label{eqn:displaced-thermal-weak-sld-coordinate}
2\,\partial_j\varrho_{x,y,\beta}
=
S_{x,y,\beta}X_j^*+X_jS_{x,y,\beta},
\qquad
j=x,y,\beta,
\end{equation}
as trace-class identities.
\end{lemma}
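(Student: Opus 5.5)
The plan is to reduce everything to the undisplaced thermal state by unitary covariance, and there verify the weak SLD relation explicitly in the number basis.

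\emph{Hilbert--Schmidt property.} Since $W:=W(x,y)$ is unitary and $W\sigma_\beta W^*=S_{x,y,\beta}$, it suffices to show that $Q\sigma_\beta$, $P\sigma_\beta$ and $N\sigma_\beta$ are Hilbert--Schmidt. Because $\sigma_\beta e_n=(1-q_\beta)^{1/2}q_\beta^{n/2}e_n$, one has
\[
\|N\sigma_\beta\|_{HS}^2=(1-q_\beta)\sum_n n^2q_\beta^n<\infty ,
\]
and similarly for $a\sigma_\beta$ and $a^\dagger\sigma_\beta$, whose squared norms are $(1-q_\beta)\sum_n n q_\beta^n$ and $(1-q_\beta)\sum_n (n+1)q_\beta^n$. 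Hence $Q\sigma_\beta$ and $P\sigma_\beta$ are Hilbert--Schmidt. The second expressions in equations~\eqref{eqn:displaced-thermal-Xx}--\eqref{eqn:displaced-thermal-Xbeta} follow from the covariance relations $WQW^*=Q-xI$, $WPW^*=P-yI$, and hence $WNW^*=\frac12((Q-x)^2+(P-y)^2-I)$. These identities hold on the domain $W\mathcal D_{\mathrm{fin}}$, which is a core, and they extend to the range of $S_{x,y,\beta}$ because $\sigma_\beta$ maps into the domain of every polynomial in $a,a^\dagger$ (it has rapidly decaying coefficients).

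\emph{Membership in $U_\varrho(V_\rho)$.} For $A\in\{Q,P,\bar n_\beta I-N\}$ I would approximate $A$ by bounded self-adjoint truncations $A_k:=\Pi_kA\Pi_k$, where $\Pi_k$ is the projection onto $\operatorname{span}\{e_0,\dots,e_k\}$. Then $A_k\sigma_\beta\to A\sigma_\beta$ in Hilbert--Schmidt norm by dominated convergence on the sums above, since the tails go to zero. Conjugating by $W$ gives $W A_kW^*\,S=W A_k\sigma_\beta W^*\to X_j$, with $WA_kW^*$ bounded and self-adjoint. So $X_j$ lies in the closed real span, as required in equation~\eqref{eq:HS_image_of_Vrho}.

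\emph{Compatibility.} First I would show that $m\mapsto\varrho_m$ is smooth into $\mathcal B_1(\mathcal H)_{\mathrm{sa}}$ (at least $C^1$ along the coordinates), so that $\partial_j\ell_b=\operatorname{Tr}((\partial_j\varrho)b)$. Once this holds, the computation in the proof of Proposition~\ref{prop:quantum_HS_score} shows that \eqref{eqn:displaced-thermal-coordinate-compatibility} is equivalent to \eqref{eqn:displaced-thermal-weak-sld-coordinate}, via Lemma~\ref{lem:trace-class-separated-by-selfadjoints}. By covariance, $\partial_x\varrho=W\bigl(\partial_x\text{ of the displacement}\bigr)$; concretely,
\[
\partial_x\varrho=-\mathrm i\,[P,\varrho], \qquad \partial_y\varrho=\mathrm i\,[Q,\varrho]
\]
(up to the sign fixed by the displacement convention), and both equal $W(\cdot)W^*$ applied to $\mp\mathrm i[P,\tau_\beta]$ or $\pm\mathrm i[Q,\tau_\beta]$. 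Moreover $\partial_\beta\varrho=W(\partial_\beta\tau_\beta)W^*$ with
\[
\partial_\beta\tau_\beta=(\bar n_\beta-N)\tau_\beta ,
\]
since $\partial_\beta\log(1-q_\beta)=q_\beta/(1-q_\beta)=\bar n_\beta$. The $\beta$-identity is then immediate:
\[
\sigma_\beta(\bar n_\beta-N)\sigma_\beta+(\bar n_\beta-N)\sigma_\beta\,\sigma_\beta=2(\bar n_\beta-N)\tau_\beta ,
\]
because everything is diagonal. For $x$ the required identity is
\[
2t_\beta\bigl(\sigma_\beta Q\sigma_\beta+Q\tau_\beta\bigr)=-2\mathrm i[P,\tau_\beta]=\sqrt2\,\bigl([\tau_\beta,a]+[a^\dagger,\tau_\beta]\bigr)\quad(\text{up to sign}).
\]
In the number basis the matrix element between $e_{n-1}$ and $e_n$ of the left side is proportional to $t_\beta\sqrt n\,(1-q_\beta)(q^{n/2}q^{(n-1)/2}+q^{n-1})$, and of the right side to $\sqrt n\,(1-q_\beta)(q^{n-1}-q^n)$. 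Their ratio reduces to
\[
t_\beta(1+q^{1/2}\cdot q^{-1/2}\ldots),
\]
and I need to check carefully which combination arises. The intended identity is $t_\beta(p_{n-1}+p_n)=p_{n-1}-p_n$ for the relevant $\tau$-eigenvalues; here the SLD of a thermal Gaussian has coefficient $(p_i-p_j)/(p_i+p_j)=\tanh(\beta/2)$, exactly $t_\beta$, since $\frac{1-q}{1+q}=t_\beta$. The $y$ case is identical with $P$ in place of $Q$. Both sides are trace class because $Q\tau_\beta$ and $\sigma_\beta Q\sigma_\beta$ are products of Hilbert--Schmidt operators.

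\emph{Main obstacle.} The delicate step is trace-norm smoothness of $m\mapsto\varrho_m$, which is needed to justify $\partial_j\ell_b=\operatorname{Tr}((\partial_j\varrho)b)$ with the commutator formulas for unbounded $Q,P$. I would avoid it for the displacement directions by differentiating the Gaussian convolution formula~\eqref{eqn:ell-b-gaussian-convolution} from Lemma~\ref{lem:displaced-thermal-smooth-expectations} directly. Alternatively, I would show that $t\mapsto W(t,0)\tau_\beta W(t,0)^*$ is trace-norm differentiable with derivative $W[\,\cdot\,,\tau_\beta]W^*$, using that $P\tau_\beta$ and $\tau_\beta P$ are trace class and that $W(t,0)=e^{-\mathrm itP}$ is strongly differentiable on the range of $\sigma_\beta$. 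For $\beta$, the derivative in trace norm follows from dominated convergence in the eigenbasis. The weak SLD identities then give \eqref{eqn:displaced-thermal-coordinate-compatibility} by reversing the computation in Proposition~\ref{prop:quantum_HS_score}.
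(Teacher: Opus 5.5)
Your plan follows the paper's proof almost step by step: number-basis Hilbert--Schmidt norms, truncations $\Pi_kA\Pi_k$ for membership in $U_\varrho(V_\rho)$, reduction to $\tau_\beta$ by covariance, $\partial_x\varrho=-\mathrm i[P,\varrho]$, $\partial_y\varrho=\mathrm i[Q,\varrho]$ and $\partial_\beta\tau_\beta=(\bar n_\beta I-N)\tau_\beta$. Your signs are in fact right for $W(x,y)=\exp(\mathrm i(yQ-xP))$. However, the central verification in the $x$ and $y$ directions is wrong as written.

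You formed $SX_x+X_xS$ instead of $SX_x^*+X_xS$. At the reference point $X_x=2t_\beta Q\sigma_\beta$, whose adjoint is the bounded extension of $2t_\beta\sigma_\beta Q$. Hence $\sigma_\beta X_x^*=2t_\beta\tau_\beta Q$, and the identity to prove is
\[
-\mathrm i[P,\tau_\beta]=t_\beta\bigl(\tau_\beta Q+Q\tau_\beta\bigr),
\]
not $t_\beta(\sigma_\beta Q\sigma_\beta+Q\tau_\beta)$. The version you wrote is false. Its left side is not even self-adjoint. Moreover, with $p_n:=(1-q_\beta)q_\beta^n$, its $(e_{n-1},e_n)$ matrix element is $t_\beta\sqrt{n/2}\,(\sqrt{p_{n-1}p_n}+p_n)$, while that of $-\mathrm i[P,\tau_\beta]$ is $\sqrt{n/2}\,(p_{n-1}-p_n)$; equality would force $q_\beta=1$. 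That is why your ratio computation would not close.

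With the correct left side the $(e_{n-1},e_n)$ element is $t_\beta\sqrt{n/2}\,(p_{n-1}+p_n)$, and all elements off the first sub- and superdiagonal vanish. The transposed elements follow because both sides are self-adjoint. Your target relation $t_\beta(p_{n-1}+p_n)=p_{n-1}-p_n$, i.e.\ $t_\beta=(1-q_\beta)/(1+q_\beta)$, then finishes the argument, and the $y$ direction is identical with $P$ in place of $Q$. This is exactly what the paper obtains from $a\tau_\beta=q_\beta\tau_\beta a$ and $\tau_\beta a^\dagger=q_\beta a^\dagger\tau_\beta$. In the $\beta$ direction the same slip is harmless only because every operator involved is diagonal.

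On trace-norm differentiability, your second route is sound and slightly more direct than the paper's. Write $\tau_\beta=\sigma_\beta\sigma_\beta$. The quotient $h^{-1}(e^{-\mathrm ihP}-I)\sigma_\beta$ converges to $-\mathrm iP\sigma_\beta$ in Hilbert--Schmidt norm by dominated convergence, using $|e^{-\mathrm ihp}-1|\le|hp|$ and $P\sigma_\beta\in\mathcal B_2(\mathcal H)$. Multiplying Hilbert--Schmidt limits, and handling the right factor via adjoints, gives trace-norm convergence to $-\mathrm i[P,\tau_\beta]$. The paper instead approximates $\tau_\beta$ by finite-rank truncations and uses trace-norm convergence of the commutators.

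Your first route, differentiating the Gaussian convolution, only produces $\partial_j\ell_b$ as an integral. That integral would still have to be identified with $\operatorname{Tr}\bigl(W(-\mathrm i[P,\tau_\beta])W^*b\bigr)$. It also does not by itself give \eqref{eqn:displaced-thermal-weak-sld-coordinate} as an identity for the trace-norm derivative $\partial_j\varrho$.
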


\begin{proof}
We first check that the displayed operators are Hilbert--Schmidt. 
%%%%%%%%%%%%
Since
$\sigma_\beta$ is diagonal in the number basis, a direct computation shows that
\begin{equation}
\|Q\sigma_\beta\|_{HS}^2
=
\operatorname{Tr}(\tau_\beta Q^2)
<\infty,
\qquad
\|P\sigma_\beta\|_{HS}^2
=
\operatorname{Tr}(\tau_\beta P^2)
<\infty,
\end{equation}
and
\begin{equation}
\|N\sigma_\beta\|_{HS}^2
=
(1-q_\beta)\sum_{n=0}^{\infty} n^2q_\beta^n
<\infty.
\end{equation}
Thus $X_x$, $X_y$, and $X_\beta$ are Hilbert--Schmidt.
%%%%%%%%%%%%%%%%%%%%%%%%%%%

The alternative expressions in terms of $Q-xI$ and $P-yI$ follow from
\begin{equation}
S_{x,y,\beta}
=
W(x,y)\sigma_\beta W(x,y)^*
\end{equation}
and
\begin{equation}
W(x,y)QW(x,y)^*=Q-xI,
\qquad
W(x,y)PW(x,y)^*=P-yI.
\end{equation}
Similarly,
\begin{equation}
W(x,y)NW(x,y)^*
=
\frac12\bigl((Q-xI)^2+(P-yI)^2-I\bigr),
\end{equation}
on the corresponding displaced core, which gives the second expression for $X_\beta$.
%%%%%%%%%%%%%%%%%

We now prove that $X_x$, $X_y$, and $X_\beta$ belong to the distinguished real Hilbert--Schmidt subspace
$U_{\varrho_{x,y,\beta}}(V_{\rho_{x,y,\beta}})$. 
%%%%%%%%%%%%%%%%
Let $P_N$ denote the projection
onto $\operatorname{span}\{e_0,\ldots,e_N\}$. 
%%%%%%%%%%%%%%%%
Then, a direct computation shows that
\begin{equation}
P_NQP_N\sigma_\beta\longrightarrow Q\sigma_\beta,
\qquad
P_NPP_N\sigma_\beta\longrightarrow P\sigma_\beta,
\end{equation}
and
\begin{equation}
P_NNP_N\sigma_\beta\longrightarrow N\sigma_\beta
\end{equation}
in Hilbert--Schmidt norm. 
%%%%%%%%%%%%%
Since $P_NQP_N$, $P_NPP_N$, and $P_NNP_N$ are bounded
self-adjoint operators, it follows that $Q\sigma_\beta$, $P\sigma_\beta$, and $(\bar n_\beta I-N)\sigma_\beta$ belong to the real Hilbert--Schmidt closure of
\begin{equation}
\{a\tau_\beta^{1/2}:a=a^*\in\mathcal B(\mathcal H)\}.
\end{equation}
Conjugating by $W(x,y)$ gives the asserted membership for
$X_x$, $X_y$, and $X_\beta$.

It remains to verify the compatibility identities in equation \eqref{eqn:displaced-thermal-coordinate-compatibility}. 
%%%%%%%%%%%%%
Let
\begin{equation}
B:=W(x,y)^*bW(x,y).
\end{equation}
By the Weyl relations, varying $x$ or $y$ amounts, up to a scalar phase which
disappears under conjugation, to conjugating $\tau_\beta$ by
$e^{-ihP}$ or $e^{ihQ}$, respectively. Hence
\begin{equation}
\ell_b(x+h,y,\beta)
=
\operatorname{Tr}\left(
e^{-ihP}\tau_\beta e^{ihP}B
\right),
\end{equation}
and
\begin{equation}
\ell_b(x,y+h,\beta)
=
\operatorname{Tr}\left(
e^{ihQ}\tau_\beta e^{-ihQ}B
\right).
\end{equation}

To justify differentiation, let
\begin{equation}
\tau_\beta^{(N)}
:=
(1-q_\beta)\sum_{n=0}^N q_\beta^n |e_n\rangle\langle e_n|.
\end{equation}
For the finite-rank operators $\tau_\beta^{(N)}$, differentiation under
unitary conjugation gives
\begin{equation}
\frac{d}{dh}\bigg|_{h=0}
e^{-ihP}\tau_\beta^{(N)}e^{ihP}
=
-i[P,\tau_\beta^{(N)}],
\end{equation}
and
\begin{equation}
\frac{d}{dh}\bigg|_{h=0}
e^{ihQ}\tau_\beta^{(N)}e^{-ihQ}
=
i[Q,\tau_\beta^{(N)}].
\end{equation}
Moreover, $\tau_\beta^{(N)}\to\tau_\beta$ in trace norm and the commutators
converge in trace norm:
\begin{equation}
-i[P,\tau_\beta^{(N)}]\longrightarrow -i[P,\tau_\beta],
\qquad
i[Q,\tau_\beta^{(N)}]\longrightarrow i[Q,\tau_\beta].
\end{equation}
Indeed, these commutators have only first off-diagonal matrix coefficients,
with weights bounded by a constant times $\sqrt{n+1}q_\beta^n$, which are
summable. Therefore
\begin{equation}\label{eqn:dx-expectation-commutator}
\partial_x\ell_b(x,y,\beta)
=
\operatorname{Tr}\left(
W(x,y)\bigl(-i[P,\tau_\beta]\bigr)W(x,y)^*b
\right),
\end{equation}
and
\begin{equation}\label{eqn:dy-expectation-commutator}
\partial_y\ell_b(x,y,\beta)
=
\operatorname{Tr}\left(
W(x,y)i[Q,\tau_\beta]W(x,y)^*b
\right).
\end{equation}

For the $\beta$-direction, the diagonal expansion gives trace-norm
differentiability directly:
\begin{equation}
\partial_\beta\tau_\beta
=
(\bar n_\beta I-N)\tau_\beta.
\end{equation}
Hence
\begin{equation}\label{eqn:dbeta-expectation-commutator}
\partial_\beta\ell_b(x,y,\beta)
=
\operatorname{Tr}\left(
W(x,y)(\partial_\beta\tau_\beta)W(x,y)^*b
\right).
\end{equation}

We now rewrite the trace-norm derivatives in symmetric form. On
$\mathcal D_{\mathrm{fin}}$ one has
\begin{equation}
a\tau_\beta=q_\beta\tau_\beta a,
\qquad
\tau_\beta a^\dagger=q_\beta a^\dagger\tau_\beta.
\end{equation}
Using
\begin{equation}
Q=\frac{a+a^\dagger}{\sqrt2},
\qquad
P=\frac{a-a^\dagger}{i\sqrt2},
\end{equation}
one obtains
\begin{equation}\label{eqn:x-commutator-symmetric-form}
-i[P,\tau_\beta]
=
t_\beta(Q\tau_\beta+\tau_\beta Q),
\end{equation}
and
\begin{equation}\label{eqn:y-commutator-symmetric-form}
i[Q,\tau_\beta]
=
t_\beta(P\tau_\beta+\tau_\beta P).
\end{equation}
Finally,
\begin{equation}\label{eqn:beta-symmetric-form}
\partial_\beta\tau_\beta
=
(\bar n_\beta I-N)\tau_\beta
=
\frac12\left(
(\bar n_\beta I-N)\tau_\beta+\tau_\beta(\bar n_\beta I-N)
\right),
\end{equation}
because $\bar n_\beta I-N$ commutes with $\tau_\beta$.

Using equations~\eqref{eqn:displaced-thermal-Xx}--\eqref{eqn:displaced-thermal-Xbeta}
and $S_{x,y,\beta}=W(x,y)\sigma_\beta W(x,y)^*$, we compute
\begin{equation}
S_{x,y,\beta}X_x^*+X_xS_{x,y,\beta}
=
2t_\beta W(x,y)(\tau_\beta Q+Q\tau_\beta)W(x,y)^*.
\end{equation}
By equation~\eqref{eqn:x-commutator-symmetric-form}, this gives
\begin{equation}
S_{x,y,\beta}X_x^*+X_xS_{x,y,\beta}
=
2W(x,y)\bigl(-i[P,\tau_\beta]\bigr)W(x,y)^*.
\end{equation}
Together with equation~\eqref{eqn:dx-expectation-commutator}, this proves the
$x$-identity in equation~\eqref{eqn:displaced-thermal-weak-sld-coordinate}.

The $y$-direction is identical:
\begin{equation}
S_{x,y,\beta}X_y^*+X_yS_{x,y,\beta}
=
2t_\beta W(x,y)(\tau_\beta P+P\tau_\beta)W(x,y)^*
=
2W(x,y)i[Q,\tau_\beta]W(x,y)^*,
\end{equation}
which proves the $y$-identity by equation~\eqref{eqn:dy-expectation-commutator}.

Finally,
\begin{equation}
S_{x,y,\beta}X_\beta^*+X_\beta S_{x,y,\beta}
=
2W(x,y)(\bar n_\beta I-N)\tau_\beta W(x,y)^*
=
2W(x,y)(\partial_\beta\tau_\beta)W(x,y)^*,
\end{equation}
which proves the $\beta$-identity by equation~\eqref{eqn:dbeta-expectation-commutator}.

Taking the trace pairing with $b=b^*$ and using cyclicity gives
\begin{equation}
\partial_j\ell_b(x,y,\beta)
=
\Re\operatorname{Tr}\!\left(X_j^*bS_{x,y,\beta}\right),
\qquad
j=x,y,\beta,
\end{equation}
which is equation~\eqref{eqn:displaced-thermal-coordinate-compatibility}.
\end{proof}

\begin{proposition}\label{prop:displaced-thermal-model}
The displaced thermal model $(M,\mathrm i,\mathscr A)$ defined above is a
Hermitian regular GNS-smooth normal parametric model in the sense of
Definitions~\ref{def:gns-smooth-parametric-model} and~\ref{def:regular model}.

For
\begin{equation}
v_m=u\,\partial_x+w\,\partial_y+\dot\beta\,\partial_\beta
\in T_mM,
\end{equation}
the Hilbert--Schmidt representative of the canonical real GNS lift is
\begin{equation}\label{eqn:displaced-thermal-Xv}
X_{v_m}
=
uX_x+wX_y+\dot\beta X_\beta.
\end{equation}
with $X_{x}$, $X_{y}$, and $X_{\beta}$ as in equations \eqref{eqn:displaced-thermal-Xx}, \eqref{eqn:displaced-thermal-Xy}, and \eqref{eqn:displaced-thermal-Xbeta}, respectively.
%%%%%%%%%%%%%%%%%%%%%%%%%%%
If
\begin{equation}
z_m=z_x\partial_x+z_y\partial_y+z_\beta\partial_\beta
\in T_m^{\mathbb C}M,
\end{equation}
then the Hilbert--Schmidt image of the complexified Riesz-vector lift is
\begin{equation}\label{eqn:displaced-thermal-Xhat-z}
\widehat X_{z_m}
=
\overline{z_x}X_x+\overline{z_y}X_y+\overline{z_\beta}X_\beta.
\end{equation}
The induced Hermitian form in equation \eqref{eqn:induced-hermitian-form-on-complex-vectors} reads
\begin{equation}\label{eqn:displaced-thermal-K}
\begin{split}
K_m(z_m,z'_m)
&=
\operatorname{Tr}\!\left(\widehat X_{z'_m}^{\,*}\widehat X_{z_m}\right)
\\
&=
2t_\beta\left(\overline{z_x}z'_x+\overline{z_y}z'_y\right)
+
\frac{\overline{z_\beta}z'_\beta}{4\sinh^2(\beta/2)}
\\
&\quad
-
2i\,t_\beta^2
\left(\overline{z_x}z'_y-\overline{z_y}z'_x\right).
\end{split}
\end{equation}
Consequently, on real tangent vectors, the induced bilinear forms in equation \eqref{eq:induced-bilinear-forms} read
\begin{equation}\label{eqn:displaced-thermal-G}
G
=
2\tanh\!\left(\frac\beta2\right)(dx^2+dy^2)
+
\frac{1}{4\sinh^2(\beta/2)}\,d\beta^2,
\end{equation}
and
\begin{equation}\label{eqn:displaced-thermal-Omega}
\Omega
=
-2\tanh^2\!\left(\frac\beta2\right)\,dx\wedge dy.
\end{equation}
In particular,
\begin{equation}\label{eqn:displaced-thermal-dOmega}
d\Omega
=
-2\tanh\!\left(\frac\beta2\right)
\operatorname{sech}^2\!\left(\frac\beta2\right)
\,d\beta\wedge dx\wedge dy,
\end{equation}
which is nonzero for every $\beta>0$.
\end{proposition}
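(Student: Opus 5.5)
The plan is to assemble the proposition from Lemma~\ref{lem:displaced-thermal-smooth-expectations} and Lemma~\ref{lem:displaced-thermal-HS-representatives}, and then reduce every trace to a trace against the thermal state $\tau_\beta$ by conjugating with $W(x,y)$.

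\emph{GNS-smoothness.} Condition~\ref{cond:smoothness-of-expectations} of Definition~\ref{def:gns-smooth-parametric-model} is exactly Lemma~\ref{lem:displaced-thermal-smooth-expectations}. For condition~\ref{cond:GNS-boundedness}, take $v_m=u\partial_x+w\partial_y+\dot\beta\partial_\beta$. Since $b\mapsto\langle v_m,d\ell_b(m)\rangle$ equals $u\,\partial_x\ell_b+w\,\partial_y\ell_b+\dot\beta\,\partial_\beta\ell_b$, equation~\eqref{eqn:displaced-thermal-coordinate-compatibility} and real-linearity show that $X_{v_m}:=uX_x+wX_y+\dot\beta X_\beta$ satisfies \eqref{eq:HS_quantum_compatibility_pre}. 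By Lemma~\ref{lem:displaced-thermal-HS-representatives} each $X_j$ lies in the closed real subspace $U_{\varrho}(V_\rho)$, so $X_{v_m}$ does too. The uniqueness part of Proposition~\ref{prop:boundedness-canonical-representative} then identifies $X_{v_m}$ with the canonical representative, which is \eqref{eqn:displaced-thermal-Xv}, and boundedness follows. Formula \eqref{eqn:displaced-thermal-Xhat-z} is then immediate from \eqref{eq:X_complexified_correct}, and \eqref{eq:K_quantum_HS} gives the first line of \eqref{eqn:displaced-thermal-K}. Injectivity, condition~\ref{cond:injectivity}, is postponed: once $G$ is computed and seen to be positive definite, $\langle v_m,d\ell_b\rangle=0$ for all $b$ forces $X_{v_m}$ to be real-orthogonal to $U_\varrho(V_\rho)$. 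Since $X_{v_m}$ also lies in that subspace, $X_{v_m}=0$, hence $G(v_m,v_m)=0$ and $v_m=0$.

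\emph{Computing the Gram matrix.} Using $X_j=W\,A_j\sigma_\beta\,W^*$ with $A_x=2t_\beta Q$, $A_y=2t_\beta P$, $A_\beta=\bar n_\beta I-N$, unitarity of $W$ gives
\[
\operatorname{Tr}(X_k^*X_j)=\operatorname{Tr}(\sigma_\beta A_kA_j\sigma_\beta)=\operatorname{Tr}(\tau_\beta A_kA_j).
\]
Each such trace can be evaluated as an absolutely convergent sum of matrix elements in the number basis, because $A_j\sigma_\beta$ has entries decaying like $\mathrm{poly}(n)\,q_\beta^{n/2}$. The needed thermal moments are:
\begin{itemize}
\item $\operatorname{Tr}(\tau_\beta Q^2)=\operatorname{Tr}(\tau_\beta P^2)=\bar n_\beta+\tfrac12=\tfrac{1}{2t_\beta}$, which gives $G_{xx}=G_{yy}=4t_\beta^2\cdot\tfrac{1}{2t_\beta}=2t_\beta$.
\item $\operatorname{Tr}(\tau_\beta QP)=\tfrac i2=-\operatorname{Tr}(\tau_\beta PQ)$, obtained from $[Q,P]=iI$ together with $\operatorname{Tr}(\tau_\beta(QP+PQ))=0$. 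This gives $K(\partial_x,\partial_y)=4t_\beta^2\operatorname{Tr}(\tau_\beta PQ)=-2it_\beta^2$.
\item The mixed traces $\operatorname{Tr}(\tau_\beta(\bar n_\beta-N)Q)$ and $\operatorname{Tr}(\tau_\beta(\bar n_\beta-N)P)$ vanish, because the integrands are off-diagonal in the number basis.
\item $\operatorname{Tr}(\tau_\beta(\bar n_\beta-N)^2)=\operatorname{Var}_{\tau_\beta}(N)=\bar n_\beta(\bar n_\beta+1)=\tfrac{1}{4\sinh^2(\beta/2)}$.
\end{itemize}
Extending sesquilinearly (conjugate-linear in $z$, linear in $z'$, via the conjugates in \eqref{eqn:displaced-thermal-Xhat-z}) yields \eqref{eqn:displaced-thermal-K}. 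Taking real and imaginary parts gives \eqref{eqn:displaced-thermal-G} and \eqref{eqn:displaced-thermal-Omega}. Then $d\Omega$ follows from $\tfrac{d}{d\beta}\tanh^2(\beta/2)=\tanh(\beta/2)\operatorname{sech}^2(\beta/2)$ and reordering $d\beta\wedge dx\wedge dy$.

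\emph{Regularity and main obstacle.} Hermitian regularity is then automatic: $M$ is finite-dimensional, and the components of $K$ in the frame $\partial_x,\partial_y,\partial_\beta$ are smooth functions of $\beta$ alone. I expect the main obstacle to be justifying the algebra with the unbounded operators $Q,P,N$ inside traces, in particular the step $\operatorname{Tr}(\tau_\beta QP)-\operatorname{Tr}(\tau_\beta PQ)=i$, where cyclicity cannot be used naively. I would handle it by writing $Q,P$ in terms of $a,a^\dagger$ and computing the diagonal entries $\langle e_n,\tau_\beta QPe_n\rangle$ explicitly on $\mathcal D_{\mathrm{fin}}$. Each diagonal term is a finite sum, and the series over $n$ converges geometrically.
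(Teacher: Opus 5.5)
Your proposal is correct and follows essentially the same route as the paper. You obtain existence and uniqueness of the representative via Lemma~\ref{lem:displaced-thermal-HS-representatives} and Proposition~\ref{prop:boundedness-canonical-representative}, and reduce all Hilbert--Schmidt pairings to thermal moments at $(0,0,\beta)$ by unitary invariance. You then get injectivity from the explicit identity $\|X_{v_m}\|_{HS}^2=2t_\beta(u^2+w^2)+\dot\beta^2/(4\sinh^2(\beta/2))$; your real-orthogonality argument is just the uniqueness statement the paper invokes. The only item you leave implicit is normality, which is immediate because each state is given by a density operator on the $W^*$-algebra $\mathcal B(\mathcal H)$.
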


\begin{proof}
Normality and faithfulness have already been observed above. Lemma~\ref{lem:displaced-thermal-smooth-expectations}
proves condition~\ref{cond:smoothness-of-expectations} in
Definition~\ref{def:gns-smooth-parametric-model}.
%%%%%%%%%%%%%%%%%%

Concerning condition \ref{cond:GNS-boundedness} in Definition \ref{def:gns-smooth-parametric-model}, let
\begin{equation}
v_m=u\,\partial_x+w\,\partial_y+\dot\beta\,\partial_\beta.
\end{equation}
%%%%%%%%%%%%%%%%%%%%%%%%%
Define $X_{v_m}$ by equation~\eqref{eqn:displaced-thermal-Xv}. By Lemma~\ref{lem:displaced-thermal-HS-representatives},
$X_{v_m}$ belongs to $U_{\varrho_{x,y,\beta}}(V_{\rho_{x,y,\beta}})$ and satisfies
\begin{equation}
\langle v_m,d\ell_b(m)\rangle
=
\Re\operatorname{Tr}\!\left(
X_{v_m}^*bS_{x,y,\beta}
\right)
\end{equation}
for all $b=b^*\in\mathcal B(\mathcal H)$. 
%%%%%%%%%%%%
Hence, by Cauchy--Schwarz, it follows that
\begin{align}
|\langle v_m,d\ell_b(m)\rangle|
&\leq
\|X_{v_m}\|_{HS}\,
\|bS_{x,y,\beta}\|_{HS}
\nonumber\\
&=
\|X_{v_m}\|_{HS}
\sqrt{\operatorname{Tr}(\varrho_{x,y,\beta}b^2)},
\end{align}
which proves condition~\ref{cond:GNS-boundedness} in Definition \ref{def:gns-smooth-parametric-model}. 
%%%%%%%%%%%%%%%%%%
Moreover, by
Proposition~\ref{prop:boundedness-canonical-representative}, $X_{v_m}$ is the Hilbert--Schmidt representative of the canonical real GNS lift, and equation \eqref{eqn:displaced-thermal-Xhat-z} follows from equations \eqref{eqn:complex-gns-vector-lift} and \eqref{eq:def_X_vm_pre}.
%%%%%%%%%%%%%%%%%%%%%%%%%%%%%

We now compute the induced Hermitian tensor and then verify condition \ref{cond:injectivity} in Definition \ref{def:gns-smooth-parametric-model}.
%%%%%%%%%%%%%%
By unitary invariance of the Hilbert--Schmidt inner product, it is enough to compute $K$ at $(x,y,\beta)=(0,0,\beta)$. 
%%%%%%%%%%%%%%%%%
At the reference point, it is 
\begin{equation}
X_{v}^{(0)}
=
\left(
2t_\beta(uQ+wP)+\dot\beta(\bar n_\beta I-N)
\right)\tau_\beta^{1/2}.
\end{equation}
The required thermal moments are
\begin{equation}
\operatorname{Tr}(\tau_\beta Q^2)
=
\operatorname{Tr}(\tau_\beta P^2)
=
\bar n_\beta+\frac12
=
\frac12\coth\!\left(\frac\beta2\right),
\end{equation}
\begin{equation}
\operatorname{Tr}\!\left(\tau_\beta(\bar n_\beta I-N)^2\right)
=
\bar n_\beta(\bar n_\beta+1)
=
\frac{1}{4\sinh^2(\beta/2)},
\end{equation}
and
\begin{equation}
\operatorname{Tr}(\tau_\beta QP)=\frac{i}{2},
\qquad
\operatorname{Tr}(\tau_\beta PQ)=-\frac{i}{2}.
\end{equation}
All mixed terms involving one of $Q,P$ and one factor of $\bar n_\beta I-N$
vanish by parity, or equivalently by diagonality of $\tau_\beta$ and $N$ in the
number basis.

For
\begin{equation}
v=u\,\partial_x+w\,\partial_y+\dot\beta\,\partial_\beta,
\qquad
v'=u'\,\partial_x+w'\,\partial_y+\dot\beta'\,\partial_\beta,
\end{equation}
we therefore obtain
\begin{align}
K_m(v,v')
&=
\operatorname{Tr}\!\left((X_{v'}^{(0)})^*X_v^{(0)}\right)
\nonumber\\
&=
2t_\beta(uu'+ww')
+
\frac{\dot\beta\,\dot\beta'}{4\sinh^2(\beta/2)}
-
2i\,t_\beta^2(uw'-wu').
\end{align}
By sesquilinearity and equation \eqref{eqn:displaced-thermal-Xhat-z}, this is precisely equation~\eqref{eqn:displaced-thermal-K}.
%%%%%%%%%%%%%%%%%%%%%%%%%%%%%%%%

Taking $v=v'$ gives
\begin{equation}
\|X_v\|_{HS}^2
=
2t_\beta(u^2+w^2)
+
\frac{\dot\beta^2}{4\sinh^2(\beta/2)}.
\end{equation}
Suppose now that
\begin{equation}
\langle v_m,d\ell_b(m)\rangle=0,
\qquad
\forall b=b^*\in\mathcal B(\mathcal H).
\end{equation}
Since $X_{v_m}$ belongs to the distinguished real GNS subspace and satisfies
the compatibility identity, uniqueness in
Proposition~\ref{prop:boundedness-canonical-representative} gives
$X_{v_m}=0$. Hence the preceding norm identity implies
\begin{equation}
u=w=\dot\beta=0,
\end{equation}
because $t_\beta>0$ and $\sinh(\beta/2)>0$ for $\beta>0$. 
%%%%%%%%%%
Thus $v_m=0$, and condition~\ref{cond:injectivity} in
Definition~\ref{def:gns-smooth-parametric-model} holds, showing that the model is GNS-smooth as in Definition \ref{def:gns-smooth-parametric-model}.
%%%%%%%%%%%%%%%%

To show that the model is Hermitian regular as in Definition \ref{def:regular model}, we first note that the real and imaginary parts of equation~\eqref{eqn:displaced-thermal-K} on
real tangent vectors give equations~\eqref{eqn:displaced-thermal-G} and \eqref{eqn:displaced-thermal-Omega}. 
%%%%%%%%%
Since the coefficients of $G$  and $\Omega$ are smooth functions of $\beta$, and the parameter
manifold is finite-dimensional, the model is symmetric, skew-symmetric, and Hermitian regular in the sense of Definition~\ref{def:regular model}.
%%%%%%%%%%%%%%%%%%%%%%%%%

Finally, differentiating
\begin{equation}
\Omega
=
-2\tanh^2\!\left(\frac\beta2\right)\,dx\wedge dy
\end{equation}
gives
\begin{equation}
d\Omega
=
-2\tanh\!\left(\frac\beta2\right)
\operatorname{sech}^2\!\left(\frac\beta2\right)
\,d\beta\wedge dx\wedge dy.
\end{equation}
This is nonzero for $\beta>0$.
\end{proof}

\begin{corollary}\label{cor:fixed-temperature-displaced-thermal-model}
Fix $\beta_0>0$ and consider the smooth injective immersion
\begin{equation}
k_{\beta_0}:\mathbb R^2\longrightarrow M,
\qquad
k_{\beta_0}(x,y):=(x,y,\beta_0).
\end{equation}
The restricted model
$(\mathbb R^2,\mathrm i\circ k_{\beta_0},\mathscr A)$ is a Hermitian regular
GNS-smooth normal parametric model. Its induced tensors are
\begin{equation}
k_{\beta_0}^*G
=
2\tanh\!\left(\frac{\beta_0}{2}\right)(dx^2+dy^2),
\end{equation}
and
\begin{equation}
k_{\beta_0}^*\Omega
=
-2\tanh^2\!\left(\frac{\beta_0}{2}\right)\,dx\wedge dy.
\end{equation}
Equivalently, writing $q:=e^{-\beta_0}\in(0,1)$, the fixed-temperature model is
\begin{equation}
\varrho_{x,y}
=
W(x,y)\varrho_0W(x,y)^*,
\qquad
\varrho_0
=
(1-q)\sum_{n=0}^\infty q^n|e_n\rangle\langle e_n|,
\end{equation}
and
\begin{equation}
G
=
2\,\frac{1-q}{1+q}(dx^2+dy^2),
\end{equation}
whereas
\begin{equation}
\Omega
=
-2\left(\frac{1-q}{1+q}\right)^2dx\wedge dy.
\end{equation}
In particular, $\Omega$ is a nonzero constant multiple of the standard
symplectic form on $\mathbb R^2$, and hence
\begin{equation}
d\Omega=0
\end{equation}
on every fixed-temperature submodel.
\end{corollary}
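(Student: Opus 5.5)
This corollary follows by combining Proposition~\ref{prop:submanifold} with Proposition~\ref{prop:displaced-thermal-model}. First, I will check that $k_{\beta_0}(x,y)=(x,y,\beta_0)$ is a smooth injective immersion $\mathbb R^2\to M=\mathbb R^2\times(0,\infty)$. This is immediate: it is the inclusion of a horizontal slice, and its differential sends $\partial_x,\partial_y$ to $\partial_x,\partial_y$, so it has rank two everywhere.

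Since Proposition~\ref{prop:displaced-thermal-model} shows that the full displaced thermal model is Hermitian regular and GNS-smooth, Proposition~\ref{prop:submanifold} then gives three things at once: the restricted model is GNS-smooth, it is Hermitian regular, and $K^{\mathbb R^2}=k_{\beta_0}^*K$, $G^{\mathbb R^2}=k_{\beta_0}^*G$, $\Omega^{\mathbb R^2}=k_{\beta_0}^*\Omega$. Normality is inherited directly, because $\mathscr A$ is unchanged and the states $\rho_{x,y,\beta_0}$ are among the normal states of the original model.

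Next, I will compute the pullbacks from \eqref{eqn:displaced-thermal-G} and \eqref{eqn:displaced-thermal-Omega}. Along the slice, $k_{\beta_0}^*d\beta=d(\beta\circ k_{\beta_0})=0$, so the $d\beta^2$ term drops out. The coefficients $\tanh(\beta/2)$ and $\tanh^2(\beta/2)$ are evaluated at $\beta_0$. This yields the displayed formulas for $k_{\beta_0}^*G$ and $k_{\beta_0}^*\Omega$.

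For the reformulation in terms of $q$, set $q=e^{-\beta_0}$. The identity $t_\beta=\tanh(\beta/2)=(1-q_\beta)/(1+q_\beta)$, recorded before \eqref{eqn:thermal-operator}, converts the coefficients into the stated rational expressions. The formula for $\varrho_{x,y}$ is just \eqref{eqn:thermal_displaced_density_operator} with $\tau_{\beta_0}=(1-q)q^N$, written in its eigenbasis expansion.

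Finally, $k_{\beta_0}^*\Omega=c\,dx\wedge dy$ with the constant $c=-2\tanh^2(\beta_0/2)$, which is nonzero because $\beta_0>0$. Hence $\Omega$ is a nonzero multiple of the standard symplectic form and $d\Omega=0$. This can also be seen as $d(k_{\beta_0}^*\Omega)=k_{\beta_0}^*d\Omega=0$, since by \eqref{eqn:displaced-thermal-dOmega} $d\Omega$ contains the factor $d\beta$, which pulls back to zero.

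There is no real obstacle in this argument. The only point needing care is confirming that Proposition~\ref{prop:submanifold} covers normality and Hermitian regularity simultaneously. It does, because normality is a property of $\mathscr A$ and of the image of the model, and the image $k_{\beta_0}$-restricted model's states are a subset of the original image.
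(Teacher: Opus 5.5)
Your proposal is correct and follows the paper's own proof, which simply applies Proposition~\ref{prop:submanifold} to the immersion $k_{\beta_0}$ and reads off the pullbacks from the formulas in Proposition~\ref{prop:displaced-thermal-model}. Your separate argument for normality is a useful addition: the image of the restricted model lies inside the normal image of the full model. Proposition~\ref{prop:submanifold} does not state this inheritance explicitly.
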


\begin{proof}
This follows from Proposition~\ref{prop:submanifold} applied to the immersion
$k_{\beta_0}$ and from the formulas in Proposition~\ref{prop:displaced-thermal-model}.
\end{proof}

%%%%%%%%%%%%%%%%%%%%%

\section{The real dual GNS bundle and the closedness problem for \texorpdfstring{$\Omega$}{Omega}}
\label{sec:closed-extensions}

The examples of the preceding sections show that the two-form $\Omega$ induced by the imaginary part of the GNS Hermitian tensor has no uniform closedness property.
It vanishes in the commutative case, it is closed for pure states because it is, up to normalization and sign/order convention, the Fubini--Study symplectic form, but it is not closed for faithful qubits and for the full displaced thermal model.
The purpose of this section is to explain this behaviour from a structural point of view.

The main point is that there are two related but distinct geometric questions.
The first one is a total-space question: after imposing enough regularity so that the real dual of the realified pullback GNS fibration becomes a smooth real Hilbert bundle over $M$, the fiberwise symplectic forms induced by the dual GNS Hermitian structure define a vertical two-form, and one may ask whether this vertical form admits a closed extension to the total space.
The second one is a base-space question: whether the two-form $\Omega$ induced on the parameter manifold is closed.
These two questions are not equivalent.
Indeed, $\Omega$ is not obtained as the pullback of a closed two-form on the total space along a section of the bundle.
Rather, it is obtained by pairing the canonical real dual GNS lift $L^{\mathbb R}$ with itself through the fiberwise symplectic form.
Consequently, the closedness of $\Omega$ is controlled by the covariant exterior derivative of $L^{\mathbb R}$, not merely by the existence of a closed extension on the total space.

Throughout this section, differential forms on smooth Hilbert manifolds are
understood as smooth sections of the bundles of continuous alternating
multilinear forms. Thus, for a smooth Hilbert manifold $N$, we write
\begin{equation}
\Omega^k(N)
=
\Gamma\!\left(
\operatorname{Alt}^k_{\mathrm{cont}}(TN;\mathbb R)
\right).
\end{equation}
All exterior derivatives, pullbacks, and contractions below are understood
in this category.  When we use de Rham cohomological language for Hilbert
manifolds, it is only in settings where the standard smooth homotopy
invariance of the chosen de Rham complex is available.

%%%%%%%%%%%%%%

\subsection{Bundle-regular models and the vertical symplectic form}

In full generality, the GNS fibration $\mathcal H(\mathscr A)\to\mathcal S(\mathscr A)$ is not locally trivial.
However, for the main quantum examples considered above, local triviality becomes available after restriction to the relevant class of states.
For faithful normal states on $\mathscr A=\mathcal B(\mathcal H)$, the Hilbert--Schmidt realization identifies the GNS fibers with the fixed Hilbert space $\mathcal B_2(\mathcal H)$.
For pure states, the GNS fiber over $\rho_\psi$ is identified with $\mathcal H$, and the change of representative $\psi\mapsto e^{i\theta}\psi$ acts by the standard scalar unitary action.
These examples motivate the following regularity assumption.

\begin{definition}
\label{def:bundle-regular-model}
Let $(M,\mathrm{i},\mathscr A)$ be a Hermitian regular GNS-smooth model as in Definitions~\ref{def:gns-smooth-parametric-model} and~\ref{def:regular model}.
We call it \emph{bundle-regular} if the following additional conditions hold.
\begin{enumerate}
\item The real dual of the realified pullback GNS fibration admits the structure of a smooth real Hilbert bundle
\begin{equation}\label{eqn:bundle-regular-real-dual-bundle}
\pi_E:E\longrightarrow M,
\qquad
E_m:=(\mathcal H_{\mathrm i(m)}^{\mathbb R})^*,
\end{equation}
whose fiberwise Hilbert product is transported from the dual GNS Hermitian product, namely
\begin{equation}\label{eqn:transported-real-metric-on-E}
g_m(\alpha,\beta)
:=
\Re\left\langle
\mathfrak C_{\mathrm i(m)}(\alpha),
\mathfrak C_{\mathrm i(m)}(\beta)
\right\rangle_{\mathrm i(m)}^* .
\end{equation}

\item Let
\begin{equation}
J_{\mathrm i(m)}:\mathcal H_{\mathrm i(m)}^{\mathbb R}
\longrightarrow
\mathcal H_{\mathrm i(m)}^{\mathbb R}
\end{equation}
denote the complex structure on the realified GNS Hilbert space, given by multiplication by $i$.
The fiberwise endomorphisms
\begin{equation}\label{eqn:dual-real-complex-structure}
I_m:E_m\longrightarrow E_m,
\qquad
I_m\alpha:=\alpha\circ J_{\mathrm i(m)},
\end{equation}
define a smooth bundle endomorphism $I:E\to E$ satisfying $I^2=-\mathrm{id}_E$.
Moreover, the definition of $I_m$ is compatible with $\mathfrak C_{\mathrm i(m)}$ in the sense that
\begin{equation}
\mathfrak C_{\mathrm i(m)}(I_m\alpha)
=
i\,\mathfrak C_{\mathrm i(m)}(\alpha).
\end{equation}
Consequently $I_m$ is $g_m$-orthogonal and
\begin{equation}
g_m(I_m\alpha,\beta)=-g_m(\alpha,I_m\beta).
\end{equation}

\item The canonical real dual GNS lift defined in \eqref{eqn:real-gns-lift},
\begin{equation}
L_m^{\mathbb R}:T_mM\longrightarrow E_m,
\end{equation}
is smooth as an $E$-valued one-form
\begin{equation}
L^{\mathbb R}\in\Omega^1(M;E).
\end{equation}

\item The bundle $E\to M$ admits a smooth connection $\nabla$ preserving both $g$ and $I$, namely
\begin{equation}
\nabla g=0,
\qquad
\nabla I=0.
\end{equation}
\end{enumerate}
\end{definition}

\medskip

For a bundle-regular model, define the fiberwise two-form
\begin{equation}\label{eqn:bundle-regular-fiberwise-omega}
\omega_m(\alpha,\beta)
:=
g_m(I_m\alpha,\beta),
\qquad
\alpha,\beta\in E_m.
\end{equation}
Equivalently,
\begin{equation}
\omega_m(\alpha,\beta)
=
\Im\left\langle
\mathfrak C_{\mathrm i(m)}(\alpha),
\mathfrak C_{\mathrm i(m)}(\beta)
\right\rangle_{\mathrm i(m)}^* .
\end{equation}
Since $I_m^2=-\mathrm{id}$ and $g_m$ is a Hilbert product, $\omega_m$ is a strong symplectic form on the real Hilbert space $E_m$.
The vertical tangent bundle $T_{\mathrm{vert}}E:=\ker(T\pi_E)$ is canonically identified with $\pi_E^*E$.
Hence the family $(\omega_m)_{m\in M}$ defines a smooth vertical two-form
\begin{equation}\label{eqn:bundle-regular-vertical-omega}
\omega_{\mathrm{vert}}
\in
\Gamma\!\left(
\operatorname{Alt}^2_{\mathrm{cont}}(T_{\mathrm{vert}}E;\mathbb R)
\right)
\end{equation}
by
\begin{equation}\label{eqn:bundle-regular-vertical-omega-definition}
\omega_{\mathrm{vert},\alpha}(V,W)
=
\omega_{\pi_E(\alpha)}(V,W),
\qquad
V,W\in T_\alpha^{\mathrm{vert}}E\cong E_{\pi_E(\alpha)}.
\end{equation}
This is only a vertical two-form: it is defined on pairs of vertical tangent vectors, not on arbitrary tangent vectors to $E$.

A closed extension of $\omega_{\mathrm{vert}}$ is a smooth differential two-form
\begin{equation}
\widetilde\Omega\in\Omega^2(E)
=
\Gamma\!\left(
\operatorname{Alt}^2_{\mathrm{cont}}(TE;\mathbb R)
\right)
\end{equation}
such that, for every fiber inclusion $\iota_m:E_m\hookrightarrow E$,
\begin{equation}\label{eqn:closed-extension-restriction}
\iota_m^*\widetilde\Omega=\omega_m,
\end{equation}
and
\begin{equation}\label{eqn:closed-extension-closed}
d_E\widetilde\Omega=0.
\end{equation}

With this notation, the two-form induced on the parameter manifold is
\begin{equation}\label{eqn:bundle-regular-Omega-from-LR}
\Omega_m(v,w)
=
\omega_m\bigl(L_m^{\mathbb R}(v),L_m^{\mathbb R}(w)\bigr).
\end{equation}
Indeed, by \eqref{eqn:complex-gns-lift-on-real-vectors}, the complex dual lift is given by $L_m(v)=\mathfrak C_{\mathrm i(m)}(L_m^{\mathbb R}(v))$, and therefore
\begin{equation}
\omega_m\bigl(L_m^{\mathbb R}(v),L_m^{\mathbb R}(w)\bigr)
=
\Im\left\langle L_m(v),L_m(w)\right\rangle_{\mathrm i(m)}^*
=
\Omega_m(v,w).
\end{equation}
Thus $\Omega$ is obtained by pairing the $E$-valued one-form $L^{\mathbb R}$ with itself through the transported fiberwise symplectic form.
It is not the pullback of a two-form on $E$ along a section of $E$.

\subsection{Closed extensions on the total space}

\begin{theorem}
\label{thm:closed-extension-bundle-regular}
Let $(M,\mathrm{i},\mathscr A)$ be a bundle-regular model as in Definition~\ref{def:bundle-regular-model}, and let $\omega_{\mathrm{vert}}$ be the smooth vertical two-form defined in \eqref{eqn:bundle-regular-vertical-omega}.
For every smooth connection $\nabla$ on $E$ preserving $g$ and $I$, the vertical two-form $\omega_{\mathrm{vert}}$ admits a closed extension to the total space $E$.
More precisely, let
\begin{equation}
\operatorname{ver}_\nabla:TE\longrightarrow \pi_E^*E
\end{equation}
denote the vertical projection associated with $\nabla$.
Define $\lambda_\nabla\in\Omega^1(E)$ by
\begin{equation}\label{eqn:bundle-regular-liouville-form}
(\lambda_\nabla)_\alpha(U)
=
\frac12\,
\omega_{\pi_E(\alpha)}
\bigl(\alpha,\operatorname{ver}_\nabla(U)\bigr),
\qquad
\alpha\in E,
\quad
U\in T_\alpha E.
\end{equation}
Then
\begin{equation}\label{eqn:bundle-regular-closed-extension}
\widetilde\Omega_\nabla:=d_E\lambda_\nabla
\end{equation}
is a closed differential two-form on $E$ satisfying
\begin{equation}\label{eqn:bundle-regular-extension-restriction}
\iota_m^*\widetilde\Omega_\nabla=\omega_m,
\qquad
\forall m\in M.
\end{equation}
\end{theorem}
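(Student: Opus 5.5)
Closedness is immediate: $d_E\widetilde\Omega_\nabla=d_E d_E\lambda_\nabla=0$, valid in the category of smooth forms on Hilbert manifolds fixed above. So the real work is to show that $\lambda_\nabla$ is a smooth one-form on $E$, and to compute $\iota_m^*\widetilde\Omega_\nabla$.

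For smoothness, I would work in a local trivialization $E|_U\cong U\times\mathbb E$ given by Definition~\ref{def:bundle-regular-model}. There the connection reads $\nabla=d+A$ with $A\in\Omega^1(U;\operatorname{End}\mathbb E)$, and the vertical projection is
$$\operatorname{ver}_\nabla(v,\dot\alpha)=\dot\alpha+A_m(v)\alpha .$$
Also $\omega_m(\alpha,\beta)=g_m(I_m\alpha,\beta)$ with $g$ and $I$ smooth. Hence
$$(\lambda_\nabla)_{(m,\alpha)}(v,\dot\alpha)=\tfrac12\,g_m\bigl(I_m\alpha,\dot\alpha+A_m(v)\alpha\bigr)$$
is smooth and continuous in $(v,\dot\alpha)$, so $\lambda_\nabla\in\Omega^1(E)$.

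For the restriction, the key observation is that $\iota_m^*$ commutes with $d$, so $\iota_m^*\widetilde\Omega_\nabla=d(\iota_m^*\lambda_\nabla)$. On the fiber $E_m$ every tangent vector is vertical, and $\operatorname{ver}_\nabla$ restricts to the identity there, by the defining property of a vertical projection. Therefore $\iota_m^*\lambda_\nabla$ is the linear Liouville-type form
$$\lambda^{(m)}_\alpha(\beta)=\tfrac12\,\omega_m(\alpha,\beta)$$
on the Hilbert space $E_m$. Since the coefficient is linear in $\alpha$ and $\omega_m$ is constant on $E_m$, the standard formula for $d$ of a one-form on a vector space gives
$$d\lambda^{(m)}_\alpha(\beta,\gamma)=D_\beta\lambda_{\cdot}(\gamma)-D_\gamma\lambda_{\cdot}(\beta)=\tfrac12\omega_m(\beta,\gamma)-\tfrac12\omega_m(\gamma,\beta)=\omega_m(\beta,\gamma),$$
using skew-symmetry of $\omega_m$. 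This proves \eqref{eqn:bundle-regular-extension-restriction}.

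The main subtlety is checking that the computation is independent of the trivialization and that the pullback/exterior derivative identities hold for Banach-manifold forms. Both are standard, and the intrinsic definition \eqref{eqn:bundle-regular-liouville-form} already removes the dependence on the trivialization. The hypothesis $\nabla g=0$, $\nabla I=0$ is not needed for closedness or for the restriction identity. It only ensures that $\widetilde\Omega_\nabla$ has the natural form $\omega(\operatorname{ver}\cdot,\operatorname{ver}\cdot)+\tfrac12\omega(\alpha,R^\nabla(\cdot,\cdot)\alpha)$. I would note this expansion as a remark, obtained by differentiating the local formula, but the theorem itself does not require it.
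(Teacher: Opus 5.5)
Your proof is correct and follows the paper's argument: closedness comes from $d_E^2=0$, and the restriction identity comes from $\iota_m^*d_E=d\,\iota_m^*$ together with $\operatorname{ver}_\nabla$ being the identity on tangent vectors to $E_m$, so that $\iota_m^*\lambda_\nabla$ is the Liouville form of the constant form $\omega_m$. Your local-trivialization check of smoothness is more explicit than the paper's, and you are right that $\nabla g=0$, $\nabla I=0$ plays no role in this theorem: the paper invokes it when asserting smoothness of $\lambda_\nabla$, but smoothness of $g$, $I$ and $\nabla$ alone suffices.
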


\begin{proof}
Since $\nabla$ preserves $g$ and $I$, it preserves $\omega$.
Hence $\lambda_\nabla$ is a smooth one-form on the smooth Hilbert manifold $E$.
Therefore $\widetilde\Omega_\nabla=d_E\lambda_\nabla$ is closed by construction.
It remains to check the restriction to the fibers.
Fix $m\in M$.
Along the fiber $E_m$, the vertical projection $\operatorname{ver}_\nabla$ is the identity on tangent vectors to $E_m$.
Therefore
\begin{equation}
(\iota_m^*\lambda_\nabla)_\alpha(V)
=
\frac12\,\omega_m(\alpha,V),
\qquad
\alpha\in E_m,
\quad
V\in T_\alpha E_m\cong E_m.
\end{equation}
This is the standard Liouville one-form associated with the constant symplectic form $\omega_m$ on the vector space $E_m$.
Consequently,
\begin{equation}
d(\iota_m^*\lambda_\nabla)=\omega_m.
\end{equation}
Since exterior differentiation commutes with pullback, we get
\begin{equation}
\iota_m^*\widetilde\Omega_\nabla
=
\iota_m^*(d_E\lambda_\nabla)
=
d(\iota_m^*\lambda_\nabla)
=
\omega_m.
\end{equation}
\end{proof}

\begin{remark}
The closed-extension problem considered above is analogous to the problem studied by Gotay--Lashof--Sniatycki--Weinstein~\cite{GLSW1983} for finite-dimensional symplectic fibrations.
In that setting, a fiberwise closed form admits a closed extension to the total space precisely when its fiberwise cohomology class extends to a de Rham cohomology class on the total space.
In the present situation, the fibers are real Hilbert spaces and hence contractible.
Thus the cohomological obstruction represented by the fiberwise class of $\omega_m$ is absent, and Theorem~\ref{thm:closed-extension-bundle-regular} constructs a closed extension directly from a compatible connection by setting $\widetilde\Omega_\nabla=d_E\lambda_\nabla$.
\end{remark}

\begin{remark}\label{rem:curvature-horizontal-block}
The form $\widetilde\Omega_\nabla=d_E\lambda_\nabla$ is the connection-dependent exact extension analogous to the minimal-coupling form in the finite-dimensional theory.
More explicitly, let
\begin{equation}
T_\alpha E
=
T_\alpha^{\mathrm{hor},\nabla}E
\oplus
T_\alpha^{\mathrm{vert}}E
\end{equation}
be the horizontal--vertical splitting induced by $\nabla$.
If $\alpha\in E_m$, if $V,W\in T_\alpha^{\mathrm{vert}}E\cong E_m$, and if $H_u,H_v\in T_\alpha^{\mathrm{hor},\nabla}E$ are the horizontal lifts of $u,v\in T_mM$, then
\begin{equation}
\widetilde\Omega_\nabla(V,W)=\omega_m(V,W),
\qquad
\widetilde\Omega_\nabla(H_u,V)=0.
\end{equation}
With the curvature convention
\begin{equation}
F_\nabla(u,v)
=
[\nabla_u,\nabla_v]-\nabla_{[u,v]},
\end{equation}
so that
\begin{equation}
\operatorname{ver}_\nabla([H_u,H_v])
=
-F_\nabla(u,v)\alpha,
\end{equation}
one also has
\begin{equation}\label{eqn:minimal-coupling-horizontal-block}
\widetilde\Omega_\nabla(H_u,H_v)
=
\frac12\,
\omega_m\bigl(\alpha,F_\nabla(u,v)\alpha\bigr).
\end{equation}
Thus the curvature enters the connection-dependent extension only through its horizontal--horizontal block.
\end{remark}

\begin{remark}\label{rem:non-uniqueness-closed-extensions}
Closed extensions of $\omega_{\mathrm{vert}}$ are not unique.
Different choices of connection generally give different forms
$\widetilde\Omega_\nabla=d_E\lambda_\nabla$.
More generally, if $\widetilde\Omega$ and $\widetilde\Omega'$ are two closed
extensions of $\omega_{\mathrm{vert}}$, then their difference
\[
\Delta:=\widetilde\Omega'-\widetilde\Omega
\]
is a closed two-form on $E$ whose restriction to each fiber vanishes.
Equivalently, $\Delta$ vanishes on vertical--vertical pairs.  This condition
does not imply that $\Delta$ is basic; in general one cannot conclude that
$\Delta=\pi_E^*\beta$ for some closed two-form $\beta$ on $M$.

At the cohomological level, and only in settings where the relevant de Rham
homotopy invariance theorem applies to the Hilbert bundle $E\to M$, the
deformation retraction of $E$ onto the zero section implies that
$[\Delta]$ comes from the base.  Thus one may write, after choosing a closed
representative $\beta$ on $M$,
\[
\Delta=\pi_E^*\beta+d_E\gamma
\]
for some one-form $\gamma$ on $E$.  The exact term $d_E\gamma$ may have
vertical--horizontal components, and this is precisely why cohomological
pullback does not imply that the differential form itself is basic.
\end{remark}

%%%%%%%%%%%%%

\subsection{Closedness of the induced two-form on the parameter manifold}

The closed extension constructed above lives on the total space $E$.
It does not imply that the base two-form $\Omega$ is closed.
The latter is controlled by the covariant variation of the $E$-valued one-form $L^{\mathbb R}$.

\begin{proposition}[Closedness of the induced base two-form]
\label{prop:closedness-base-form-covariant}
Let $(M,\mathrm{i},\mathscr A)$ be bundle-regular, and let $\nabla$ be a smooth connection on $E$ preserving $g$ and $I$.
Then, for vector fields $X,Y,Z$ on $M$,
\begin{equation}\label{eqn:dOmega-covariant-LR}
d\Omega(X,Y,Z)
=
\sum_{\mathrm{cycl}}
\omega\bigl((d^\nabla L^{\mathbb R})(X,Y),L^{\mathbb R}(Z)\bigr),
\end{equation}
where the cyclic sum is taken over $(X,Y,Z)$, $(Y,Z,X)$, and $(Z,X,Y)$, and
\begin{equation}
(d^\nabla L^{\mathbb R})(X,Y)
=
\nabla_X(L^{\mathbb R}(Y))
-
\nabla_Y(L^{\mathbb R}(X))
-
L^{\mathbb R}([X,Y]).
\end{equation}
In particular, if $d^\nabla L^{\mathbb R}=0$, then $\Omega$ is closed.
\end{proposition}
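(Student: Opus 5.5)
We use the invariant formula for the exterior derivative of a two-form together with the metric compatibility of $\nabla$. By \eqref{eqn:bundle-regular-Omega-from-LR}, $\Omega(X,Y)=\omega\bigl(L^{\mathbb R}(X),L^{\mathbb R}(Y)\bigr)$, where $\omega$ is the fiberwise symplectic form \eqref{eqn:bundle-regular-fiberwise-omega}, viewed as a smooth section of $\operatorname{Alt}^2_{\mathrm{cont}}(E;\mathbb R)$. The first step is to show that $\nabla\omega=0$. Since $\omega(\alpha,\beta)=g(I\alpha,\beta)$, for smooth sections $s,t$ of $E$ we have
\[
X\bigl(\omega(s,t)\bigr)=g\bigl((\nabla_X I)s+I\nabla_X s,t\bigr)+g(Is,\nabla_X t),
\]
using $\nabla g=0$. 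The hypothesis $\nabla I=0$ then gives the Leibniz rule
\[
X\bigl(\omega(s,t)\bigr)=\omega(\nabla_X s,t)+\omega(s,\nabla_X t).
\]
This is the only place where the assumptions on $\nabla$ enter.

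The second step is to write
\[
d\Omega(X,Y,Z)=\sum_{\mathrm{cycl}}\Bigl(X\bigl(\Omega(Y,Z)\bigr)-\Omega([X,Y],Z)\Bigr),
\]
which is the cyclic form of the Palais formula for two-forms. Applying the Leibniz rule with $s=L^{\mathbb R}(Y)$ and $t=L^{\mathbb R}(Z)$, and abbreviating $L:=L^{\mathbb R}$, we get
\[
X\bigl(\Omega(Y,Z)\bigr)=\omega\bigl(\nabla_X L(Y),L(Z)\bigr)+\omega\bigl(L(Y),\nabla_X L(Z)\bigr).
\]
Summed cyclically, the second term becomes $\sum_{\mathrm{cycl}}\omega\bigl(L(Z),\nabla_Y L(X)\bigr)$ after relabelling. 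Skew-symmetry of $\omega$ turns this into $-\sum_{\mathrm{cycl}}\omega\bigl(\nabla_Y L(X),L(Z)\bigr)$. Combining everything yields
\[
d\Omega(X,Y,Z)=\sum_{\mathrm{cycl}}\omega\bigl(\nabla_X L(Y)-\nabla_Y L(X)-L([X,Y]),\,L(Z)\bigr),
\]
which is \eqref{eqn:dOmega-covariant-LR}. The final claim is then immediate: if $d^\nabla L^{\mathbb R}=0$, every term of the cyclic sum vanishes, so $d\Omega=0$.

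The main obstacle is bookkeeping rather than analysis. The relabelling of the cyclic sum must be done carefully so that the signs match, using the fact that $(X,Y,Z)\mapsto(Y,Z,X)$ preserves cyclic order. On the infinite-dimensional side, one has to justify the Palais formula and the Leibniz rule for smooth sections of Hilbert bundles. Both are standard in the category fixed at the start of Section~\ref{sec:closed-extensions}. They rely on the smoothness of $L^{\mathbb R}\in\Omega^1(M;E)$ (condition 3 of Definition~\ref{def:bundle-regular-model}) and of $g$ and $I$, and on the continuity of $\omega_m$, which makes the pairings smooth functions. Since $M$ may be infinite-dimensional, I would work with vector fields throughout rather than with local frames.
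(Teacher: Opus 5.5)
Your proposal is correct and takes the same route as the paper: obtain the Leibniz rule for $\omega$ from $\nabla g=0$ and $\nabla I=0$, differentiate $\Omega(Y,Z)=\omega(L^{\mathbb R}(Y),L^{\mathbb R}(Z))$, use skew-symmetry to move the second covariant-derivative term, and regroup in the invariant formula for $d\Omega$. Your cyclic form of that formula and your explicit relabelling of the cyclic sum make the paper's ``regrouping'' step more explicit, and the signs check out.
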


\begin{proof}
Since $\nabla g=0$ and $\nabla I=0$, the connection preserves $\omega(\alpha,\beta)=g(I\alpha,\beta)$.
Therefore, for vector fields $X,Y,Z$ on $M$, differentiating
\begin{equation}
\Omega(Y,Z)=\omega(L^{\mathbb R}(Y),L^{\mathbb R}(Z))
\end{equation}
gives
\begin{equation}
X\bigl(\Omega(Y,Z)\bigr)
=
\omega(\nabla_XL^{\mathbb R}(Y),L^{\mathbb R}(Z))
+
\omega(L^{\mathbb R}(Y),\nabla_XL^{\mathbb R}(Z)).
\end{equation}
Using the skew-symmetry of $\omega$, the second term may be rewritten as
\begin{equation}
\omega(L^{\mathbb R}(Y),\nabla_XL^{\mathbb R}(Z))
=
-\omega(\nabla_XL^{\mathbb R}(Z),L^{\mathbb R}(Y)).
\end{equation}
Substituting this expression into the standard formula for the exterior derivative of a two-form,
\begin{align}
d\Omega(X,Y,Z)
&=
X\Omega(Y,Z)-Y\Omega(X,Z)+Z\Omega(X,Y)
\nonumber\\
&\quad
-\Omega([X,Y],Z)+\Omega([X,Z],Y)-\Omega([Y,Z],X),
\end{align}
and regrouping terms gives
\begin{equation}
d\Omega(X,Y,Z)
=
\sum_{\mathrm{cycl}}
\omega\bigl(
\nabla_XL^{\mathbb R}(Y)
-
\nabla_YL^{\mathbb R}(X)
-
L^{\mathbb R}([X,Y]),
L^{\mathbb R}(Z)
\bigr),
\end{equation}
which is precisely \eqref{eqn:dOmega-covariant-LR}.
The final assertion follows immediately.
\end{proof}

\begin{remark}
Equation~\eqref{eqn:dOmega-covariant-LR} separates the two roles played by the connection.
The curvature of $\nabla$ appears in the horizontal--horizontal block of the closed extension $\widetilde\Omega_\nabla$, as in \eqref{eqn:minimal-coupling-horizontal-block}.
By contrast, the obstruction to the closedness of the induced two-form on the parameter manifold is the covariant exterior derivative $d^\nabla L^{\mathbb R}$.
Thus the non-closedness observed in faithful qubits and in the displaced thermal model is not a contradiction with the existence of closed extensions on $E$; it reflects the way the canonical real dual GNS lift varies along the model.
\end{remark}

\subsection{Examples and further remarks}

\begin{proposition}[Bundle-regularity of the main quantum examples]
\label{prop:bundle-regularity-main-quantum-examples}
The following Hermitian regular GNS-smooth models are bundle-regular in the sense of Definition~\ref{def:bundle-regular-model}:
\begin{enumerate}
\item the pure-state model on $\mathbb{CP}(\mathcal H)$;
\item the finite-dimensional faithful-state model on $\mathscr S_f(\mathcal H)$;
\item the displaced thermal model on $\mathbb R^2\times(0,\infty)$;
\item every fixed-temperature displaced thermal submodel.
\end{enumerate}
\end{proposition}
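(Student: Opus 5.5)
For each of the four models, the plan is to exhibit an explicit global trivialization of the real dual bundle $E$ in which $g$ and $I$ become constant. Then the trivial connection $\nabla=d$ preserves both. Conditions 1, 2 and 4 of Definition~\ref{def:bundle-regular-model} follow at once, and only the smoothness of $L^{\mathbb R}$ (condition 3) needs a model-specific argument. In every case the complex Riesz map and $\mathfrak C_{\mathrm i(m)}$ intertwine the realified GNS fibre with its real dual isometrically. So it is enough to trivialize the realified GNS fibres $\mathcal H_{\mathrm i(m)}^{\mathbb R}$ by complex-linear unitaries onto a fixed Hilbert space. The dual complex structure $I_m\alpha=\alpha\circ J_{\mathrm i(m)}$ is then carried to a fixed constant operator on the fixed dual, and $g_m$ to the fixed real Hilbert product.

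For the faithful models (items 2--4), the fixed space is $\mathcal B_2(\mathcal H)$ with trivialization $U_{\varrho_m}$ from \eqref{eqn:GNS-HS-identification-faithful}. This is complex-linear and unitary, so $J$ becomes multiplication by $i$ on $\mathcal B_2(\mathcal H)$, and $E\cong M\times(\mathcal B_2^{\mathbb R}(\mathcal H))^*$ globally. The $E$-valued one-form $L^{\mathbb R}$ then corresponds to $v\mapsto R^{\mathbb R}(X_v)$, so condition 3 reduces to smoothness of $m\mapsto X_{v_m}$ for smooth vector fields.
\begin{itemize}
\item Item 2: $X_v=\mathcal J_\varrho^{-1}(v)\varrho^{1/2}$, and both $\varrho\mapsto\mathcal J_\varrho^{-1}$ and $\varrho\mapsto\varrho^{1/2}$ are smooth on positive definite matrices, the latter by holomorphic functional calculus.
\item Item 3: by \eqref{eqn:displaced-thermal-Xv} we need smoothness of $(x,y,\beta)\mapsto X_x,X_y,X_\beta$ in Hilbert--Schmidt norm.
\item Item 4: this follows from item 3 by restriction along $k_{\beta_0}$, as in Proposition~\ref{prop:submanifold}, since the trivialization and the trivial connection pull back.
\end{itemize}

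For the pure-state model (item 1), the identifications $U_\psi$ depend on the representative only through the phase, $U_{e^{i\theta}\psi}=e^{i\theta}U_\psi$. The plan is to realize $E$ as the associated bundle $\mathbb S(\mathcal H)\times_{U(1)}\mathcal H^{\mathbb R}$, dualized, with $U(1)$ acting by scalar multiplication. This is a smooth Hilbert bundle over $\mathbb{CP}(\mathcal H)$, since $q$ is a smooth principal $U(1)$-bundle. Scalar unitaries commute with $i$ and preserve $\Re\langle\cdot,\cdot\rangle$, so $g$ and $I$ descend to smooth parallel structures. Any connection induced from a principal connection, for instance the Berry connection $\langle\psi,d\psi\rangle$, preserves them because the structure group acts unitarily and complex-linearly. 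For condition 3 we use \eqref{eqn:pure-state-Xi}: in the associated-bundle picture, $L^{\mathbb R}$ is the equivariant map $(\psi,\dot\psi)\mapsto 2\dot\psi_{\mathrm h}=2(\dot\psi-\langle\psi,\dot\psi\rangle\psi)$ on $T\mathbb S(\mathcal H)$. This map is smooth, and the phase covariance $\dot{\widetilde\psi}_{\mathrm h}=e^{i\theta}\dot\psi_{\mathrm h}$ shown earlier gives exactly the equivariance needed to descend.

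The main obstacle is the Hilbert--Schmidt smoothness in item 3. I would first write $X_j=W(x,y)A_j(\beta)W(x,y)^*$ with $A_x=2t_\beta Q\sigma_\beta$, $A_y=2t_\beta P\sigma_\beta$, $A_\beta=(\bar n_\beta-N)\sigma_\beta$, and handle the $\beta$-dependence by diagonal expansion: the entries are polynomially weighted in $n$ times $q_\beta^{n/2}$, so all $\beta$-derivatives converge in $\mathcal B_2$ locally uniformly. Conjugation $A\mapsto WAW^*$ is only strongly continuous in $(x,y)$, so for the displacement I would differentiate via commutators. Formally $\partial_x(WAW^*)=W(-i[P,A])W^*$, and similarly with $Q$ for $y$. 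It therefore suffices that $A_j(\beta)$ lies in the domain where iterated commutators with $Q,P$ are again Hilbert--Schmidt. This holds because $Q,P$ shift the number basis by one with $\sqrt n$ weights, which are absorbed by the geometric decay of $\sigma_\beta$. The trace-norm approximation argument of Lemma~\ref{lem:displaced-thermal-HS-representatives}, with finite-rank truncations $\tau_\beta^{(N)}$, then upgrades these formal derivatives to genuine $C^\infty$ Hilbert--Schmidt derivatives.
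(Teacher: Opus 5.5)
Your proposal is correct and is essentially the paper's proof: the Hilbert--Schmidt trivialization $U_{\varrho_m}$ with the flat connection for items 2--3, the Hopf-associated bundle with the Berry connection for item 1, and restriction along $k_{\beta_0}$ for item 4; the one addition is that you sketch the Hilbert--Schmidt smoothness of $X_x,X_y,X_\beta$ in $(x,y,\beta)$, which the paper only asserts. One harmless inaccuracy: your opening claim that every $E$ is globally trivialized with constant $g$ and $I$ fails for item 1 when $2\le\dim\mathcal H<\infty$, because $\mathbb S(\mathcal H)\times_{U(1)}\mathcal H$ is then a direct sum of copies of a nontrivial Hermitian line bundle over $\mathbb{CP}(\mathcal H)$, but your actual argument for item 1 never uses a trivialization, so nothing breaks.
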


\begin{proof}
For pure states, the GNS fiber over $\rho_\psi$ is identified with $\mathcal H$ by the unitary map used in the pure-state computation.
Under the phase change $\psi\mapsto e^{i\theta}\psi$, this identification changes by the standard scalar action of $U(1)$ on $\mathcal H$.
Hence the pullback GNS bundle over $\mathbb{CP}(\mathcal H)$ is the Hilbert bundle associated with the Hopf principal $U(1)$-bundle and the standard unitary representation of $U(1)$ on $\mathcal H$.
Its real dual is therefore a smooth real Hilbert bundle, and the transported metric and complex structure are smooth.
The Berry connection induces a smooth connection preserving these structures.
Since the real dual GNS lift is obtained from the smooth pure-state representative through the fiberwise real Riesz map, $L^{\mathbb R}$ is a smooth $E$-valued one-form.

For the finite-dimensional faithful-state model, the Hilbert--Schmidt realization identifies all GNS fibers with a fixed finite-dimensional Hilbert space.
Under this identification, the real dual bundle $E$ is a smooth finite-rank real Hilbert bundle.
The transported metric and complex structure are smooth, and a compatible connection exists.
The canonical lift is smooth because, in finite dimensions, the maps
\begin{equation}
\varrho\longmapsto \varrho^{1/2},
\qquad
\varrho\longmapsto \mathcal J_\varrho^{-1}
\end{equation}
are smooth on the open cone of faithful density matrices.
Thus $L^{\mathbb R}$ is smooth.
This proves bundle-regularity for the finite-dimensional faithful-state model, and hence for faithful qubits.

For the displaced thermal model, the Hilbert--Schmidt realization again identifies all GNS fibers with the fixed Hilbert space $\mathcal B_2(\mathcal H)$.
Thus the real dual bundle $E$ is a trivial smooth real Hilbert bundle.
The transported metric and complex structure are the fixed Hilbert--Schmidt ones, and the flat connection is compatible with them.
The explicit GNS representatives computed in the displaced thermal example depend smoothly on $(x,y,\beta)$ in Hilbert--Schmidt norm; therefore the real dual lift $L^{\mathbb R}$ is smooth.

Finally, bundle-regularity is preserved under restriction to immersed submanifolds: one pulls back the bundle, the metric, the complex structure, the connection, and the $E$-valued one-form $L^{\mathbb R}$.
Hence the fixed-temperature displaced thermal submodels are bundle-regular.
\end{proof}

\begin{remark}
For the commutative models considered in Section~\ref{sec:classical-case}, the induced two-form vanishes identically, $\Omega=0$.
Hence the closedness question on the parameter manifold is trivial.
In general, bundle-regularity is stronger than the dominated $2$-integrability assumptions used for the classical examples.
Those assumptions are sufficient to define the Fisher--Rao tensor and to prove $\Omega=0$, but they do not by themselves provide a smooth Hilbert-bundle structure on the family $L^2(X,\rho_m)$, nor smoothness of the canonical real dual lift as a bundle-valued one-form.
\end{remark}

\begin{remark}[Nondegenerate extensions and fatness]
One may also ask whether a closed extension can be chosen to be symplectic
on the total space $E$.  This is a different question from the closedness of
$\Omega$ on the parameter manifold.  In the Hilbert-manifold setting, the
word symplectic may mean either weak nondegeneracy or strong nondegeneracy;
in this remark we explicitly distinguish the two.

For the extension $\widetilde\Omega_\nabla$ above, one may add a closed base
form $\beta\in\Omega^2(M)$ and consider
\begin{equation}
\widetilde\Omega_{\nabla,\beta}
:=
\widetilde\Omega_\nabla+\pi_E^*\beta.
\end{equation}
With respect to the horizontal--vertical splitting determined by $\nabla$,
the vertical--vertical block is the strong symplectic form $\omega_m$, the
mixed block vanishes, and the horizontal--horizontal block at
$\alpha\in E_m$ is
\begin{equation}
\beta_m(u,v)
+
\frac12\,
\omega_m\bigl(\alpha,F_\nabla(u,v)\alpha\bigr).
\end{equation}
Consequently, $\widetilde\Omega_{\nabla,\beta}$ is weakly, respectively
strongly, symplectic precisely when this horizontal block is weakly,
respectively strongly, nondegenerate for every $\alpha\in E_m$.
This is the analogue, in the present vector-bundle setting, of Weinstein's
fatness criterion.

In particular, if one seeks a nondegenerate extension of this block-diagonal
type, the base must itself support a nondegenerate two-form of the
corresponding kind.  This explains why the issue is natural for pure-state
models and fixed-temperature displaced thermal submodels, but not for the
full faithful qubit model, whose parameter space is three-dimensional.
\end{remark}

\begin{remark}\label{rem:connection-to-other-covariances}
The construction of closed extensions is not tied to the GNS inner product alone.
It applies to any smoothly varying fiberwise Hermitian structure on a locally trivial Hilbert bundle over $M$.
In particular, replacing the GNS inner product by another field of covariances, for instance one associated with another operator monotone function in the Morozova--\v{C}encov--Petz classification, leads to the same total-space closed-extension problem.
The behaviour of the resulting base two-form is again governed by the covariant variation of the corresponding canonical lift.
\end{remark}

%%%%%%%%%%%%
%%%%%%%%%%%%

\section{Conclusions}

We have developed a GNS-based construction of geometric tensors on smooth
parametric models over $C^*$-algebras.  The central point is that the real
and imaginary parts of the dual GNS Hermitian product, organized fiberwise
over the state space and pulled back along suitable parametric models,
produce a Hermitian tensor on the complexified tangent bundle of the model.
Under the regularity assumptions of Definition~\ref{def:regular model},
its real and imaginary parts assemble into a smooth weak Riemannian metric
tensor $G$ and a smooth two-form $\Omega$.

This construction gives a common operator-algebraic origin for several
standard geometries of classical and quantum information theory.  In the
commutative dominated case it recovers the Fisher--Rao metric tensor and
the two-form vanishes.  For pure quantum states it recovers the
Fubini--Study metric tensor together with the Fubini--Study symplectic form,
up to the normalization and sign/order conventions imposed by the dual GNS
pairing.  For faithful quantum states it gives the SLD metric tensor
(equivalently, four times the Bures--Helstrom metric in the convention used
here), while the corresponding two-form is proportional, in finite
dimensions, to the expected commutator of the SLD representatives, or
equivalently to the mean Uhlmann curvature.

A distinctive feature of the construction is that it does not require the
state space $\mathcal S(\mathscr A)$ itself to be a smooth manifold.  The
usual ambient-manifold pullback picture is replaced by the GNS fibration
and its dual.  Tangent vectors to the parameter manifold are represented
by continuous functionals on the realified GNS fibers, and the ambiguity of
such representatives is removed by passing to the distinguished real
subspace generated by self-adjoint observables.  This produces the canonical
real dual GNS lift from which the tensors are defined.  In the faithful
quantum case, this viewpoint gives the SLD a natural geometric meaning:
it is not introduced as an external ansatz, but appears as the canonical
representative of a tangent direction.

The skew-symmetric tensor $\Omega$ is not merely a secondary byproduct of
the metric construction.  It detects genuinely noncommutative features of
the model, although its vanishing is weaker than pairwise commutativity of
the SLDs and should not be identified outright with the usual
quasiclassical condition.  The examples of faithful qubits and displaced
thermal states show that $\Omega$ need not be closed.  Section~\ref{sec:closed-extensions}
explains this phenomenon structurally: for bundle-regular models the
fiberwise symplectic form on the real dual GNS bundle admits
connection-dependent closed extensions to the total space, but the
closedness of the induced two-form on the parameter manifold is governed by
the covariant exterior derivative of the canonical real dual GNS lift.

\medskip

The framework developed in this paper suggests several natural directions for further investigation.

\medskip

\noindent\textit{Other fields of covariances and the Morozova--\v{C}encov--Petz classification.}
%%%%%%%%%%%%%%%%%%%%%%%%%%%%%%%%%%%%%%%%%%%%%%
As noted in the introduction, the GNS inner product is a particular instance of a \emph{field of covariances} in the sense of \cite{CDG2025}.
%%%%%%%%%%%%%%%%%%%%%%%%%%%%%%%%%%%%%
The finite-dimensional classification of fields of covariances in \cite{CDG2025} provides a unification of the classical \v{C}encov's result on the uniqueness of the Fisher--Rao metric tensor with the quantum Morozova--\v{C}encov--Petz characterization of monotone metric tensors that extends also to non-faithful states, thus applying to all states.
%%%%%%%%%%%%%%%%%%%%%%%%%%
It is natural to ask whether the pullback-like construction developed here can be carried out for the other fields of covariances, at least in the finite-dimensional case.
%%%%%%%%%%%%%%%%%%%%%%%%%%%%%%%%%%%%%%%%
The idea would be that each operator monotone function $f$ appearing in the classification in \cite{CDG2025} determines a different fiberwise inner product on a Hilbert fibration over the state space that is a sort of deformation of the original GNS one.
%%%%%%%%%%%%%%%%%%%%%%%%%%%%%%%
The same Riesz-representation procedure should produce a symmetric tensor $G^f$ and a skew-symmetric tensor $\Omega^f$ on any regular parametric model.
%%%%%%%%%%%%%%%%%%%%%%%%%%%%%%%%%%%%%%%
In the faithful finite-dimensional case, the symmetric part $G^f$ would recover the corresponding monotone metric in the Morozova--\v{C}encov--Petz family, while the skew-symmetric part $\Omega^f$ would be a new invariant depending on $f$.
%%%%%%%%%%%%%%%%%%%%%%%%%%%%%%%%%%%%%%%%%%
It is worth noting that the operator monotone functions in Petz's work \cite{P1996} satisfy a symmetricity condition that makes the imaginary part of the associated Hermitian product vanish, while the operator monotone functions appearing in \cite{CDG2025} do not necessarily satisfy this symmetricity condition, thus allowing a non-zero $\Omega^{f}$.
%%%%%%%%%%%%%%%%%%%%%%%%%%%%%%%%%%%%%%%%%%
Moreover, again at least in finite dimensions, the closed-extension machinery of Section~\ref{sec:closed-extensions} applies to any such Hilbert bundle.
%%%%%%%%%%%%%%%%%%%%%%%%%%%%%%%%%%%%%%%%%%%%%%%%%

\medskip
\noindent\textit{Behavior under quantum channels.}
%%%%%%%%%%%%%%%%%%%%%%%%%%%%%
A central property of the monotone quantum metrics in the Morozova--\v{C}encov--Petz classification is their monotonicity under completely positive trace-preserving maps, or equivalently under unital completely positive maps in the Heisenberg picture. 
%%%%%%%%%%%%%%%%%%%%%%%%%%%%%%%%%%%%
The present paper does not address this behavior. 
%%%%%%%%%%%%%%%%%%%%%%%%%%%%%%%%%%%%%%%%%%%%%%
For the metric part $G$, one may expect the pullback construction is compatible with the usual monotonicity of the SLD metric tensor under quantum channels, because quantum channels leads to contraction maps between GNS Hilbert spaces. 
%%%%%%%%%%%%%%%%%%%%%%%%%%%%%%%%%%%%%%%%%%%
For the skew-symmetric tensor $\Omega$, the appropriate statement is less clear, since a two-form is not ordered in the way a positive symmetric tensor is. 
%%%%%%%%%%%%%%%%%%%%%%%%%%%%%%%%%%%%%%%%%
A more natural problem is to formulate a functoriality, covariance, or contraction property for $\Omega$ under suitable state-preserving completely positive maps, and to understand how such a property interacts with quantum Cram\'er--Rao attainability and optimal POVMs.
%%%%%%%%%%%%%%%%%%%%%%%%%%%%%%%%%%%%%%%%%%%%%%%%%%%

\medskip
\noindent\textit{Nonparametric extensions and the work of Jen\v{c}ov\'a.}
%%%%%%%%%%%%%%%%%%%%%%%%%%%%%%%%%%%%%
The set of faithful normal states $\mathcal{S}_{nf}(\mathcal{M})$ on a von Neumann algebra $\mathcal{M}$ can be endowed with the structure of Banach manifold using quantum Orlicz spaces \cite{J2006,J2024},  providing an infinite-dimensional analogue of the Pistone--Sempi exponential manifold \cite{PS1995}.
%%%%%%%%%%%%%%%%%%%%%%%%%%%%%%%%%
This construction is based on relative entropy and state perturbation, and admits exponential and mixture connections as a dual pair.
%%%%%%%%%%%%%%%%%%%%%%%%%%%%%%%%%%%%%%%%%%%%%%%%%%
If $\mathrm{i}$ is the canonical inclusion of $\mathcal{S}_{nf}(\mathcal{M})$ in $\mathcal{S}(\mathcal{M})$, it would be of considerable interest to understand if $(\mathcal{S}_{nf}(\mathcal{M}),\mathrm{i},\mathcal{M})$ is regular in the sense of Definition \ref{def:regular model}.
%%%%%%%%%%%%%%%%%%%%%%%%%%%%%%%%%%%%
Furthermore, the manifold $\mathcal{S}_{nf}(\mathcal{M})$  is endowed with a canonical divergence satisfying a Pythagorean relation and is invariant under sufficient channels \cite{J2024}, which suggests a possible bridge to the monotonicity questions mentioned above.
%%%%%%%%%%%%%%%%%%%%%%%%%%%%%%%%%%%%%%%%%%%%%%

\medskip
\noindent\textit{The $2$-form $\Omega$ and quantum estimation theory.}
The definition of the skew-symmetric tensor $\Omega$ from the dual GNS Hermitian product is a novel construction of the paper.
%%%%%%%%%%%%%%%%%%%%%%%%%%%%%%%%%%%%%%%%%%%%
Its vanishing on quasiclassical models and its relation to the expected commutator of SLD operators (mean Uhlmann curvature) for finite-dimensional faithful quantum states suggest that it is related to an intrinsic obstruction to simultaneous quantum parameter estimation.
%%%%%%%%%%%%%%%%%%%%%%%%%%%%%%%%%%%%%%%%%%%%%%%%%
A precise formulation of this connection would require relating $\Omega$ to the Holevo bound or to the incompatibility of optimal measurements in multiparameter quantum estimation \cite{H2011,S2019}.
%%%%%%%%%%%%%%%%%%%%%%%%%%%%%%%%%
Making this link explicit, and understanding how the closedness properties of $\Omega$ studied in Section~\ref{sec:closed-extensions} bear on the estimation-theoretic picture, is an interesting open problem at the interface of differential geometry and quantum statistics.
%%%%%%%%%

\section*{Acknowledgements}

This work was partially supported by the Spanish Ministry of Economy and Competitiveness through the research projects PID2024-160539NB-I00 and  PID2024-156578NB-I00, and by the Madrid Government through the project TEC-2024/COM-84 QUITEMAD-CM. 
%%%%%%%%%%%%%%%%%%%%%%%%
%%%%%%%%%%%%%%%%%%%%%%%%%%%%%%%%%
A.I. acknowledges financial support from the Spanish Ministry of Economy and Competitiveness through the Severo Ochoa Program for Centers of Excellence in RD (SEV-2015/0554).
%%%%%%%%%%%%%%%%%%%%%%%%%%%%%%%%%%%%%%%%%%%%%%%%%%
This article is based upon work from COST Action CaLISTA CA21109, supported by COST (European Cooperation in Science and Technology). 
%%%%%%%%%%%%%%%%%%%%%%%%
F.M.C., A.I., and L.G.-B. acknowledge their participation in COST Action CaLISTA CA21109.
%%%%%%%%%%

\addcontentsline{toc}{section}{References}

%\bibliographystyle{plainurl}
%{\footnotesize
%\bibliography{biblio.bib}
%}
 
{\footnotesize
\printbibliography[title=References]
}
\end{document}